\documentclass[12pt]{article}

\usepackage[margin=1in]{geometry}

\usepackage{mathrsfs, mathtools}
\usepackage{latexsym, amssymb, amsmath, amsthm}
\usepackage{bm}
\usepackage{slashed}

\usepackage[round]{natbib}

\usepackage[T1]{fontenc}
\usepackage[utf8]{inputenc}

\usepackage{setspace}
\usepackage[shortlabels]{enumitem}
\usepackage[multiple, flushmargin]{footmisc}
\usepackage[title,toc,titletoc]{appendix}
\usepackage{titlesec}
\titleformat{\section}[block]{\large\normalfont\filcenter}{\thesection.}{.5em}{}
\titleformat{\subsection}[block]{\normalfont\bfseries\filright}{\thesubsection.}{.5em}{}

\usepackage{graphicx}
\usepackage{caption,subcaption}
\usepackage{comment}

\usepackage[dvipsnames]{xcolor}

\usepackage{tikz}
\usetikzlibrary{positioning}

\usepackage{pgfplots}
\pgfplotsset{compat=1.14}
\usetikzlibrary{patterns, decorations.pathreplacing, arrows}

\tikzset{
  every node/.style = {
    text centered,
    line width = .5pt,
    anchor = center,
  },
  every label/.style = {
    fill = white, anchor = mid,
  },
  every path/.style = {
    > = stealth
  },
  point/.style args = {(#1)#2}{
    rounded corners,
    fill = white,
    minimum height = 20pt,
    minimum width = 10pt,
    label = { [name = #1] above:#2 },
  },
  point a/.style args = {(#1)#2}{
    rounded corners,
    fill = white,
    minimum height = 10pt,
    minimum width = 20pt,
    label = { [name = #1] right:#2 },
  },
  point b/.style args = {(#1)#2}{
    rounded corners,
    fill = white,
    minimum height = 10pt,
    minimum width = 75pt,
    label = { [name = #1] right:#2 },
  },
}

\usepackage{longtable,booktabs,array}
\newcolumntype{L}[1]{>{\raggedright\arraybackslash}p{#1}}

\usepackage{etoolbox}

\newcommand{\rank}{\operatorname{r}}
\newcommand{\priority}{\pi}

\newcommand{\ch}{\operatorname{Ch}}
\newcommand{\nw}{\operatorname{NW}}
\newcommand{\front}{\operatorname{F}}

 \newcommand{\norm}[1]{\left\lVert#1\right\rVert_1}
\newcommand{\abs}[1]{\left| #1 \right|}

\newtheorem{theorem}{Theorem}

\newtheorem{lemma}{Lemma}
\newtheorem{proposition}{Proposition}
\newtheorem{claim}{Claim}

\theoremstyle{definition}
\newtheorem{definition}{Definition}

\newtheorem{axiom}{Axiom}

\newtheorem{example}{Example}

\usepackage[
    colorlinks=true,
    linkcolor=blue,
    citecolor=blue,
    urlcolor=blue,
    filecolor=blue,
    hyperfootnotes=false,
    hyperindex,
    breaklinks
]{hyperref}

\usepackage{cleveref}
\usepackage{quoting}
\renewcommand{\emptyset}{\varnothing}

\usepackage{lmodern}

\allowdisplaybreaks
\begin{document}

\title{Distributional Preferences for Market Design\thanks{We are grateful to Junya Murayama for excellent research assistance, and to Battal Do\u{g}an and Koji Yokote for helpful discussions.}}
\author{Federico Echenique\thanks{Department of Economics, UC Berkeley. Contact: \href{mailto:fede@econ.berkeley.edu}{fede@econ.berkeley.edu}} \hspace*{1.25em} Teddy Mekonnen\thanks{Department of Economics, Brown University. Contact: \href{mailto:mekonnen@brown.edu}{mekonnen@brown.edu}}\hspace*{1.25em} M. Bumin Yenmez\thanks{Department of Economics, Washington University in St.~Louis, Durham University (UK), and \"Ozye\u{g}in University, Turkey. Contact: \href{mailto:bumin@wustl.edu}{bumin@wustl.edu}.}}

\maketitle

\begin{abstract}
Institutions care both about whom they select and the composition of the resulting group. We identify an upper-bound property and two exchange properties on distributional preferences. Conditional on the upper-bound property, the exchange properties are jointly necessary and sufficient for two results under every priority ranking: the greedy rule is the unique choice rule that is non-wasteful, distributionally maximal, and free of justified envy; and it is also path independent. In matching markets, deferred acceptance is the unique mechanism satisfying the three axioms, individual rationality, and strategy-proofness. Our framework accommodates intersectional identities and subsumes models based on reserves and matroids.
%100 words

\medskip
\noindent \textit{Keywords: Distributional preferences, market design, deferred acceptance, path independence.}

\medskip
\noindent \textit{JEL Classifications: D47, C78, I24, I28.}
\end{abstract}

\newpage
\section{Introduction}

Many institutions face multidimensional selection problems. They care about the quality of the individuals they select, but they also care about the composition of the selected group. A university may want to admit strong students while also forming a class that reflects socioeconomic, racial, geographic, or curricular diversity. A hospital may want highly qualified residents while also valuing a cohort with different backgrounds, specialties, and service interests. A firm may want talented employees while also caring about the diversity of skills and experiences in teams. Thus, a central question in such settings is: How should an institution select individuals according to priority or merit while also pursuing distributional objectives over the group as a whole?

In this paper, we answer this question in a canonical school-choice environment. While we focus on school choice, the same design problem arises in labor markets, public-sector hiring, and team formation. We use the school-choice framing because these multidimensional objectives are well documented and prevalent \citep{alves1987}. Schools naturally wish to admit students with the highest priorities, which may be determined by standardized test performance, proximity of residence, and sibling attendance. Yet, they may also wish---and are often legally required---to guarantee access for various identity groups based on gender, race, veteran status, socioeconomic status, and other traits.

Formally, this paper develops a framework that takes as primitive a school's preference over sets of students, which we refer to as the school's \textit{distributional preference}. Such a preference can capture the school's desire to attain a particular composition of identity groups or other relevant traits in the admitted cohort. Importantly, the framework allows us to articulate complex distributional objectives directly, including objectives involving intersectional or overlapping identity groups. Given this preference and a priority ranking over individual students, we ask when the school can choose a cohort that is distributionally desirable, respects priorities, behaves consistently across applicant pools, and can be embedded in a centralized matching mechanism.

This preference-based approach differs from much of the market-design literature on distributional objectives, which often starts from particular admissions policies or choice rules and asks when those policies are consistent with a school's distributional objectives. The focus is often on quotas, reserves, or floors-and-ceilings policies. In general, such policies specify a minimum, maximum, or reserved number of seats to be allocated to students from each identity category. Hence, these policies give a simple language for formalizing the school's distributional objective and often work well when each student belongs to a single identity category. 

When a student belongs to multiple categories, however, these policies may fail to articulate the school's distributional objective over intersectional identities. Counting such a student toward only one of her identities or allowing her to count toward multiple identities both have shortcomings, as the following example illustrates.

\begin{example}\label{example:running}
Consider a school with two seats. It evaluates the composition of its student body based on socioeconomic status (SES) and minority status. There are four applicants, $\mathcal{S}=\{s_1,s_2,s_3,s_4\}$, prioritized in descending order so that $s_1$ has the highest priority and $s_4$ has the lowest. Student $s_1$ is both a minority and a low-SES student, $s_2$ is neither a minority nor a low-SES student, $s_3$ is a low-SES student, and $s_4$ is a minority student.

The school has two reserve categories: one for low-SES students and one for minority students. A reserve category is filled by the highest-priority eligible applicant who has not already been admitted. Once all reserve categories are filled, any remaining seats are assigned according to the priority ranking.

Suppose first that an intersectional student can count toward at most one reserve category. If the SES reserve is processed first, then $s_1$ fills the SES reserve; the minority reserve then goes to $s_4$, yielding $\{s_1,s_4\}$ as the admitted pair. If instead the minority reserve is processed first, then $s_1$ fills the minority reserve; the SES reserve then goes to $s_3$, yielding $\{s_1,s_3\}$. Thus, the outcome depends on the order in which the reserve categories are processed. Yet the canonical description of reserve policies does not determine which reserve category should be processed first.

Next, suppose intersectional students count toward both reserve categories simultaneously. In this case, regardless of which reserve category is processed first, $s_1$ fulfills both reserves. The remaining seat goes to the highest-priority remaining student, $s_2$, yielding the set $\{s_1,s_2\}$.

However, this violates path independence. If $s_1$ is removed from the applicant pool, then, in the reduced pool $\{s_2,s_3,s_4\}$, the highest-priority students who qualify for the SES and minority reserves are $s_3$ and $s_4$, respectively. Thus, the school selects $\{s_3,s_4\}$ and rejects $s_2$, who qualifies for neither reserve category. Hence, $s_2$ is admitted from the larger applicant pool but rejected after another applicant is removed. This inconsistency is normatively undesirable in a single-school choice problem and can undermine implementation in centralized matching markets.
\end{example}

These shortcomings arise not simply because the school values more than one category. Rather, the reserve policy does not articulate how these categories should be jointly evaluated when a student belongs to more than one of them. More broadly, focusing directly on admission policies or choice rules leaves the school's underlying distributional objective implicit. We thus start from an explicit model of the school's preferences over cohorts and study the properties of admission policies induced by those preferences.

To that end, we first consider choice by a single school. Given an applicant pool, the school's distributional preference identifies a \emph{frontier}: the collection of non-wasteful cohorts that are undominated according to the distributional preference. We then define the \emph{distributional choice rule}, which is a greedy algorithm that considers students in priority order and admits a student whenever doing so remains compatible with eventually reaching some set on the frontier. Thus, the distributional preference determines which final cohorts lie on the frontier, while the priority ranking selects among those cohorts.

By construction, the distributional choice rule is non-wasteful, distributionally maximal, and free of justified envy. These three axioms capture the basic requirements that the school fill its available seats whenever possible, select an undominated cohort according to its distributional preference, and depart from the priority ranking only when doing so is justified by that preference. However, without imposing further structure on the distributional preference, these axioms neither uniquely identify the distributional choice rule nor ensure that the rule behaves consistently across applicant pools.

We therefore identify three structural properties of the distributional preference. The \emph{upper-bound property} imposes a ``lattice-like'' structure on the preference by requiring incomparable non-wasteful cohorts to have a distributional upper bound. As a result, every cohort on the frontier is not merely maximal but optimal among the non-wasteful cohorts. The \emph{maximizer property} then requires optimal cohorts within a fixed applicant pool to be connected by value-preserving one-for-one exchanges. Specifically, given two optimal cohorts, there must exist one student belonging to each cohort but not the other whose exchange leaves both cohorts distributionally indifferent to their corresponding original cohorts. Finally, the \emph{coherence property} imposes a corresponding exchange condition across applicant pools when students are added or removed.

These properties deliver our main single-school results. Our first set of results concerns choice from a fixed applicant pool. Given the upper-bound property, the maximizer property is necessary and sufficient for the distributional choice rule to be the unique choice rule that is non-wasteful, distributionally maximal, and free of justified envy. Moreover, any departure from the distributional choice rule must either leave some seats empty, select a distributionally dominated cohort, or select a cohort that is inferior according to the priority ranking. We also provide a revealed-priority foundation for the rule. When the priority ranking is not taken as primitive but rather is inferred from the school's choices, non-wastefulness, distributional maximality, and acyclicity of revealed priority comparisons imply that the choice rule must be the distributional choice rule for some priority ranking.

Our second set of results concerns consistency of choice across applicant pools. Given the upper-bound property, the maximizer and coherence properties are jointly necessary and sufficient for the distributional choice rule to satisfy path independence for every priority ranking. 

These results also clarify the choices in
\autoref{example:running} by making the relevant comparisons over cohorts
explicit. Suppose that $\{s_1,s_3\}$ belongs to the frontier and weakly
dominates both $\{s_1,s_4\}$ and $\{s_3,s_4\}$. Such comparisons are
natural if the school places relatively greater weight on low-SES
representation. The decisive remaining comparison is between
$\{s_1,s_3\}$ and $\{s_1,s_2\}$. If
$\{s_1,s_3\}\succ\{s_1,s_2\}$, then $\{s_1,s_2\}$ does not belong to
the frontier. The distributional choice rule therefore admits $s_1$,
rejects $s_2$, and admits $s_3$, yielding $\{s_1,s_3\}$. If instead
$\{s_1,s_3\}\sim\{s_1,s_2\}$, then both cohorts belong to the frontier.
Because $s_2$ has higher priority than $s_3$, the rule selects
$\{s_1,s_2\}$. In this way, the
preference-based approach replaces an exogenous convention about how to
count an intersectional student with explicit comparisons over the
resulting cohorts.

We next extend the analysis to centralized matching markets. We show that when every school's distributional preference satisfies the three structural properties, deferred acceptance based on the induced distributional choice rules is the unique mechanism that is individually rational, non-wasteful, distributionally maximal, strategy-proof, and free of justified envy. Thus, the same structural properties that ensure a well-behaved choice rule for a single school also support deferred-acceptance implementation in a market with multiple schools.

The framework both rationalizes familiar policies and extends beyond them. Standard and overlapping reserve rules arise from distributional preferences based on partition and transversal matroids, respectively. The framework also includes a graded formulation of floors and ceilings, represented by distributional preferences that rank violations of lower and upper bounds on the representation of each identity group. It further accommodates objectives that are not naturally expressed through reserves, including score-based preferences that assign different values to different combinations of intersectional identities, preferences that value linearly non-redundant combinations of traits, and preferences that value linkages across identity groups. Thus, the paper provides a unified preference foundation for distributional policies whose induced choice rules respect priorities, are non-wasteful, and support deferred-acceptance implementation.

\subsection{Related Literature}

The market-design literature on distributional policies has largely developed through reserves, quotas, and related constraints. Standard reserve rules were introduced and studied by \citet{hayeyi13}, floors-and-ceilings rules by \citet{ehayeyi14}, and related axiomatic foundations were provided by \citet{echyen12}. Reserve policies have been applied in Brazil \citep{aybo16}, Chile \citep{correa19}, and India \citep{aygtur17,sonmez/yenmez:22}. Further work studies regional constraints \citep{kamakoji-basic,kamakoji-concepts,kamakoji-iff}, reserve processing order \citep{dur_boston,dur16}, and alternative systems of quotas, transfers, and reserve adjustment \citep{westkamp10,fratro:2017,aytur2020,kumano/kurino:2022,dogan/erdil:2025}. \citet{imamura2025} provides a recent axiomatic analysis of the trade-off between meritocracy and diversity.

Much of this literature begins from reserve categories, eligibility requirements, distributional targets, or quota constraints. Related preference-based work studies the rationalizability of affirmative-action choice rules and the implications of multidimensional identities \citep{celebi:2024}. Our analysis takes as primitive a possibly incomplete distributional preference over cohorts and keeps it separate from the priority ranking over individual students. This formulation allows us to characterize when the two can be combined through a uniquely characterized and path-independent priority-based choice rule.

The paper most closely related to ours is \citet{hakoyeyo2022}. They represent an institution's distributional objective by a real-valued function of type distributions and study a distribution-first, merit-second choice rule. Under ordinal concavity, they show that their distribution-conscious rule maximizes merit among distributionally optimal cohorts, can be computed in polynomial time, and satisfies path independence; their analysis also develops related matroid characterizations. Our analysis instead imposes non-wastefulness, allows the primitive distributional preference to be incomplete, and interprets the ranking over individuals as a system of priority rights. Conditional on the upper-bound property, we provide necessary and sufficient conditions for the greedy priority refinement to be uniquely characterized by non-wastefulness, distributional maximality, and freedom from justified envy, and to satisfy path independence under every priority ranking. When restricted to fixed-cardinality cohorts, the ordinal concavity condition studied by \citet{hakoyeyo2022} induces distributional preferences satisfying our structural properties.

A complementary rationalization result is provided by \citet{yokoteetal2025}, who show that a choice rule is path independent if and only if it can be rationalized by an ordinally concave utility function. Their result asks whether a given choice rule admits an appropriate utility representation. We instead hold a primitive distributional preference fixed and characterize when combining it with any priority ranking through the greedy rule yields path-independent choice. More broadly, \citet{kojima-tamura-yokoo} use M$^{\natural}$-concavity to design matching mechanisms under distributional constraints. \citet{majorization2024} take a different but related approach, using a generalization of majorization to compare diversity relative to distributional targets and to axiomatize the associated choice rule.

Our treatment of overlapping identities is closely related to the literature on horizontal and overlapping reserves \citep{sonmez/yenmez:20,sonmez/yenmez:22}. These policies assign each candidate to at most one eligible reserve category while filling as many reserved positions as possible. We show that the resulting rule can be rationalized by a distributional preference based on the rank function of a transversal matroid; standard reserves arise analogously from partition matroids. More broadly, \citet{flein03} shows how greedy choice under matroidal constraints can be combined with deferred acceptance, while \citet{bonet2024} introduce \emph{explainability} for affirmative-action policies and characterize outcome-based greedy rules. Whereas these approaches begin with a reserve system, feasible-outcome family, or matroidal constraint, we characterize when such a structure is generated by a primitive distributional preference. Conditional on the upper-bound property, the maximizer property holds if and only if each frontier forms the bases of a matroid, while the coherence property identifies when these frontiers are consistent across applicant pools.

We also contribute to the characterization of deferred-acceptance mechanisms. \citet{albar94} characterize deferred acceptance when priorities are exogenously given; alternative characterizations appear in \citet{kojman10} and \citet{ehlkla16}. \citet{abdulkadiroglu_grigoryan2025} study deferred acceptance coupled with a reserves-and-quotas choice rule that minimizes priority violations subject to distributional targets and other axioms. We instead begin from primitive distributional preferences and use the meta-characterization of \citet{Dogan/Imamura/Yenmez:2025} to lift our choice-rule characterization to the mechanism level. This yields the unique deferred-acceptance mechanism satisfying the corresponding requirements of non-wastefulness, distributional maximality, and freedom from justified envy together with individual rationality and strategy-proofness. We also provide a revealed-priority foundation for the greedy distributional choice rule when the priority ranking is not taken as primitive.

Finally, the two-stage structure of the distributional choice rule is related to sequentially rationalizable choice. \citet{manzini2007sequentially} study procedures in which ordered rationales successively refine the set of alternatives. In our setting, the distributional preference first determines the frontier of distributionally maximal non-wasteful cohorts, and the priority ranking then selects among those cohorts. The two criteria represent an institutional objective and individual priority rights, respectively. Our results characterize when their ordered application yields a unique path-independent choice rule and supports implementation through deferred acceptance.

The rest of the paper is organized as follows: \autoref{sec:model} introduces the model. \autoref{sec:choice} develops the structural properties and establishes the single-institution results. \autoref{sec:two-sided} extends the analysis to centralized markets. \autoref{sec:applications} presents applications, and \autoref{sec:conclusion} concludes. All proofs are in the Appendix.

\section{Model}\label{sec:model}

\subsection{Preliminaries}
We denote the set of non-negative integers by $\mathbb{Z}_{+}$ and the set of positive integers by $\mathbb{Z}_{++}$. For any finite set $\mathcal{E}$, let $2^{\mathcal{E}}$ denote its power set and $\abs{\mathcal{E}}$ its cardinality. A \textit{strict linear order} on $\mathcal{E}$ is a transitive, asymmetric, and total binary relation.

A \textit{preorder} $\succsim$ on $2^{\mathcal{E}}$ is a reflexive and transitive binary relation. For $E, E' \subseteq \mathcal{E}$, we write $E \succ E'$ if $E \succsim E'$ and $E' \not\succsim E$, and $E \sim E'$ if $E \succsim E'$ and $E' \succsim E$. Sets $E$ and $E'$ are \textit{incomparable} under $\succsim$ if neither $E \succsim E'$ nor $E' \succsim E$.

A \emph{matroid} is a pair $(\mathcal{E},\, \mathcal{I})$, where $\mathcal{E}$ is a finite \textit{ground set} and $\mathcal{I}$ is a collection of subsets of $\mathcal{E}$, called \emph{independent sets}, satisfying
\begin{itemize}[parsep = 0pt, listparindent=1em, itemsep=0pt]
    \item[(I1)] \textit{Non-emptiness}: $\varnothing\in \mathcal{I}$. 
    \item[(I2)] \textit{Hereditary}: If $I \in \mathcal{I}$ and $I' \subseteq I$, then $I' \in \mathcal{I}$.
    \item[(I3)] \textit{Augmentation}: If $I, I' \in \mathcal{I}$ and $\abs{I} > \abs{I'}$, then there exists $e \in I \setminus I'$ such that $I' \cup \{e\} \in \mathcal{I}$.
\end{itemize}
An independent set that is maximal under set inclusion is called a \textit{basis}, i.e., $B\in \mathcal{I}$ is a basis if there exists no $I\in\mathcal{I}$ such that $I\supsetneq B$. 

\subsection{Setup}\label{subsec:setup}

We begin with a single school and a finite, nonempty set of students $\mathcal{S}$. The school has capacity $q \in \mathbb{Z}_{++}$, a \emph{priority ranking} $\pi$ over students, and a \emph{distributional preference} $\succsim$ over subsets of students. The priority ranking is a strict linear order on $\mathcal{S}$ and represents the school's ordering over individual students, often based on a student's test performance, proximity of residence, or sibling attendance. The distributional preference is a preorder on $2^{\mathcal{S}}$ and represents the school's preferences over the composition of student groups. The tuple $(\mathcal{S},\pi,\succsim,q)$ thus constitutes the primitives of the single-school model. We extend this model to two-sided matching markets with multiple schools in \autoref{sec:two-sided}.

We take the priority ranking $\pi$ and the distributional preference $\succsim$ as distinct from one another. In particular, we do not assume that $\succsim$ is derived from or related to $\pi$ in any systematic way. One useful interpretation is that each student has a collection of demographic traits, and that $\succsim$ compares subsets of students according to their aggregate trait composition. Under such a preference, two sets $S,S'\subseteq \mathcal{S}$ may satisfy $S\sim S'$ whenever they have the same trait composition, even if they contain different students. 

Because we do not require $\succsim$ to be complete (recall that it is a preorder), it need not compare all pairs of sets. For instance, a set consisting only of low-income students and a set consisting only of minority students may be incomparable if the school does not rank one identity category against the other. At the same time, $\succsim$ can compare sets involving overlapping identities, as in the motivating example in the Introduction. We illustrate the generality of the framework using the following examples.

\begin{example}\label{example:2}
Consider the four students in \autoref{example:running}. Suppose each student $s$ receives a distributional score $f(s)$ given by
\[
f(s)=\begin{cases}
    2 & \text{if } s \text{ is both minority and low-SES},\\
    1 & \text{if } s \text{ is minority but not low-SES},\\
    0.5 & \text{if } s \text{ is low-SES but not minority},\\
    0 & \text{otherwise}.
\end{cases}
\]
Thus, the school assigns different scores to different combinations of traits. Among students with exactly one protected trait, the school values minority students more than it does students with low-SES. At the same time, it also places a premium on intersectionality with $f(s_3)+f(s_4)<f(s_1)$, so that a student with both traits has a higher score than the combined value of the two traits when held by separate students.

We derive a distributional preference over $2^\mathcal{S}$ from the student-level distributional scores as follows: for any $S,S'\subseteq\mathcal{S}$, let $S\succsim S'$ if and only if $\sum_{s\in S}f(s)\geq \sum_{s\in S'}f(s)$. The induced distributional preference is complete, so any two subsets are comparable. For example, $ \{s_1,s_2\}\sim \{s_1\}\succ\{s_3,s_4\}$.
\end{example}

\begin{example}\label{example:3} 
Once again, consider the four students in \autoref{example:running}. Let $P_{SE}=\{s_1,s_3\}$ be the set of students eligible for a protected category based on socioeconomic status, and let $P_M=\{s_1,s_4\}$ be the set of students eligible for a protected category based on minority status. For any $S\subseteq\mathcal{S}$, let $r(S)$ denote the maximum number of protected categories that can be represented by assigning students in $S$ to categories for which they are eligible, with each student assigned to at most one protected category.

We then derive a distributional preference over $2^\mathcal{S}$ from $r(\cdot)$: for any $S,S'\subseteq\mathcal{S}$, let $S\succsim S'$ if and only if $\abs{S}=\abs{S'}$ and $r(S)\geq r(S')$. Such a preference reflects a school that cares about how many protected categories are represented in its student body, perhaps because it receives additional funding for each represented category, but does not compare student bodies of unequal size. In this case, $\{s_3,s_4\}\succ\{s_1,s_2\}$, but neither set is comparable to $\{s_1\}$.
\end{example}

\begin{example}\label{example:vector}
Consider the four students in \autoref{example:running}. Using minority status as the first coordinate and low-SES status as the second, define each student's type vector as
\[
\tau(s_1)=(1,1),\qquad \tau(s_2)=(0,0),\qquad \tau(s_3)=(0,1),\qquad \tau(s_4)=(1,0).
\]
Suppose the school evaluates cohorts based on the span of the cohort's type vectors, i.e., the school's preference captures a desire for cohorts with non-redundant trait profiles.

In this case, the pair $\{s_1,s_2\}$ has a span of dimension one: $s_1$ contributes the type vector $(1,1)$, while $s_2$ contributes the zero vector. By contrast, $\{s_3,s_4\}$ has a span of dimension two because the vectors $(0,1)$ and $(1,0)$ are linearly independent over $\mathbb{R}$. Hence, $\{s_3,s_4\}\succ\{s_1,s_2\}$.
\end{example}

\begin{example}\label{example:phd}
Consider a PhD program in computational social science that offers four fields: Computer Science (CS), Machine Learning (ML), Political Science (PS), and Economics (Econ). It wants to admit a cohort that connects as many of these four fields as possible. 

There are four candidates: $s_1$ with interests in CS and Econ (algorithmic game theory); $s_2$ with interests in ML and PS (natural language processing for political text); $s_3$ with interests in CS and ML (algorithmic foundations of machine learning); and $s_4$ with interests in Econ and ML (econometric machine learning). 

The cohort $\{s_1,s_2,s_3\}$ connects all four fields, while $\{s_1,s_3,s_4\}$ connects only three of them, as depicted by the graphs in \autoref{fig:graphic-matroid-example}. Hence, the PhD program's distributional preference satisfies $\{s_1,s_2,s_3\}\succ\{s_1,s_3,s_4\}$. Importantly, every student in this example is intersectional. Yet the program does not evaluate students only by the number or identity of their fields of interest; it also values the pattern of linkages represented by the cohort.

\begin{figure}[htbp]
\centering

\begin{subfigure}[t]{0.47\textwidth}
\centering
\begin{tikzpicture}[
    every node/.style={circle, draw, minimum size=8mm, inner sep=1pt},
    lab/.style={draw=none, fill=white, inner sep=1pt, font=\small}
]
    % vertices
    \node (CS) at (0,2) {CS};
    \node (ML) at (2.5,2) {ML};
    \node (Econ) at (0,0) {Econ};
    \node (PS) at (2.5,0) {PS};

    % edges
    \draw (CS) -- node[lab, midway, left] {$s_1$} (Econ);
    \draw (CS) -- node[lab, midway, above] {$s_3$} (ML);
    \draw (ML) -- node[lab, midway, right] {$s_2$} (PS);
\end{tikzpicture}
\caption{Cohort \(\{s_1,s_2,s_3\}\)}
\end{subfigure}
\hfill
\begin{subfigure}[t]{0.47\textwidth}
\centering
\begin{tikzpicture}[
    every node/.style={circle, draw, minimum size=8mm, inner sep=1pt},
    lab/.style={draw=none, fill=white, inner sep=1pt, font=\small}
]
    % vertices
    \node (CS) at (0,2) {CS};
    \node (ML) at (2.5,2) {ML};
    \node (Econ) at (0,0) {Econ};
    \node (PS) at (2.5,0) {PS};

    % edges
    \draw (CS) -- node[lab, midway, left] {$s_1$} (Econ);
    \draw (CS) -- node[lab, midway, above] {$s_3$} (ML);
    \draw (Econ) -- node[lab, midway, below right] {$s_4$} (ML);
\end{tikzpicture}
\caption{Cohort \(\{s_1,s_3,s_4\}\)}
\end{subfigure}

\caption{Each admitted student contributes an undirected edge linking the two fields they bridge. The cohort \(\{s_1,s_2,s_3\}\) connects all four fields, whereas \(\{s_1,s_3,s_4\}\) connects only three, leaving Political Science isolated.}
\label{fig:graphic-matroid-example}
\end{figure}
\end{example}\medskip

Although we take $\succsim$ as a primitive preorder on $2^{\mathcal{S}}$, this does not require a policymaker to list, or a clearinghouse to compute, preferences over all subsets. As shown in \autoref{example:2}, distributional preferences may be derived from student-level distributional scores. More broadly, we show in \autoref{sec:applications} that distributional preferences often admit compact representations through diversity indices, penalty functions, or matroid rank functions. These representations evaluate sets of students through summary statistics of their trait composition, making the preference interpretable and operational.

\subsection{Frontier Sets and Choice-Rule Desiderata}

A school can only admit a subset of the students that apply to it, and cannot do so in a way that violates its capacity constraint. Thus, the school's distributional goals are constrained by both its applicant pool and its capacity. To formalize these constraints, we introduce two concepts that will be useful in our analysis. 

For any set of students $S \subseteq \mathcal{S}$, let
\[
\nw(S) \coloneqq \left\{S'\subseteq S : \abs{S'}=\min\{q,\abs{S}\} \right\}
\]
denote the collection of \emph{non-wasteful} subsets of $S$: the subsets of $S$ that contain as many students as possible, up to the school's capacity. Let
\[
\front(S)\coloneqq \left\{S' \in \nw(S) : \nexists S'' \in \nw(S) \text{ such that } S'' \succ S' \right\}
\]
denote the \emph{frontier} of $S$. The frontier consists of all non-wasteful subsets that are undominated, or maximal, with respect to the distributional preference. For any set $S\subseteq \mathcal{S}$, $\nw(S)$ is a finite, non-empty collection of subsets of students. Since every finite nonempty preorder has at least one maximal element, $\front(S)$ is nonempty.\footnote{Note that even when $S=\varnothing$,  $\nw(S)=\front(S)=\{\varnothing\}$ are nonempty sets.}

We now turn to our main question: \textbf{How should a school allocate its limited number of seats in a way that balances its distributional goals and priority considerations?} We address this question by imposing a set of normative choice desiderata, formalized as axioms, on how a school chooses which students to admit.

Formally, the school's \emph{choice rule} is a function $\ch:2^\mathcal{S}\to 2^\mathcal{S}$ that maps each set of applicants $S\subseteq \mathcal{S}$ into a set of admitted students $\ch(S)$ such that
\begin{enumerate}[label={$(\alph*)$}, parsep=0pt, itemsep=2pt]
    \item $\ch(S)\subseteq S$, and
    \item $\abs{\ch(S)}\le q$.
\end{enumerate}

Our first axiom imposes a weak efficiency constraint: no seat may remain empty while an eligible applicant is available.

\begin{axiom}\label{ax:nonwasteful}
A choice rule $\ch$ is \textbf{non-wasteful} if, for every set $S\subseteq \mathcal{S}$,
\[
\abs{\ch(S)}=\min\{q,\abs{S}\}.
\]
\end{axiom}

Non-wastefulness precludes choice rules that may purposefully leave seats vacant, such as quota-based rules that admit students of a given trait only up to a specified cap. Nevertheless, the axiom is natural in our primary applications, such as public school choice and civil service hiring, which operate under statutory or funding constraints that require seats to be filled whenever acceptable applicants are available.\footnote{Unacceptable applicants can be removed from $\mathcal{S}$ without affecting the analysis.}

For each set $S$, non-wastefulness requires $\ch(S)$ to belong to $\nw(S)$. However, when the applicant size $\abs{S}$ exceeds the school's capacity $q$, $\nw(S)$ is not a singleton. Therefore, non-wastefulness alone does not generally determine which students should be admitted. In such cases, our second axiom requires the school to select an undominated, or maximal, set with respect to its distributional objective while holding fixed the number of admitted students.

\begin{axiom}\label{ax:diversity}
A choice rule $\ch$ is \textbf{distributionally maximal} if, for every set $S\subseteq\mathcal{S}$ and $S'\subseteq S$,
\[
\abs{S'}=\abs{\ch(S)} \implies S'\not\succ \ch(S).
\]
\end{axiom}

The first two axioms are jointly equivalent to requiring the chosen set to be in the frontier.

\begin{lemma}\label{lem:distfrontier}
A choice rule $\ch$ is non-wasteful and distributionally maximal
if and only if $\ch(S)\in\front(S)$ for every set $S\subseteq\mathcal{S}$.
\end{lemma}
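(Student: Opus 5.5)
The argument is a direct unpacking of definitions, proved one direction at a time for a fixed but arbitrary $S\subseteq\mathcal{S}$.

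\emph{Forward direction.} Suppose $\ch$ is non-wasteful and distributionally maximal, and fix $S$.
\begin{itemize}
\item By non-wastefulness, $\abs{\ch(S)}=\min\{q,\abs{S}\}$. Since $\ch(S)\subseteq S$ by definition of a choice rule, this gives $\ch(S)\in\nw(S)$.
\item Suppose, for contradiction, that some $S''\in\nw(S)$ satisfies $S''\succ\ch(S)$.
\item Then $S''\subseteq S$ and $\abs{S''}=\min\{q,\abs{S}\}=\abs{\ch(S)}$.
\item So $S''$ is a permissible comparison set in the definition of distributional maximality, and that axiom forbids $S''\succ\ch(S)$, a contradiction.
\end{itemize}
Hence no element of $\nw(S)$ strictly dominates $\ch(S)$, so $\ch(S)\in\front(S)$.

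\emph{Reverse direction.} Suppose $\ch(S)\in\front(S)$ for every $S$.
\begin{itemize}
\item Since $\front(S)\subseteq\nw(S)$, we have $\abs{\ch(S)}=\min\{q,\abs{S}\}$, which is exactly non-wastefulness.
\item For distributional maximality, take any $S'\subseteq S$ with $\abs{S'}=\abs{\ch(S)}$.
\item Then $\abs{S'}=\min\{q,\abs{S}\}$, so $S'\in\nw(S)$.
\item Because $\ch(S)$ lies in the frontier, no element of $\nw(S)$ strictly dominates it. In particular $S'\not\succ\ch(S)$.
\end{itemize}

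The only point needing care is the matching of the two comparison classes. Distributional maximality quantifies over subsets of cardinality $\abs{\ch(S)}$, whereas the frontier quantifies over $\nw(S)$. These coincide precisely when $\abs{\ch(S)}=\min\{q,\abs{S}\}$. That equality is supplied by non-wastefulness in the forward direction and by frontier membership in the reverse direction, so it is always available where it is needed.
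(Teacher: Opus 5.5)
Your proposal is correct and follows the paper's proof essentially verbatim: both directions unpack the definitions, using non-wastefulness (respectively, $\front(S)\subseteq\nw(S)$) to identify the subsets of size $\abs{\ch(S)}$ with $\nw(S)$, so that distributional maximality and frontier membership coincide.
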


The lemma establishes that the frontier is the relevant benchmark for evaluating non-wasteful and distributionally maximal choice rules. However, the school's distributional preference $\succsim$ may leave many subsets indifferent or incomparable, so $\front(S)$ may contain multiple sets. Our third axiom requires the priority ranking to guide the school's choice among such sets. It imposes a fairness requirement: the school cannot reject a higher-priority student in favor of a lower-priority admitted student whenever admitting the higher-priority student instead would weakly improve the distributional outcome.

\begin{axiom}\label{ax:envy}
A choice rule $\ch$ is \textbf{free of justified envy} if, for every $S\subseteq\mathcal{S}$, $s\in\ch(S)$, and $s'\in S\setminus\ch(S)$,
\[
\bigl(\ch(S)\setminus\{s\}\bigr)\cup\{s'\}\succsim \ch(S)
\implies s\mathrel{\priority}s'.
\]
\end{axiom}

The three axioms above are the central desiderata for the single-school choice problem: the school must fill its seats guided by its distributional objective while also respecting priorities whenever doing so is compatible with the distributional objective. Our single-school characterization in \autoref{sec:choice} focuses primarily on these three requirements. 

Our fourth, and final, axiom plays a limited role in single-school choice problems, but it is central to our market-design analysis in \autoref{sec:two-sided}. The axiom imposes a consistency condition: processing the applicant pool all at once or in batches should not change the admitted set.

\begin{axiom}
A choice rule $\ch$ is \textbf{path independent} if, for every $S,S'\subseteq\mathcal{S}$,
\[
\ch(S\cup S')=\ch\left(\ch(S)\cup S'\right).
\]
\end{axiom}

Path independence is normatively appealing in single-school choice problems \citep{echyen12}, and it underlies existence and structural results in matching markets \citep{blair88,alkan03,flein03}, making it a central axiom in two-sided market design problems.

\subsection{Distributional Choice Rule}\label{subsec:distributional}

We now introduce the greedy choice rule of interest, which we refer to as the \emph{distributional choice rule} and denote by $\ch^\priority$.

\noindent\textbf{Distributional Choice Rule:}
\begin{quoting}[leftmargin=.5cm, rightmargin=0cm]
\begin{description}[parsep=0pt, itemsep=0pt, leftmargin=0em]
    \item[Input:] A set of students $S\subseteq\mathcal{S}$. Let $k\coloneqq \abs{S}$. Label students in $S$ such that $s_1\mathrel{\priority}s_2\mathrel{\priority}\cdots \mathrel{\priority}s_k$.
        \item[Initialization:] Set $S_0\coloneqq\varnothing$.
        \item[Step $i\in\{1,\dots,k\}$:] If $S_{i-1}\cup\{s_i\}\subseteq S'$ for some $S'\in\front(S)$, set $S_i\coloneqq S_{i-1}\cup\{s_i\}$. Otherwise, set
        $S_i\coloneqq S_{i-1}$.
        \item[Output:] $\ch^{\priority}(S)\coloneqq S_k$.
    \end{description}
\end{quoting}

Let us illustrate how the distributional choice rule combines priority and distributional preferences using \autoref{example:2} and \autoref{example:3}. Suppose the school has capacity $q=2$ and the priority ranking is $s_1\mathrel{\priority}s_2 \mathrel{\priority} s_3\mathrel{\priority}s_4$.

Consider first the score-based preference in \autoref{example:2}. Among non-wasteful sets, the maximal total score is $3$, with
$\front(\mathcal S)=\big\{\{s_1,s_4\}\big\}$.
The algorithm first considers $s_1$. Since $s_1$ belongs to the frontier set $\{s_1,s_4\}$, the rule admits $s_1$. It next sequentially considers $s_2$ and $s_3$, and rejects both because neither $\{s_1,s_2\}$ nor $\{s_1,s_3\}$ belongs to the frontier. Finally, the rule considers $s_4$ and admits her since $\{s_1,s_4\}$ is in the frontier. Hence, $\ch^\priority(\mathcal S)=\{s_1,s_4\}$. 

Next, consider the distributional preference in \autoref{example:3}. In this case, the school values the number of protected categories that can be represented when assigning each admitted student to at most one category. The frontier consists of all non-wasteful sets that represent both protected categories. Therefore,
\[
\front(\mathcal S)=\big\{\{s_1,s_3\},\{s_1,s_4\},\{s_3,s_4\}\big\}.
\]
The algorithm first considers and admits $s_1$, since $s_1$ is contained in some frontier set. It then considers and rejects $s_2$, since no frontier set contains $s_2$. Next, it considers and admits $s_3$, since the set $\{s_1,s_3\}$ is in the frontier. Once these two students have been admitted, no additional student can be added while remaining contained in a two-student frontier set. Thus, $\ch^\priority(\mathcal S)=\{s_1,s_3\}$. 

The two examples have the same students, capacity, and priority ranking. Yet, each distributional preference systematically induces a different frontier and therefore a different choice. More broadly, at each step of the algorithm, $\ch^\priority$ greedily admits the highest-priority student whose admission remains compatible with eventually reaching some set on the frontier. In fact, by construction, the output of the algorithm lies in $\front(S)$ for each applicant set $S$. As a result, by \autoref{lem:distfrontier}, $\ch^\priority$ is non-wasteful and distributionally maximal.

The rule is also free of justified envy. To see why, consider two students $s$ and $s'$ with $s\in\ch^\priority(S)$ and $s'\in S\setminus\ch^\priority(S)$. Suppose, toward a contradiction, that $s'\mathrel{\priority}s$ and
\[
\bigl(\ch^\priority(S)\setminus\{s\}\bigr)\cup\{s'\}\succsim \ch^\priority(S).
\]
Since $\ch^\priority(S)\in\front(S)$, the second condition implies that $\bigl(\ch^\priority(S)\setminus\{s\}\bigr)\cup\{s'\}\in \front(S)$. Indeed, $\bigl(\ch^\priority(S)\setminus\{s\}\bigr)\cup\{s'\}$ is non-wasteful and weakly preferred to a set in the frontier, so it cannot be strictly dominated by any non-wasteful subset of $S$. Moreover, because $s'\mathrel{\priority}s$, student $s'$ is considered before $s$ in the algorithm. Let $i$ be the step at which $s'$ is considered. Since $s'$ is rejected, $S_i=S_{i-1}$. Moreover, because $s$ is considered later, $S_{i-1}\subseteq \ch^\priority(S)\setminus\{s\}$. Hence, $S_{i-1}\cup\{s'\}\subseteq \bigl(\ch^\priority(S)\setminus\{s\}\bigr)\cup\{s'\}\in\front(S)$. However, this implies that the algorithm would have admitted $s'$ at step $i$, leading to a contradiction. Therefore $\ch^\priority$ is free of justified envy.

Thus, $\ch^\priority$ is non-wasteful, distributionally maximal, and free of justified envy. However, showing that $\ch^\priority$ is the unique choice rule satisfying these three axioms or that it satisfies path independence requires imposing additional structure on the distributional preference. We identify that structure in the following section.

\section{Choice with Distributional Preferences}\label{sec:choice}

In this section, we identify two structural properties of distributional preferences under which our axioms uniquely characterize the distributional choice rule for a fixed applicant pool. We then introduce a third property that, together with the first two, extends this characterization across applicant pools and ensures path independence. 

The structural properties introduced below restrict $\succsim$ only over sets of cardinality $q$. This is without loss for our analysis. If $\abs{S}\leq q$, non-wastefulness requires $\ch(S)=S$, so neither the distributional preference nor the priority ranking affects the choice. If $\abs{S}>q$, every non-wasteful subset of $S$ has cardinality $q$, so all relevant distributional comparisons are among sets of cardinality $q$. Consequently, when specifying applications in \autoref{sec:applications}, it is sufficient to define the distributional objective over sets of size $q$.

The first property---the \textit{upper-bound property}---ensures that incomparability among sets of cardinality $q$ does not create a decision deadlock. If two such sets are incomparable, then their union must contain a non-wasteful subset that is strictly preferred to at least one of them. 

\begin{definition}\label{def:upperbound}
The distributional preference $\succsim$ satisfies the \textbf{upper-bound property} if, whenever two sets $S,S'\subseteq\mathcal{S}$ are incomparable under $\succsim$ and satisfy $\abs{S}=\abs{S'}=q$, there exists $S''\in\nw(S\cup S')$ such that $S''\succ S$ or $S''\succ S'$.
\end{definition}

By definition, the frontier consists of non-wasteful sets that are maximal: no other non-wasteful set strictly dominates them. Maximality alone, however, does not imply that a given frontier set is comparable to every other non-wasteful set. As formalized by the following proposition, the upper-bound property rules out such incomparability and guarantees that the frontier coincides with the collection of non-wasteful sets that are not merely maximal but \emph{optimal}, i.e., most-preferred with respect to the distributional preference among all non-wasteful sets.

\begin{proposition}\label{prop:comp}
The distributional preference satisfies the upper-bound property if and only if, for each set $S\subseteq \mathcal{S}$, the following hold:
\begin{enumerate}[label={$(\roman*)$}, parsep=0pt, itemsep=2pt]
    \item For every $S',S''\in\front(S)$, we have $S'\sim S''$.
    \item For every $S'\in\front(S)$ and every $S''\in\nw(S)\setminus\front(S)$, we have $S'\succ S''$.
\end{enumerate}
\end{proposition}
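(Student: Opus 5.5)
We prove the two directions separately, working directly from the definitions of $\nw$ and $\front$. Two preliminary observations will be used throughout. First, if $\abs{S}\le q$ then $\nw(S)=\{S\}=\front(S)$, so (i) and (ii) hold trivially; hence only the case $\abs{S}>q$ matters, where every member of $\nw(S)$ has cardinality $q$. Second, for any $S'\subseteq S$ with $\abs{S'}=q$ and any $T\subseteq S$, we have $\nw(S'\cup T)\subseteq \nw(S)$ whenever $\abs{S'\cup T}\ge q$. This lets the upper-bound property, which is stated for unions of two sets, be transported to an arbitrary applicant pool containing them.

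\emph{Upper-bound property implies (i) and (ii).} Fix $S$ with $\abs{S}>q$.

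For (i), take $S',S''\in\front(S)$. Suppose they were incomparable. The upper-bound property gives $T\in\nw(S'\cup S'')\subseteq\nw(S)$ with $T\succ S'$ or $T\succ S''$, contradicting membership in the frontier. So they are comparable. A strict comparison would again contradict maximality, hence $S'\sim S''$.

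For (ii), take $S'\in\front(S)$ and $S''\in\nw(S)\setminus\front(S)$. I first need a frontier set strictly above $S''$. The family $\{T\in\nw(S):T\succ S''\}$ is finite and nonempty, because $S''$ is not in the frontier. Let $T^*$ be a $\succsim$-maximal element of this family. Any $T\in\nw(S)$ with $T\succ T^*$ would also satisfy $T\succ S''$ by transitivity, and would then contradict the maximality of $T^*$ within the family. So $T^*\in\front(S)$ and $T^*\succ S''$. By (i), $S'\sim T^*$, and transitivity of the preorder (indifference followed by strict preference yields strict preference) gives $S'\succ S''$.

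This step is the only place that needs care: we must show that the undominated elements above $S''$ are genuinely in the frontier, which is where finiteness is used.

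\emph{(i) and (ii) imply the upper-bound property.} Take incomparable $S,S'$ with $\abs{S}=\abs{S'}=q$, and let $U\coloneqq S\cup S'$. By reflexivity $S\neq S'$, so $\abs{U}>q$ and both $S$ and $S'$ lie in $\nw(U)$. If both belonged to $\front(U)$, then (i) applied to $U$ would give $S\sim S'$, contradicting incomparability. Hence one of them, say $S$, lies outside $\front(U)$. By the definition of the frontier, some $S''\in\nw(U)=\nw(S\cup S')$ satisfies $S''\succ S$, which is exactly the requirement of the upper-bound property. This direction in fact uses only (i); (ii) is not needed.
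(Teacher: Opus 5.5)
Your proof is correct and follows the paper's argument. The forward direction matches it: choosing a maximal element of $\{T\in\nw(S):T\succ S''\}$ is the same finiteness argument as the paper's iterated chain $T_1,T_2,\ldots$. In the converse you are slightly more economical. The paper uses (ii) twice: first to show that neither $S$ nor $S'$ lies in $\front(S\cup S')$, and then to produce a frontier set strictly above both. You only need one of them to be off the frontier, so (i) and the definition of the frontier suffice. This fits your forward argument, which shows that (ii) already follows from (i) plus finiteness.
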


The second property---the \textit{maximizer property}---imposes a symmetric exchange condition on optimal non-wasteful sets.

\begin{definition}\label{def:maximizer}
The distributional preference $\succsim$ satisfies the \textbf{maximizer property} if, for every distinct $S,S'\subseteq\mathcal{S}$ with $\abs{S}=\abs{S'}=q$, whenever $S\succsim S''$ and $S'\succsim S''$ hold for every $S''\in\nw(S\cup S')$, there exist $s\in S\setminus S'$ and $s'\in S'\setminus S$ such that
\[
(S\setminus\{s\})\cup\{s'\}\sim S
\quad\text{and}\quad
(S'\setminus\{s'\})\cup\{s\}\sim S'.
\]
\end{definition}
The maximizer property says that optimal non-wasteful sets are not isolated from one another. If two distinct sets $S$ and $S'$ of cardinality $q$ are optimal among the non-wasteful subsets of their union, then there exists a student in $S\setminus S'$ and a student in $S'\setminus S$ who can be exchanged so that each exchanged set is distributionally indifferent to the corresponding original set. Such an exchange moves each set one step toward the other by increasing their overlap. Under the upper-bound property and \autoref{prop:comp}, the maximizer property implies that if $S,S'\in \front(S\cup S')$, then the exchanged sets remain in $\front(S\cup S')$. In this sense, the maximizer property is a discrete connectedness condition on the frontier, analogous to the connectedness of the set of optima in continuous convex optimization problems.

\begin{proposition}\label{lem:matroidbase}
Suppose that the distributional preference satisfies the upper-bound property. Then the maximizer property is satisfied if and only if, for every set $S\subseteq \mathcal{S}$, $\front(S)$ forms the bases of a matroid on the ground set $S$.
\end{proposition}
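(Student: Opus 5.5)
The plan is to use the standard characterization of matroid bases. A nonempty family $\mathcal{B}$ of equal-cardinality subsets of a finite ground set is the set of bases of a matroid if and only if it satisfies the \emph{basis exchange axiom}: for all $B_1,B_2\in\mathcal{B}$ and $x\in B_1\setminus B_2$ there is $y\in B_2\setminus B_1$ with $(B_1\setminus\{x\})\cup\{y\}\in\mathcal{B}$. The other standard fact I will use is Brualdi's symmetric exchange property of matroid bases. Throughout, \autoref{prop:comp} lets me replace ``frontier'' by ``optimal among non-wasteful sets'' and ``indifferent to a frontier set'' by ``in the frontier''.

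\emph{A preliminary observation.} Fix $S$ with $\abs{S}>q$, and let $T$ satisfy $B\subseteq T\subseteq S$ for some $B\in\front(S)$. I claim $\front(T)=\front(S)\cap 2^T$.
\begin{itemize}
\item By \autoref{prop:comp}, $B$ weakly dominates every element of $\nw(S)\supseteq\nw(T)$.
\item Any element of $\front(T)$ is therefore indifferent to $B$, so by transitivity it is undominated in $\nw(S)$.
\item Conversely, any element of $\front(S)$ contained in $T$ is clearly undominated in $\nw(T)$.
\end{itemize}
When $\abs{S}\le q$ we have $\front(S)=\{S\}$, which is the basis family of the free matroid, so only $\abs{S}>q$ needs work.

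\emph{Matroid bases imply the maximizer property.} Let $S,S'$ be distinct with $\abs{S}=\abs{S'}=q$, and suppose both weakly dominate all of $\nw(S\cup S')$.
\begin{enumerate}
\item Both are then undominated, so $S,S'\in\front(S\cup S')$.
\item Since $\front(S\cup S')$ is the basis family of a matroid, symmetric exchange applied to any $s\in S\setminus S'$ gives $s'\in S'\setminus S$ with $(S\setminus\{s\})\cup\{s'\}\in\front(S\cup S')$ and $(S'\setminus\{s'\})\cup\{s\}\in\front(S\cup S')$.
\item By \autoref{prop:comp}(i), each exchanged set is indifferent to $S$ and to $S'$ respectively, which is the maximizer property.
\end{enumerate}

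\emph{The maximizer property implies matroid bases.} I verify the basis exchange axiom for $\front(S)$ by induction on $\abs{B_1\setminus B_2}$, where $B_1,B_2\in\front(S)$ and $x\in B_1\setminus B_2$.
\begin{itemize}
\item \emph{Base case.} If $\abs{B_1\setminus B_2}=1$, take $y$ to be the unique element of $B_2\setminus B_1$. Then $(B_1\setminus\{x\})\cup\{y\}=B_2$, which lies in $\front(S)$.
\item \emph{Setting up the step.} By the preliminary observation with $T=B_1\cup B_2$, both $B_1$ and $B_2$ are optimal in $\nw(B_1\cup B_2)$. The maximizer property then yields $s\in B_1\setminus B_2$ and $s'\in B_2\setminus B_1$ whose swaps are indifferent to $B_1$ and $B_2$ respectively. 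Hence both swapped sets lie in $\front(S)$ by \autoref{prop:comp}.
\item \emph{If $s=x$,} take $y=s'$ and we are done.
\item \emph{Otherwise,} set $B_2'=(B_2\setminus\{s'\})\cup\{s\}\in\front(S)$. We still have $x\in B_1\setminus B_2'$, and $\abs{B_1\setminus B_2'}=\abs{B_1\setminus B_2}-1$.
\item \emph{Applying the induction hypothesis} to $B_1,B_2',x$ gives $y\in B_2'\setminus B_1$ with $(B_1\setminus\{x\})\cup\{y\}\in\front(S)$. Since $B_2'\setminus B_1=(B_2\setminus B_1)\setminus\{s'\}$, we get $y\in B_2\setminus B_1$, as required.
\end{itemize}
Since $\front(S)$ is nonempty and consists of equal-size sets, it is therefore the basis family of a matroid on $S$.

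The main obstacle is this last direction. The maximizer property only guarantees \emph{some} exchangeable pair, whereas the basis axiom demands an exchange for a \emph{prescribed} $x$. The device that bridges the gap is to use the exchange to move $B_2$ toward $B_1$ while keeping $x$ outside it. This requires the exchanged sets to stay in the frontier, which is exactly where the upper-bound property, through \autoref{prop:comp}, enters.
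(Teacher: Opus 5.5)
Your proof is correct. The direction from matroid bases to the maximizer property matches the paper's. The paper invokes weak simultaneous exchange where you invoke Brualdi's symmetric exchange; either suffices, and in both versions \autoref{prop:comp}(i) turns frontier membership into the required indifferences.

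The difference is in the direction from the maximizer property to matroid bases. The paper only checks that $\front(S)$ has the weak simultaneous exchange property: for distinct frontier sets, \emph{some} pair of students can be swapped in both directions. It then cites Murota's Theorem~4.3 for the equivalence of this property with the basis exchange axiom (B2).

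You instead prove (B2) directly by induction on $\abs{B_1\setminus B_2}$. When the swap supplied by the maximizer property does not involve the prescribed $x$, you use only its second half. This replaces $B_2$ by the frontier set $(B_2\setminus\{s'\})\cup\{s\}$, which still excludes $x$, is one step closer to $B_1$, and has no new elements outside $B_1$. Hence the exchange partner found by the induction hypothesis is a valid choice for the original pair. In effect, you re-derive in a few lines the special case of the cited equivalence that the paper needs. This makes the argument self-contained and shows exactly which half of the simultaneous swap is used in each case; the paper's version is shorter only because it delegates this step to the literature.

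Your preliminary observation $\front(T)=\front(S)\cap 2^T$ is correct but more than you need. It suffices that every frontier set of $S$ weakly dominates all of $\nw(S)\supseteq\nw(B_1\cup B_2)$, which is what the paper uses.
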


Together with the upper-bound property, the maximizer property ensures that the frontier forms the bases of a matroid, which in turn supports the greedy characterization of the distributional choice rule. To see that this structure does not hold without the maximizer property, consider the following example.

\begin{example}
Let $\mathcal{S}=\{s_1,s_2,s_3,s_4\}$ and $q=2$. Consider a distributional preference where $\{s_1,s_4\}\sim \{s_2,s_3\}\succ T$ for every other $T\in\nw(\mathcal{S})$. This distributional preference satisfies the upper-bound property but the maximizer property fails: $\{s_1,s_4\}$ and $\{s_2,s_3\}$ are both undominated within $\nw(\mathcal{S})$, yet since
$\{s_1,s_4\}\cap \{s_2,s_3\}=\varnothing$,
any swap of a student between them yields either $\{s_1,s_2\}$, $\{s_1,s_3\}$, $\{s_2,s_4\}$, or $\{s_3,s_4\}$, all of which are strictly dominated. Consequently, $\front(\mathcal{S})=\{\{s_1,s_4\},\{s_2,s_3\}\}$ does not form the bases of a matroid.\footnote{Specifically, the base exchange axiom fails; see \ref{sec:matroid_theory}.}
\end{example}\medskip

The maximizer property is therefore necessary and sufficient for the matroidal structure of the frontier and, as the following theorem shows, for the distributional choice rule to be the unique choice rule that is non-wasteful, distributionally maximal, and free of justified envy.

\begin{theorem}\label{thm:exogenouspriority}
Suppose that the distributional preference satisfies the upper-bound property. Then the following statements are equivalent:
\begin{enumerate}[label={$(\roman*)$}, parsep=0pt, itemsep=2pt]
    \item The distributional preference satisfies the maximizer property. 
    \item For every priority ranking $\priority$, $\ch^{\priority}$ is the unique choice rule that is non-wasteful, distributionally maximal, and free of justified envy.
\end{enumerate}
\end{theorem}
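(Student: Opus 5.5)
We prove the two directions separately. Throughout we take the upper-bound property as given. By \autoref{prop:comp}, every frontier set of $S$ is then optimal in $\nw(S)$ and all frontier sets are mutually indifferent. We already know from \autoref{subsec:distributional} that $\ch^\priority$ is non-wasteful, distributionally maximal, and free of justified envy. So direction (i)$\Rightarrow$(ii) reduces to uniqueness, and direction (ii)$\Rightarrow$(i) is proved by contraposition.

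\emph{(i)$\Rightarrow$(ii).} Fix $\priority$ and a choice rule $\ch$ satisfying the three axioms, and fix $S$ with $\abs{S}>q$; when $\abs{S}\le q$ both rules output $S$. By \autoref{lem:distfrontier}, both $\ch(S)$ and $\ch^\priority(S)$ lie in $\front(S)$. By \autoref{lem:matroidbase}, $\front(S)$ is the set of bases of a matroid on $S$. Moreover, $\ch^\priority(S)$ is exactly the output of the matroid greedy algorithm: it scans in priority order and keeps an element whenever the current set remains independent, i.e. contained in a basis. It is therefore the lexicographically $\priority$-best basis.

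Suppose $\ch(S)\neq \ch^\priority(S)$. Let $s'$ be the highest-priority student in the symmetric difference. A standard greedy argument shows that $s'\in\ch^\priority(S)\setminus\ch(S)$: if instead the first disagreement were an element of $\ch(S)$ rejected by the greedy rule, that element would extend the common prefix inside the basis $\ch(S)$, and the greedy rule would have accepted it. Now apply the strong basis exchange property to $B=\ch(S)$ and $s'\notin B$. The set $B\cup\{s'\}$ contains a unique circuit, which contains $s'$. That circuit cannot lie entirely in the common part together with $s'$, since that part lies inside the basis $\ch^\priority(S)$, which contains no circuit. Hence the circuit contains some $s\in B\setminus\ch^\priority(S)$, and $s$ has lower priority than $s'$ by the choice of $s'$. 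Then $(B\setminus\{s\})\cup\{s'\}$ is a basis, so it lies in $\front(S)$ and is $\sim B$ by \autoref{prop:comp}(i). This gives justified envy of $s'$ toward $s$, contradicting the assumption on $\ch$. So $\ch=\ch^\priority$.

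\emph{(ii)$\Rightarrow$(i).} Suppose the maximizer property fails. By \autoref{lem:matroidbase}, some $\front(T)$ is not the set of bases of a matroid. Using \autoref{prop:comp}, we can take the failure in the form of the maximizer definition: distinct $A,A'\in\front(A\cup A')$, all size $q$, such that for every $a\in A\setminus A'$ and $a'\in A'\setminus A$, at least one of the swapped sets is not in $\front(A\cup A')$. Set $S=A\cup A'$.

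The goal is to build a priority $\priority$ and a rule $\ch$, different from $\ch^\priority$, that satisfies the three axioms. We set $\ch=\ch^\priority$ everywhere except at $S$, where we choose a different frontier set that has no justified envy. At $S$ the justified-envy constraint only involves single swaps that stay in the frontier, because of \autoref{prop:comp}(i)--(ii). The natural first attempt is as follows. Rank the elements of $A\cap A'$ first. Then interleave $A\setminus A'$ and $A'\setminus A$ so that each student in $A\setminus A'$ whose single swap into $A'$ remains frontier is ranked above the corresponding student of $A'$. Rank all remaining students of $S$ last. Then check that both $A$ and the greedy output are envy-free.

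A cleaner alternative is to use the fact that the failure of the exchange axiom gives a frontier in which two bases are ``locally isolated''. One can then choose $\priority$ so that both $A$ and $A'$ are free of justified envy. This requires ordering the symmetric difference so that every frontier-preserving swap between $A$ and a set outside it moves toward lower priority for both sets simultaneously. Finding such an ordering, possibly by first replacing $A,A'$ with a pair at minimal symmetric-difference distance among non-exchangeable frontier pairs, is the main obstacle. Once both sets are envy-free, the greedy rule picks at most one of them, and defining $\ch(S)$ as the other yields a second rule satisfying the axioms, contradicting (ii).
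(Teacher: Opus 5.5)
Your argument for (i)$\Rightarrow$(ii) is correct. It differs from the paper's only mildly: the paper applies weak simultaneous exchange to $\ch(S)$ and $\ch^{\priority}(S)$ and plays envy-freeness of $\ch$ against envy-freeness of $\ch^{\priority}$. Your first-disagreement plus fundamental-circuit argument needs envy-freeness only of $\ch$.

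The gap is in (ii)$\Rightarrow$(i). You never produce the priority ranking and the competing envy-free frontier set, and you flag this yourself as ``the main obstacle''; it is the entire content of this direction. Your ``natural first attempt'' is not well defined, since there is no canonical ``corresponding student'' and nothing ensures the precedence constraints are acyclic. The target of your ``cleaner alternative'' can be infeasible for the pair that witnesses failure of the maximizer property. Take $q=3$, students $s_1,\dots,s_6$, and write $ijk$ for $\{s_i,s_j,s_k\}$. Let $123,456,234,135,126,256,346,145$ be mutually indifferent and strictly above all other $3$-sets, which are themselves mutually indifferent. Then $A=123$ and $A'=456$ admit no simultaneous exchange. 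Yet envy-freeness of $A$ at $T=\{s_1,\dots,s_6\}$ forces $s_1\mathrel{\priority}s_4$, $s_2\mathrel{\priority}s_5$, $s_3\mathrel{\priority}s_6$. Envy-freeness of $A'$ forces $s_4\mathrel{\priority}s_2$, $s_5\mathrel{\priority}s_3$, $s_6\mathrel{\priority}s_1$, which closes a cycle. Moreover, every element of $A$ has a one-sided frontier-preserving exchange into $A'$ and vice versa. So this pair contains no student around whom a ranking can be built, and you must move to a different pair.

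The missing idea is to anchor on a failure of the one-sided axiom (B2) rather than of simultaneous exchange. Take $S,S'\in\front(T)$ and $i\in S\setminus S'$ with $(S\setminus\{i\})\cup\{s'\}\notin\front(T)$ for every $s'\in S'\setminus S$, and choose $S'$ to minimize $\abs{S\setminus S'}$; then $\abs{S\setminus S'}\ge 2$. Minimality yields $S'\succ(S'\setminus\{k\})\cup\{j\}$ for all $j\in(S\setminus S')\setminus\{i\}$ and $k\in S'\setminus S$. Otherwise that set would lie in $\front(T)$ by \autoref{prop:comp} and would be a closer witness for the same $S$ and $i$. Now rank $S\cap S'$ first, then $(S\setminus S')\setminus\{i\}$, then $S'\setminus S$, then $i$, then everyone else. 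The greedy rule reaches $S\setminus\{i\}$, rejects all of $S'\setminus S$, and then admits $i$, so $\ch^{\priority}(T)=S$. Meanwhile $S'$ is free of justified envy at $T$: a rejected $i$ or outsider is ranked below all of $S'$, and a rejected $j\in(S\setminus S')\setminus\{i\}$ is ranked below $S\cap S'$ while its swaps with $S'\setminus S$ are strictly worse. Setting $\ch(T)=S'$ and $\ch=\ch^{\priority}$ elsewhere gives the second rule. Only one of the two sets needs to be shown envy-free, since the greedy output is envy-free automatically.
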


Hence, while the distributional choice rule satisfies our first three axioms without any additional assumptions on the distributional preference, it is the unique such choice rule under the upper-bound and maximizer properties. In other words, any rule that differs from $\ch^\priority$ must violate at least one of the three axioms. Under the same two properties, our next result strengthens this uniqueness characterization to a dominance result: At any applicant set where a rule differs from $\ch^\priority$, it must either leave seats unfilled, select a strictly worse distributional outcome, or choose a set of students that are inferior according to the priority ranking. To formalize this last notion, we extend the priority ranking over individual students to a partial order over sets of students.

\begin{definition}
Let $S,S'\subseteq \mathcal S$. We say that $S$ \textbf{priority dominates} $S'$ if $\abs{S} \geq \abs{S'}$ and, for every $k\leq \abs{S'}$, the $k$-th highest-priority student in $S$ is either identical to, or has higher priority than, the $k$-th highest-priority student in $S'$.
\end{definition}
Priority domination extends the school’s priority ranking from individual students to sets of students. A set $S$ priority dominates $S'$ if it contains at least as many students as $S'$ and if, rank by rank, it contains weakly higher-priority students.

\begin{theorem}\label{thm:meritgeneral}
Suppose that the distributional preference satisfies the upper-bound property and the maximizer property. Let $\ch$ be a choice rule such that $\ch(S)\neq\ch^{\priority}(S)$ for some $S\subseteq\mathcal{S}$. Then at least one of the following holds:
\begin{enumerate}[label={$(\roman*)$}, parsep=0pt, itemsep=2pt]
   \item \textbf{Wastefulness:} $\abs{\ch^{\priority}(S)}>\abs{\ch(S)}$.
    \item \textbf{Distributional Inferiority:} $\ch^{\priority}(S)\succ \ch(S)$.
    \item \textbf{Priority Inferiority:} $\ch^{\priority}(S)$ priority dominates $\ch(S)$.
    \end{enumerate}
\end{theorem}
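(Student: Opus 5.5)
Fix $S$ with $\ch(S)\neq\ch^{\priority}(S)$ and suppose (i) and (ii) both fail. We show that (iii) must then hold.

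\emph{Reduction to the frontier.} Since $\ch^{\priority}(S)\in\front(S)$, we have $\abs{\ch^{\priority}(S)}=\min\{q,\abs{S}\}\ge\abs{\ch(S)}$. Failure of (i) therefore gives $\ch(S)\in\nw(S)$. If $\abs{S}\le q$, then $\nw(S)=\{S\}$ and $\ch(S)=\ch^{\priority}(S)$, which is excluded. So $\abs{S}>q$ and both sets have cardinality $q$.

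By \autoref{prop:comp}(ii), any $T\in\nw(S)\setminus\front(S)$ satisfies $\ch^{\priority}(S)\succ T$. Failure of (ii) therefore forces $\ch(S)\in\front(S)$. By \autoref{lem:matroidbase}, $\front(S)$ is the set of bases of a matroid $M$ on $S$. Moreover, $\ch^{\priority}(S)$ is precisely the output of the matroid greedy algorithm run in $\priority$ order. Indeed, at each step the rule adds $s_i$ exactly when $S_{i-1}\cup\{s_i\}$ is contained in a basis, that is, when it is independent in $M$.

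\emph{Priority domination via the greedy algorithm.} It remains to prove a standard matroid fact: the greedy basis $G=\ch^{\priority}(S)$ is \emph{Gale-optimal}. That is, for every basis $B$ and every $k\le q$, the $k$-th highest-priority element of $G$ is weakly higher in priority than the $k$-th highest-priority element of $B$.

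Write $G=\{g_1,\dots,g_q\}$ and $B=\{b_1,\dots,b_q\}$, each listed in $\priority$ order. Suppose, for contradiction, that $b_k\mathrel{\priority}g_k$ for some $k$. Then $I=\{g_1,\dots,g_{k-1}\}$ and $J=\{b_1,\dots,b_k\}$ are both independent, with $\abs{J}>\abs{I}$. By augmentation (I3), some $b_j\in J\setminus I$ makes $I\cup\{b_j\}$ independent. This $b_j$ satisfies $b_j\mathrel{\priority}g_k$, because $b_j$ is weakly above $b_k$ in priority.

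Consider the step of the algorithm at which $b_j$ is processed. At that step, the accepted set $S_{i-1}$ consists only of elements of $G$ that have higher priority than $b_j$, so $S_{i-1}\subseteq I$. By the hereditary property (I2), $S_{i-1}\cup\{b_j\}\subseteq I\cup\{b_j\}$ is independent. Hence the algorithm would have accepted $b_j$, so $b_j\in G$. Since $b_j\mathrel{\priority}g_k$, this forces $b_j\in\{g_1,\dots,g_{k-1}\}=I$, contradicting $b_j\notin I$.

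Applying this fact with $B=\ch(S)$ shows that $\ch^{\priority}(S)$ priority dominates $\ch(S)$, since the two sets have equal cardinality. This is conclusion (iii).

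\emph{Expected main obstacle.} The main step is the Gale-optimality argument. The delicate point is confirming that $S_{i-1}\subseteq I$ at the moment $b_j$ is processed, so that (I2) applies. Everything else is bookkeeping with \autoref{lem:distfrontier}, \autoref{prop:comp}, and \autoref{lem:matroidbase}.
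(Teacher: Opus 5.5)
Your proposal is correct and follows the paper's proof essentially step for step: the same reduction to $\abs{S}>q$ and $\ch(S)\in\front(S)$ via \autoref{prop:comp}, the same identification of $\ch^{\priority}(S)$ as the greedy basis of the matroid from \autoref{lem:matroidbase}, and the same Gale-optimality conclusion. The only difference is that you prove the Gale step directly from (I2)--(I3), including the valid check that $S_{i-1}\subseteq\{g_1,\dots,g_{k-1}\}$, whereas the paper simply cites it as \autoref{lem:Gale}.
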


The results so far concern choice from a fixed applicant pool. The upper-bound and maximizer properties ensure that, within each applicant pool, the frontier has the exchange structure needed for the first three axioms to uniquely select the greedy distributional choice rule. Path independence---our last choice axiom---requires more: processing an applicant pool all at once or in batches must lead to the same admitted set. Equivalently, the choice rule must be consistent across varying applicant pools. This consistency requirement is not only desirable for a single-school choice problem, it is also foundational for deferred-acceptance implementation in centralized matching markets. 

For a choice rule to be path independent, the frontiers associated with different applicant pools must also be ``coherent,'' something that the upper-bound and maximizer properties do not ensure by themselves. The next property imposes exactly this cross-pool coherence.

\begin{definition}\label{def:exchange-coherence}
The distributional preference $\succsim$ satisfies the \textbf{coherence property} if, for every distinct $S,S'\subseteq\mathcal{S}$ with $\abs{S}=\abs{S'}=q$ and every $\hat{s}\in S\setminus S'$ satisfying $(S\setminus S')\setminus\{\hat{s}\}\neq\varnothing$, whenever
\begin{enumerate}[label={$(\roman*)$}, parsep=0pt, itemsep=2pt]
    \item  $S\succ S''$ for every $S''\in\nw(S\cup S')$ with $\hat s\notin S''$,
    \item $S\succsim S''$ for every $S''\in\nw(S\cup S')$ with $\hat s\in S''$, and
    \item $S'\succsim S''$ for every $S''\in\nw\big((S\cup S')\setminus\{\hat{s}\}\big)$,
\end{enumerate}
then there exist $s\in (S\setminus S')\setminus\{\hat{s}\}$ and
$s'\in S'\setminus S$ such that
\[
(S\setminus\{s\})\cup\{s'\}\sim S
\quad\text{and}\quad
(S'\setminus\{s'\})\cup\{s\}\sim S'.
\]
\end{definition}
The economic content of coherence can be understood as a comparative-statics property of the frontier. Let $T\coloneqq S\cup S'$ and suppose $\hat{s}$ satisfies the antecedent of the coherence property. Student $\hat s$ is an \textbf{essential} student in $T$: given an applicant set $T$, the school cannot attain a distributionally maximal set without admitting student $\hat s$. Equivalently, every set in $\front(T)$ contains $\hat s$. Once $\hat{s}$ is removed, every set in the frontier of $T$ becomes infeasible. Thus, the frontier of the reduced pool $T\setminus\{\hat{s}\}$ must be disjoint from the frontier of $T$.

Coherence requires this shift in the frontier to be ``continuous'' in the exchange structure of sets. In particular, suppose $S$ is optimal among the non-wasteful sets of $T$, and $S'$ is optimal among the non-wasteful sets of $T\setminus\{\hat{s}\}$. Then coherence requires some student in $(S\setminus S')\setminus\{\hat{s}\}$ to be exchangeable with some student in $S'\setminus S$ while leaving both resulting sets distributionally indifferent to their corresponding original sets. Thus, while removing an essential student shifts the frontier, it cannot make the new frontier isolated (in the exchange sense) from the original one.

\begin{theorem}\label{thm:genpi}
Suppose that the distributional preference satisfies the upper-bound property. Then the following statements are equivalent:
\begin{enumerate}[label={$(\roman*)$}, parsep=0pt, itemsep=2pt]
    \item The distributional preference satisfies the maximizer and coherence properties.
    \item For every priority ranking $\priority$, $\ch^\priority$ satisfies path independence.
\end{enumerate}
\end{theorem}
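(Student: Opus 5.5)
We prove both directions, working throughout under the upper-bound property so that \autoref{prop:comp} applies: frontier sets are pairwise indifferent and strictly preferred to every other non-wasteful set. By \autoref{lem:matroidbase}, the maximizer property is equivalent to each $\front(T)$ being the bases of a matroid $M_T$ on $T$. In that case $\ch^\priority(T)$ is the greedy (priority-lexicographically maximal) basis of $M_T$. Path independence is equivalent to substitutability plus size-monotonicity (Aizerman–Malishevski). Since the rule is non-wasteful with a fixed quota, size-monotonicity is automatic, so we reduce path independence to the irrelevance-of-rejected-students (IRA) property plus substitutability. Equivalently, we show that $\ch^\priority(T\setminus\{x\})=\ch^\priority(T)$ for $x\notin\ch^\priority(T)$, and that for $x\in\ch^\priority(T)$ we have $\ch^\priority(T)\setminus\{x\}\subseteq\ch^\priority(T\setminus\{x\})$. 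Single-removal statements suffice by induction.

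\emph{(i)$\Rightarrow$(ii).} The key structural claim is that, under maximizer and coherence, the family of frontiers is consistent across pools in a matroid sense.

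\textbf{Case 1: $x$ is not essential in $T$.} Some frontier set of $T$ avoids $x$. Then, by \autoref{prop:comp}, $\front(T\setminus\{x\})=\{B\in\front(T):x\notin B\}$, which is the deletion $M_T\setminus x$. The greedy basis of a deletion coincides with the greedy basis of $M_T$ whenever $x$ is rejected, and when $x$ is accepted the greedy basis of the deletion contains all other previously accepted students. Both facts are standard properties of the greedy algorithm on a matroid and its deletion; the second follows from the strong exchange property.

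\textbf{Case 2: $x$ is essential in $T$.} Every set in $\front(T)$ contains $x$, so $x$ is accepted, and the frontier of $T\setminus\{x\}$ is disjoint from $\front(T)$. Here coherence is used. Take $S\in\front(T)$ and $S'\in\front(T\setminus\{x\})$ with union $T$ (restricting to the pool $S\cup S'$, which preserves optimality by \autoref{prop:comp}). Coherence then gives exchanges between $S\setminus\{x\}$ and $S'$ that preserve indifference. Iterating these exchanges, we show that $\front(T\setminus\{x\})$ consists exactly of the sets $(B\setminus\{x\})\cup\{y\}$ with $B\in\front(T)$ and $y$ ranging over those elements whose addition attains the optimum. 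That is, the new frontier is the set of bases of $(M_T/x)$ truncated and extended by one element. From this description, the greedy basis of $T\setminus\{x\}$ contains $\ch^\priority(T)\setminus\{x\}$: greedy on $T\setminus\{x\}$ accepts everything greedy on $T$ accepted, since each such prefix plus $x$ lies in some basis of $M_T$, and by the exchange description that prefix lies in some basis of the new frontier. I expect this structural description of the shifted frontier to be the main obstacle. My plan is to prove it by induction on $|S\setminus S'|$, applying coherence repeatedly and, after each swap, checking via \autoref{prop:comp} that the three antecedent conditions of coherence are inherited by the new pair.

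\emph{(ii)$\Rightarrow$(i).} We argue by contrapositive.

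\textbf{Maximizer fails.} By \autoref{thm:exogenouspriority}, or directly via \autoref{lem:matroidbase}, some $\front(T)$ is not the set of bases of a matroid, so base exchange fails for some $S,S'\in\front(T)$. Restrict to the pool $S\cup S'$ and choose a priority ranking that places $S\setminus S'$ at the top in a suitable order. Comparing the choice from the whole pool with the choice after first removing a student rejected by greedy then exhibits a violation of IRA. This should mirror the proof of \autoref{thm:exogenouspriority}.

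\textbf{Maximizer holds but coherence fails.} Take $S,S',\hat s$ violating coherence, with $T=S\cup S'$. Rank $\hat s$ first, then $S'\setminus S$, then $S\cap S'$, then $(S\setminus S')\setminus\{\hat s\}$. Since $S$ is the unique-type optimum containing $\hat s$, greedy on $T$ picks $\hat s$ and then some set. Greedy on $T\setminus\{\hat s\}$ picks $S'$ (as it is optimal there and ranked first). The failure of coherence means that no element of $(S\setminus S')\setminus\{\hat s\}$ can be exchanged with an element of $S'\setminus S$ while preserving indifference. We show this forces $\ch^\priority(T)\setminus\{\hat s\}$ to contain an element of $(S\setminus S')\setminus\{\hat s\}$ that is not in $S'=\ch^\priority(T\setminus\{\hat s\})$, violating substitutability and hence path independence. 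The argument uses the nonemptiness of $(S\setminus S')\setminus\{\hat s\}$ together with the basis exchange inside $M_T$ to track which elements greedy selects.
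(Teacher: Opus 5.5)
Your decomposition into IRA plus substitutability, split by whether the removed student is essential, matches the paper's, and your non-essential case is fine. Three steps fail, however: the essential case of (i)$\Rightarrow$(ii), and both halves of (ii)$\Rightarrow$(i).

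\textbf{Essential case of (i)$\Rightarrow$(ii).} You argue that greedy on $T\setminus\{x\}$ accepts everything greedy on $T$ accepted, because each such prefix lies in some set of the new frontier. This only shows that $\ch^\priority(T)\setminus\{x\}$ is independent in the frontier matroid of $T\setminus\{x\}$. Greedy on $T\setminus\{x\}$ may already have admitted other students that block it, and your structural description cannot repair this.

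Take $q=2$, $T=\{x,\ell,e,f\}$, and rank pairs by $\{x,e\}\sim\{x,f\}\succ\{\ell,f\}\sim\{e,f\}\succ\{\ell,e\}\sim\{x,\ell\}$. This ranking is complete, and every subpool's frontier is a matroid base family, so the upper-bound and maximizer properties hold. The student $x$ is essential in $T$. Every set in $\front(T\setminus\{x\})=\{\{\ell,f\},\{e,f\}\}$ contains some $B\setminus\{x\}$ with $B\in\front(T)$, and every such $B\setminus\{x\}$ extends to a set in $\front(T\setminus\{x\})$. Yet for $\ell\mathrel{\priority}e\mathrel{\priority}f\mathrel{\priority}x$ you get $\ch^\priority(T)=\{x,e\}$ but $\ch^\priority(T\setminus\{x\})=\{\ell,f\}$. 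Coherence fails in this example, so it does not contradict the theorem; it shows that your description does not imply nested greedy choices.

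What the step really needs is that the contraction of $\front(T)$ by $x$ be a quotient of the matroid $\front(T\setminus\{x\})$, which is strictly stronger. The paper avoids any global description. It applies coherence directly to $A=\ch^\priority(T)$ and $B=\ch^\priority(T\setminus\{x\})$ on the pool $A\cup B$; the antecedents follow from \autoref{prop:comp}, since $B\notin\front(T)$. This yields $t\in(A\setminus B)\setminus\{x\}$ and $t'\in B\setminus A$ whose swaps stay on the respective frontiers. Gale's lemma in the two frontier matroids then gives both $t\mathrel{\priority}t'$ and $t'\mathrel{\priority}t$, a contradiction.

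\textbf{(ii)$\Rightarrow$(i), maximizer part.} A maximizer failure can never appear as an IRA violation. Under the upper-bound property alone, if $y\notin\ch^\priority(T)$ then $\front(T\setminus\{y\})=\{B\in\front(T):y\notin B\}$. Since the running set of greedy on $T$ stays inside $\ch^\priority(T)$, which avoids $y$, the runs on $T$ and $T\setminus\{y\}$ make identical decisions. Instead, remove the accepted student $i$ and use substitutability. From $\ch^\priority(S\cup S')=S$, substitutability forces $\ch^\priority((S\cup S')\setminus\{i\})=(S\setminus\{i\})\cup\{s^*\}$ with $s^*\in S'\setminus S$. That set is indifferent to $S'$, hence lies in $\front(T)$, contradicting the failed exchange.

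\textbf{(ii)$\Rightarrow$(i), coherence part.} Your block ranking need not produce a violation. In the example above, $(S,S',\hat s)=(\{x,e\},\{\ell,f\},x)$ satisfies all three antecedents and admits no valid exchange. Yet under $x\mathrel{\priority}\ell\mathrel{\priority}f\mathrel{\priority}e$ you get $\ch^\priority(T)=\{x,f\}$, so $\ch^\priority(T)\setminus\{x\}\subseteq S'$, and one checks that $\ch^\priority$ is path independent. A violation requires $\ell\mathrel{\priority}e\mathrel{\priority}f$, which interleaves $S'\setminus S$ and $S\setminus S'$.

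The paper does not read a ranking off the triple. Let $U=S\cup S'$ and $W=U\setminus\{\hat s\}$, let $M^c$ be the contraction of the frontier matroid of $U$ by $\hat s$, and let $M^d$ be the frontier matroid of $W$. Path independence for every ranking $\rho$ of $W$, with $\hat s$ placed on top, gives $G^\rho_{M^c}\subseteq G^\rho_{M^d}$. Hence $M^c$ is a rank-one quotient of $M^d$ by \autoref{lem:nested-greedy-quotient}, and \autoref{lem:nested-greedy-exchange} then delivers exactly the coherence exchange.
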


Thus, given the upper-bound property, the maximizer and coherence properties are both necessary and sufficient for path independence of the distributional choice rule. The maximizer property gives each frontier the matroid structure needed for greedy choice within a fixed applicant pool. Coherence imposes the additional frontier consistency condition needed across different applicant pools. If either property fails, one can construct a priority ranking for which the greedy distributional rule selects a student from the larger pool but rejects that same student after another applicant is removed, violating path independence. For instance, the failure of path independence in \autoref{example:running} can be attributed to a failure of the maximizer property.\footnote{In \autoref{example:running}, both $\{s_1, s_2\}$ and $\{s_3,s_4\}$ are in the frontier. However, for any $s\in\{s_1,s_2\}$ and $s'\in\{s_3,s_4\}$, at least one of the two exchanged sets must be either $\{s_2, s_3\}$ or $\{s_2, s_4\}$. Both sets represent only one protected category and are therefore strictly dominated by the frontier sets $\{s_1,s_2\}$ and $\{s_3,s_4\}$. As a result, the maximizer property does not hold. }

\subsection{Revealed Priorities}\label{subsec:revealed_priorities}

We now consider a setting in which the priority ranking is not taken as primitive. Instead, priority comparisons are inferred from the school's choice behavior. Suppose a selected student $s$ is chosen over a rejected student $s'$ even though replacing $s$ with $s'$ would weakly improve the distributional outcome. Such a choice reveals $s$ to have higher priority than $s'$. The next axiom requires these revealed priority comparisons to be acyclic.

\begin{axiom}
A choice rule $\ch$ satisfies the \textbf{revealed priority axiom} if there are no sequences of applicant sets $\{S_k\}_{k=1}^K$ and students $\{s_k\}_{k=1}^K$ such that, for every $k\in\{1,\ldots,K\}$,
\begin{enumerate}[label={$(\roman*)$}, parsep=0pt, itemsep=2pt]
    \item $s_k\in \ch(S_k)$,
    \item $s_{k+1}\in S_k\setminus \ch(S_k)$, and 
    \item $\big(\ch(S_k)\setminus\{s_k\}\big)\cup\{s_{k+1}\}\succsim \ch(S_k)$,
\end{enumerate}
where $s_{K+1}=s_1$.
\end{axiom}

The condition $s_{K+1}=s_1$ closes the sequence into a cycle. Thus, the axiom rules out choice behavior that reveals $s_1$ above $s_2$, $s_2$ above $s_3$, and so on, while also revealing $s_K$ above $s_1$. In this sense, rather than starting from an exogenously given priority ranking, we infer one from the school's choice behavior. The axiom then requires these inferred priority comparisons to be acyclic, which is precisely what is needed for them to be extended to a strict linear priority ranking.

\begin{theorem}\label{thm:revealed_priority}
Suppose that the distributional preference satisfies the upper-bound property and the maximizer property. Then a choice rule $\ch$ is non-wasteful, distributionally maximal, and satisfies the revealed priority axiom if and only if $\ch=\ch^\priority$ for some priority ranking $\priority$.
\end{theorem}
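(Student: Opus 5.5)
The proof has two directions; the substance lies in the ``only if'' direction.

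\emph{If.} Suppose $\ch=\ch^\priority$. By construction $\ch^\priority(S)\in\front(S)$ for every $S$, so \autoref{lem:distfrontier} makes it non-wasteful and distributionally maximal. For the revealed priority axiom, recall the argument in \autoref{subsec:distributional} showing that $\ch^\priority$ is free of justified envy. It shows that whenever $s\in\ch^\priority(S)$, $s'\in S\setminus\ch^\priority(S)$ and $(\ch^\priority(S)\setminus\{s\})\cup\{s'\}\succsim\ch^\priority(S)$, we must have $s\mathrel{\priority}s'$. A revealed cycle $s_1,\dots,s_K,s_1$ would therefore give $s_1\mathrel{\priority}s_2\mathrel{\priority}\cdots\mathrel{\priority}s_K\mathrel{\priority}s_1$. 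This contradicts the fact that $\priority$ is a strict linear order.

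\emph{Only if.} Let $\ch$ be non-wasteful, distributionally maximal, and satisfy the revealed priority axiom. Define the revealed relation $s\,R\,s'$ to hold if there is some $S$ with $s\in\ch(S)$, $s'\in S\setminus\ch(S)$ and $(\ch(S)\setminus\{s\})\cup\{s'\}\succsim\ch(S)$. The axiom says exactly that $R$ has no cycles, of any length, including length one, which is vacuous since $s'\notin\ch(S)$. So the transitive closure of $R$ is a strict partial order. By Szpilrajn's theorem it extends to a strict linear order $\priority$ on $\mathcal{S}$. By construction, $\ch$ is free of justified envy with respect to $\priority$: if the premise of \autoref{ax:envy} holds for $s,s'$, then $s\,R\,s'$, hence $s\mathrel{\priority}s'$. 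Now $\ch$ is non-wasteful, distributionally maximal, and free of justified envy for $\priority$. Under the upper-bound and maximizer properties, the implication (i)$\Rightarrow$(ii) of \autoref{thm:exogenouspriority} yields $\ch=\ch^\priority$.

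The only real obstacle is making sure that a single linear order $\priority$ works simultaneously for all applicant sets. This is why $R$ is built by pooling revealed comparisons across all $S$ before extending it, rather than building an order set by set. The uniqueness part of \autoref{thm:exogenouspriority} then does the remaining work, and no further use of the structural properties is needed.
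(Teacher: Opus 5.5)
Your proof is correct and follows the paper's argument. You use the same revealed relation, extend it to a linear order via Szpilrajn, and equality is then forced by the simultaneous-exchange argument; the paper runs that argument inline (using Gale's lemma for $\ch^\priority$), whereas you get it by citing the uniqueness direction of \autoref{thm:exogenouspriority}, and your ``if'' direction reuses the justified-envy argument of \autoref{subsec:distributional} instead of Gale's lemma, which is slightly more economical since it needs neither structural property.
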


This result complements \autoref{thm:exogenouspriority} by showing that the distributional choice rule does not depend on taking priorities as exogenously fixed. When a priority ranking is given, non-wastefulness, distributional maximality, and freedom from justified envy uniquely identify $\ch^\priority$. When priorities are instead inferred from behavior, non-wastefulness, distributional maximality, and acyclicity of revealed priorities again force the choice rule to be a distributional choice rule for some priority ranking.

Together, Theorems \ref{thm:exogenouspriority}--\ref{thm:revealed_priority} characterize the distributional choice rule at the single-school level. Extending this characterization to centralized markets introduces a new challenge because multiple schools' choice rules, each based on its own priority ranking and distributional preference, must interact through a single matching mechanism. \autoref{sec:two-sided} shows how these structural properties support implementation in centralized matching markets through the deferred-acceptance mechanism.

%===========================

\section{Market Design with Distributional Preferences}\label{sec:two-sided}

This section extends our single-school analysis to centralized matching markets. The market consists of a finite set of schools $\mathcal{C}$ and a finite set of students $\mathcal{S}$. Each school $c\in\mathcal{C}$ has a capacity $q_c\in\mathbb{Z}_{++}$, a priority ranking $\priority_c$ over $\mathcal{S}$, and a distributional preference $\succsim_c$ over subsets of students. Each student $s\in\mathcal{S}$ has a preference relation $P_s$, which is a strict linear order over $\mathcal{C}\cup\{\varnothing\}$, where $\varnothing$ denotes the student's outside option. The corresponding weak order is denoted by $R_s$. We denote the set of all student preference relations by $\mathcal{P}$ and a preference profile by $P\coloneqq(P_s)_{s\in\mathcal{S}}\in\mathcal{P}^\mathcal{S}$. Thus, a matching market is summarized by a tuple $\left(\mathcal{S},\mathcal{C},P,(\succsim_c)_{c\in\mathcal{C}},(\priority_c)_{c\in\mathcal{C}},(q_c)_{c\in\mathcal{C}}\right)$. 

Throughout, school capacities, priority rankings, and distributional preferences are common knowledge. On the other hand, student preferences are private information, which motivates the incentive requirements we impose below.

A matching is a function $\mu:\mathcal{S}\to\mathcal{C}\cup\{\varnothing\}$, where  $\mu(s)\in \mathcal{C}\cup\{\varnothing\}$ represents the school (or outside option) that is matched to $s\in\mathcal{S}$, and $\mu^{-1}(c)\subseteq \mathcal{S}$ represents the subset of students that are matched to school $c\in\mathcal{C}$. We impose that a matching cannot exceed any school's capacity, i.e., $\abs{\mu^{-1}(c)}\leq q_c$ for every $c\in\mathcal{C}$. Let $\mathcal{M}$ denote the set of all matchings.

We say a school $c\in\mathcal{C}$ is acceptable for student $s$ if $c \mathrel{ P_s} \varnothing$.

\begin{axiom}\label{ax:matching_ir}
    A matching $\mu$ is \textbf{individually rational} if every student is matched either to an acceptable school or to the outside option; that is, $\mu(s)\mathrel{ R_s }\varnothing$ for every $s\in\mathcal{S}$.
\end{axiom}

We now adapt the single-school axioms to the matching environment.

\begin{axiom}\label{ax:matching_nonwasteful}
A matching $\mu$ is \textbf{non-wasteful} if, for every $c\in\mathcal{C}$ and $s\in\mathcal{S}$,
\[
c\mathrel{ P_s }\mu(s)\implies \abs{\mu^{-1}(c)}=q_c.
\]
\end{axiom}

Non-wastefulness requires that no school leave a seat empty while a student who wants that seat remains assigned elsewhere or unmatched.

\begin{axiom}\label{ax:matching_diversity}
A matching $\mu$ is \textbf{distributionally maximal} if, for every $c\in\mathcal{C}$ and every $S\subseteq \{s\in\mathcal{S}:c\mathrel{ R_s }\mu(s)\}$, we have
\[
\abs{S}=\abs{\mu^{-1}(c)}\implies S\not\succ_c \mu^{-1}(c).
\]
\end{axiom}

The set $\{s\in\mathcal{S}:c\mathrel{ R_s }\mu(s)\}$ consists of the students who weakly prefer school $c$ to their assignment under $\mu$. In short, this is the set of students who are willing to attend school $c$ under the matching $\mu$. Distributional maximality requires each school to be matched to a set of students that is distributionally undominated among the students who are willing to attend it, holding fixed the number of students assigned to the school.

\begin{axiom}\label{ax:matching_envy}
A matching $\mu$ is \textbf{free of justified envy} if, for every $c\in\mathcal{C}$ and distinct students $s,s'\in\mathcal{S}$ with $\mu(s)=c$ and $c\mathrel{ P_{s'}}\mu(s')$, 
\[
\big(\mu^{-1}(c)\setminus\{s\}\big)\cup\{s'\}\succsim_c \mu^{-1}(c)\implies s\mathrel{\priority_c}s'.
\]
\end{axiom}

This definition integrates distributional objectives with priority rankings. In standard school choice, a student has justified envy if she prefers a school to her assignment and has higher priority than some admitted student. In our framework, the priority claim is conditional on the distributional objective. If replacing an admitted student $s$ with a student $s'$ who wants school $c$ would weakly improve the school's distributional outcome, then $s$ must have higher priority than $s'$. Thus, a lower-priority student can be matched to a school over a higher-priority student only when replacing the former with the latter would fail to weakly improve that school's distributional outcome.

A matching mechanism is a function $\phi:\mathcal{P}^{\mathcal{S}}\to\mathcal{M}$ that maps each profile of student preferences $P$ to a matching $\phi(P)$. We write $\phi_s(P)$ for the assignment of student $s$ under the matching $\phi(P)$. 

We say that a mechanism $\phi$ is non-wasteful if, for every preference profile $P\in\mathcal{P}^\mathcal{S}$, the matching $\phi(P)$ is non-wasteful. We define individually rational, distributionally maximal, and free-of-justified-envy mechanisms analogously.

\begin{axiom}\label{ax:matching_sp}
A mechanism $\phi$ is \textbf{strategy-proof} if no student can benefit from misreporting her preferences: for every $P\in\mathcal{P}^{\mathcal{S}}$, every $s\in\mathcal{S}$, and every $P_s'\in\mathcal{P}$,
\[
\phi_s(P)\mathrel{ R_s }\phi_s(P_s',P_{-s}).
\]
\end{axiom}

We now define the deferred-acceptance mechanism induced by a profile of school choice rules. Let $(\ch_c)_{c\in\mathcal{C}}$ be a profile of choice rules, one for each school.

\begin{quote}
    \textbf{Deferred-Acceptance Algorithm (DA) based on 
    $(\ch_c)_{c\in\mathcal{C}}$}\\
    \textbf{Input:} A profile of student preferences 
    $P=(P_s)_{s\in\mathcal{S}}$.
    \begin{description}[parsep=0pt, itemsep=5pt, leftmargin=1em]
        \item[Step 1:] Each student $s\in\mathcal{S}$ proposes to their most 
            preferred acceptable school, if any. Let $S_1^c$ be the set of proposals 
            received by school $c$. School $c$ tentatively holds the set 
            $\ch_c(S_1^c)$ and permanently rejects 
            $S_1^c\setminus\ch_c(S_1^c)$. If no students are rejected, the 
            algorithm terminates.

        \item[Step $k \geq 2$:] Each student $s\in\mathcal{S}$ rejected at 
            Step $k-1$ proposes to their most preferred acceptable school from 
            which they have not yet been rejected, if any. Let $S_k^c$ be the union of 
            new proposals to $c$ and the students tentatively held by $c$ from 
            the previous step. School $c$ tentatively holds $\ch_c(S_k^c)$ 
            and permanently rejects $S_k^c\setminus\ch_c(S_k^c)$. If no 
            students are rejected, the algorithm terminates.

        \item[Output:] The final tentative assignments constitute the matching 
            outcome.
    \end{description}
\end{quote}

The \textbf{DA matching mechanism} based on $(\ch_c)_{c\in\mathcal{C}}$ is the mechanism whose outcome is determined by the algorithm above for any given preference profile. In the market generated by distributional preferences, the relevant choice-rule profile is $(\ch_c^{\priority_c})_{c\in\mathcal{C}}$, where each school uses its distributional choice rule.

\begin{theorem}\label{thm:da}
Suppose that, for every school $c\in\mathcal{C}$, the distributional preference $\succsim_c$ satisfies the upper-bound property, the maximizer property, and the coherence property. A matching mechanism is non-wasteful, distributionally maximal, individually rational, strategy-proof, and free of justified envy if and only if it is the deferred-acceptance mechanism based on the distributional choice rules $(\ch_c^{\priority_c})_{c\in\mathcal{C}}$.
\end{theorem}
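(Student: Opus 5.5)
The proof will use the meta-characterization of \citet{Dogan/Imamura/Yenmez:2025}, as the introduction indicates. That result says the following. Suppose each school's admissions are governed by single-school axioms that pin down a unique choice rule, and that this rule is path independent and satisfies the law of aggregate demand (here implied by non-wastefulness: $\abs{\ch(S)}=\min\{q,\abs{S}\}$ is monotone in $S$). Then the DA mechanism based on these rules is the unique mechanism that is individually rational, strategy-proof, and satisfies the matching analogues of those axioms. The plan is therefore to check the hypotheses of that theorem school by school, and then to check that our matching axioms are exactly the market-level lifts of the single-school axioms.

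\textbf{Step 1: single-school properties.} For each school $c$, \autoref{thm:exogenouspriority} (using the upper-bound and maximizer properties) gives that $\ch_c^{\priority_c}$ is the unique non-wasteful, distributionally maximal, justified-envy-free choice rule. \autoref{thm:genpi} (adding the coherence property) gives that it is path independent. The law of aggregate demand follows immediately from non-wastefulness.

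\textbf{Step 2: matching axioms as lifts.} Fix a matching $\mu$ and let $A_c(\mu)\coloneqq\{s: c\mathrel{R_s}\mu(s)\}$ be the students willing to attend $c$. I will show that $\mu$ is non-wasteful, distributionally maximal, and free of justified envy (\autoref{ax:matching_nonwasteful}--\autoref{ax:matching_envy}) if and only if, for every $c$, the set $\mu^{-1}(c)$ satisfies the corresponding single-school conditions relative to the applicant pool $A_c(\mu)$.
\begin{itemize}
\item For non-wastefulness, if $\abs{\mu^{-1}(c)}<q_c$, then every $s\in A_c(\mu)$ with $\mu(s)\neq c$ would strictly prefer $c$. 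So the matching condition is equivalent to $\abs{\mu^{-1}(c)}=\min\{q_c,\abs{A_c(\mu)}\}$.
\item For distributional maximality, the matching condition is already stated with $S\subseteq A_c(\mu)$, so it coincides with the single-school condition on $A_c(\mu)$.
\item For justified envy, the students $s'$ with $c\mathrel{P_{s'}}\mu(s')$ are exactly $A_c(\mu)\setminus\mu^{-1}(c)$, so the two conditions coincide.
\end{itemize}
By \autoref{thm:exogenouspriority}, these three conditions together are equivalent to $\mu^{-1}(c)=\ch_c^{\priority_c}(A_c(\mu))$ for every $c$. Combined with individual rationality, this is precisely stability with respect to the choice profile $(\ch_c^{\priority_c})_c$ in the sense used by the meta-theorem.

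\textbf{Step 3: conclude.} Sufficiency is standard. DA based on path-independent rules satisfying the law of aggregate demand produces the student-optimal stable matching and is strategy-proof \citep{hatfield/milgrom:2005}. By Step 2, its outcome therefore satisfies all five axioms. Necessity is the content of the meta-characterization: any mechanism that is individually rational, strategy-proof, and whose outcomes satisfy the lifted axioms must coincide with DA.

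\textbf{Main obstacle.} The delicate point is matching the exact formulation of the meta-theorem's hypotheses. It typically requires that the matching-level axioms be ``induced'' from the choice-level axioms via the pools $A_c(\mu)$, and in particular that the single-school axioms characterize a unique rule for every applicant set, not just for the sets that actually arise. I would verify the equivalence in Step 2 carefully at the boundary cases: the case $\abs{A_c(\mu)}\le q_c$, where non-wastefulness forces $\mu^{-1}(c)=A_c(\mu)$, and the requirement that $\mu^{-1}(c)\subseteq A_c(\mu)$, which holds by reflexivity of $R_s$. If the meta-theorem is instead phrased through stability directly, I would give the uniqueness argument in the style of \citet{albar94}, as an induction on the number of students ranking acceptable schools, in which strategy-proofness forces agreement with DA.
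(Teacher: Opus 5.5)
Your proposal is correct and follows the paper's proof: the paper also applies the meta-characterization of \citet{Dogan/Imamura/Yenmez:2025}, checking that non-wastefulness, distributional maximality, and freedom from justified envy are solitary axioms that uniquely characterize $\ch_c^{\priority_c}$ (\autoref{thm:exogenouspriority}), that this rule is path independent (\autoref{thm:genpi}) and size monotone (by non-wastefulness), and that the extensions of these axioms to the demand set $\{s : c \mathrel{R_s} \mu(s)\}$ coincide with the matching axioms. The only cosmetic difference is that you derive sufficiency through stability and the strategy-proofness of DA, whereas the paper takes both directions directly from the meta-theorem.
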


\autoref{thm:da} is the mechanism-level counterpart of the choice-theoretic results in \autoref{sec:choice}. It shows that the same structural properties---upper-bound, maximizer, and coherence---that characterize single-school choice also characterize implementation in centralized markets. If each school uses its distributional choice rule, deferred acceptance produces matchings that are individually rational, non-wasteful, distributionally maximal, and free of justified envy, while preserving strategy-proofness for students. Conversely, any mechanism satisfying these desiderata must coincide with the deferred-acceptance mechanism based on the schools' distributional choice rules. Thus, the upper-bound, maximizer, and coherence properties provide a unified foundation for centralized implementation of distributional objectives.

\section{Applications}\label{sec:applications}

In this section, we apply our framework to several classes of distributional preferences. The purpose is twofold. First, we show how familiar policies, including reserves and floors-and-ceilings policies, can be written as choice rules derived from distributional preferences that satisfy the upper-bound, maximizer, and coherence properties. Hence, these familiar policies are instances of our distributional choice rule for particular distributional preferences. Second, we show how the same framework accommodates distributional objectives that are not naturally expressed as reserves, especially objectives involving overlapping or intersectional identities.

We focus on sets $S\subseteq\mathcal{S}$ with $\abs{S}=q$ without loss of generality. The upper-bound, maximizer, and coherence properties are defined for $q$-sized sets, and thus, it is sufficient to specify how a school's distributional preference ranks sets of size $q$. In particular, two distributional preferences that agree on all comparisons among size-$q$ sets induce the same frontier sets and the same distributional choice rule, even if they disagree on sets of other cardinalities.

It is useful to formalize the attributes that determine a school's distributional preferences. To that end, let $\mathcal{T}\coloneqq\{t_1,\ldots,t_K\}$ denote a finite set of distributionally relevant identities or protected traits. Each student $s\in\mathcal{S}$ has a type $\tau(s)\in\{0,1\}^K$, which is a vector with $\tau_k(s)=1$ if student $s$ has trait $t_k\in\mathcal{T}$ and $\tau_k(s)=0$ otherwise. In \autoref{example:running}, $K=2$ with $t_1=$ minority status and $t_2=$ low SES, so $\tau(s_1)=(1,1)$, $\tau(s_2)=(0,0)$, $\tau(s_3)=(0,1)$, and $\tau(s_4)=(1,0)$. The quantity $\norm{\tau(s)}=\sum_k \tau_k(s)$ captures how many traits a student $s$ has, and hence represents the student's intersectionality.

\subsection{Score-Based Preferences}\label{subsec:score}

We begin with preferences derived from student-level scores, which is a useful application of our framework to both separable and nonseparable distributional preferences. 

Let $f:\mathcal{S}\to\mathbb{R}$ assign a distributional score to each student. Examples of distributional scores include:
\begin{itemize}[parsep=0pt, listparindent=1em, itemsep=0pt]
\item \textbf{Trait-based scoring:} $f(s)=\sigma^{t}(\tau(s))$, where $\sigma^{t}:\{0,1\}^K\to \mathbb{R}$ maps a student's type vector into a score. For example, given weights $\lambda=(\lambda_k)_{k=1}^K\in [0,1]^K$ satisfying $\sum_k\lambda_k=1$ and a type vector $\tau\in\{0,1\}^K$, the trait-based score could be $\sigma^t(\tau)=\lambda\cdot\tau$. Hence, $\lambda_k$ measures the marginal contribution of trait $t_k$ to a student's overall score and captures how much the school values that trait.
\item \textbf{Merit-based scoring:} $f(s)=\sigma^m(s)$, where $\sigma^m(s)\in\mathbb{R}$ is an exam score of student $s$.

\item \textbf{Multidimensional scoring:} $f(s)=g(\sigma^t(\tau(s)),\sigma^m(s))$, where the aggregator function $g$ allows for arbitrary interactions between trait- and merit-based scores.
\end{itemize}

For each set $S\subseteq \mathcal{S}$ with $\abs{S}=q$, label the students in $S=\{s_1,\ldots,s_q\}$ so that $f(s_1)\geq\cdots\geq f(s_q)$. With some abuse of notation, define $f(S)\coloneqq (f(s_1), \ldots, f(s_q))$ as the profile of distributional scores sorted in descending order. We consider two different distributional preferences that compare any two sets of size $q$ based on their respective profiles of scores. 

The first distributional preference, which we call the \emph{index distributional preference}, compares sets by applying an index to their score profiles. Formally, let $\Psi:\mathbb{R}^q\to\mathbb{R}$ be a \emph{diversity index}. We assume that $\Psi$ is strictly increasing in each coordinate, so that replacing one student with a higher-scoring student strictly improves the index. 

Given two sets $S,S'\subseteq\mathcal{S}$ with $\abs{S}=\abs{S'}=q$, we then define the index distributional preference by
\[
S\succsim S' \quad \Longleftrightarrow \quad \Psi(f(S))\geq \Psi(f(S')).
\]
Note that the index distributional preference is complete. Hence, the upper-bound property holds vacuously. We show that this class of distributional preference satisfies all three structural properties. 

\begin{proposition}\label{prop:index}
Suppose the index $\Psi$ is strictly increasing. Then the index distributional preference satisfies the upper-bound property, the maximizer property, and the coherence property.
\end{proposition}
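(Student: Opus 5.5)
The plan is to reduce everything to one description of the optimal sets in a pool under the index distributional preference. The key lemma is stated for any $T\subseteq\mathcal S$ with $\abs{T}\ge q$. Let $M(T)$ be the multiset of the $q$ largest scores in $T$. A $q$-subset $S\subseteq T$ satisfies $S\succsim S''$ for all $S''\in\nw(T)$ if and only if the multiset of scores of $S$ equals $M(T)$. Moreover, any $q$-subset whose score multiset is not $M(T)$ is strictly dominated by every set whose score multiset is $M(T)$.

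To prove the lemma, note that for any $q$-subset $S\subseteq T$ the $k$-th largest score in $S$ is at most the $k$-th largest score in $T$. Hence the sorted vector of $M(T)$ weakly dominates $f(S)$ coordinatewise. If the two are different, the domination is strict in some coordinate. Since $\Psi$ is strictly increasing in each coordinate, we can raise coordinates one at a time to get $\Psi(M(T))>\Psi(f(S))$. Finally, two sets with the same score multiset have the same $f(\cdot)$ and are therefore indifferent.

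The \emph{upper-bound property} holds vacuously because the preference is complete on $q$-sets.

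For the \emph{maximizer property}, take distinct $q$-sets $S,S'$ that are both optimal in $\nw(S\cup S')$. By the lemma, both have score multiset $M(S\cup S')$. Cancelling the common part $S\cap S'$ shows that $S\setminus S'$ and $S'\setminus S$ have the same nonempty score multiset. Pick $s\in S\setminus S'$ and $s'\in S'\setminus S$ with $f(s)=f(s')$. The swap leaves both sorted profiles unchanged, which gives $(S\setminus\{s\})\cup\{s'\}\sim S$ and $(S'\setminus\{s'\})\cup\{s\}\sim S'$.

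For the \emph{coherence property}, set $T=S\cup S'$ and proceed in three steps.
\begin{enumerate}
\item By hypothesis (ii) and the lemma, $S$ carries the top-$q$ scores of $T$. Hypothesis (i) then forces $f(\hat s)>f(t)$ for every $t\in T\setminus S=S'\setminus S$. Otherwise swapping $\hat s$ for such a $t$ would give an optimal set of $T$ that omits $\hat s$.
\item The set $S\setminus\{\hat s\}$ carries the top $q-1$ scores of $T\setminus\{\hat s\}$. So by hypothesis (iii) and the lemma applied to $T\setminus\{\hat s\}$, the score multiset of $S'$ equals that of $S$ with one copy of $f(\hat s)$ replaced by $m$. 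Here $m$ is the largest score in $(T\setminus\{\hat s\})\setminus(S\setminus\{\hat s\})$.
\item Removing the common part gives
\[
\mathrm{scores}(S'\setminus S)=\mathrm{scores}(S\setminus S')-\{f(\hat s)\}+\{m\}.
\]
Since $(S\setminus S')\setminus\{\hat s\}\neq\varnothing$, pick $s$ in it with score $v$. The multiset $\mathrm{scores}(S\setminus S')-\{f(\hat s)\}$ is contained in $\mathrm{scores}(S'\setminus S)$, so some $s'\in S'\setminus S$ has $f(s')=v$.
\end{enumerate}
Swapping $s$ and $s'$ preserves both sorted profiles, which yields the two required indifferences.

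The main care point is the bookkeeping in the coherence step. Specifically, we need that $S\setminus\{\hat s\}$ is exactly the top $q-1$ of the reduced pool, and we need the multiset identity with $\hat s$ removed. Ties in scores are handled entirely at the level of multisets, so they cause no trouble.
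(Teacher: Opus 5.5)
Your proof is correct, but it takes a more global route than the paper's. The paper never characterizes the optimal sets. For the maximizer property, it takes an arbitrary $s\in S\setminus S'$ and $s'\in S'\setminus S$ and notes two things: $f(s)>f(s')$ would let $S'$ strictly improve by the single swap, and $f(s')>f(s)$ would let $S$ strictly improve. Hence $f(s)=f(s')$.

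For coherence, the paper runs the same argument with $s\in(S\setminus S')\setminus\{\hat s\}$. It gets $f(s)\geq f(s')$ from antecedent $(ii)$, because the swapped set still contains $\hat s$. It gets $f(s')\geq f(s)$ from antecedent $(iii)$. That argument is shorter, shows that \emph{every} cross pair can be exchanged, and uses only antecedents $(ii)$ and $(iii)$.

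Your lemma identifies the optimal $q$-subsets of a pool as those carrying the top-$q$ score multiset, and multiset cancellation does the rest. This buys an explicit description of the frontier and of how it shifts when the essential student is removed: $f(\hat s)$ is replaced by $m$. It also carries over essentially unchanged to the pointwise preference. The cost is the multiset bookkeeping and reliance on antecedent $(i)$.

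Two small points. First, in Step 1, antecedent $(ii)$ alone does not give that $S$ carries the top-$q$ scores of $T$, since it only controls sets containing $\hat s$. For example, take $q=2$, $S=\{\hat s,a\}$, $S'=\{b,c\}$ with scores $0,5,1,1$. Then $(ii)$ holds, yet $\{a,b\}$ beats $S$. You need $(i)$ and $(ii)$ together, which is what you effectively use. Second, the strict inequality $f(\hat s)>f(t)$ you derive in Step 1 is never used afterward.
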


This class of preferences accommodates the \textit{additive distributional preference} in \autoref{example:2} by specifying
\[
\Psi(f(S))=\sum_{s\in S} f(s).
\]
Since $\Psi$ in this case is additively separable, the value of admitting a student is independent of which other students have already been admitted.

More generally, $\Psi$ may be nonseparable, so that the marginal value of a student's score can depend on the scores of the other admitted students. For example, suppose $f(s)\geq 0$ for all $s\in\mathcal{S}$. Given a set $S\subseteq\mathcal{S}$, let $F(\cdot\vert S):\mathbb{R}_+\to [0,1]$ represent its \emph{score survival function} defined as 
\[
F(m\vert S)\coloneqq \frac{\abs{\{s\in S:f(s)\geq m\}}}{q}
\]
for each $m\in\mathbb{R}_+$. Finally, let the index be given by
\[
\Psi(f(S))= \int_0^\infty g(m, F(m|S))dm,
\]
where $g:\mathbb{R}_+\times [0,1]\to\mathbb{R}$ is measurable, strictly increasing in its second argument, satisfies $g(m,0)=0$ for every $m\in\mathbb{R}_+$, and is such that the integral is finite for every $S\subseteq\mathcal S$. Different specifications of the function $g$ correspond to different distributional preferences. The dependence of $g$ on $m$ determines which score thresholds the school values most: placing greater weight on lower thresholds rewards broad representation, while placing greater weight on higher thresholds rewards cohorts containing students with especially high scores. The shape of $g$ in its second argument determines how the school values the fraction of the cohort whose scores exceed each threshold. For example, concavity generates diminishing returns to additional students above a given threshold, while convexity places increasing value on having a larger share of the cohort above that threshold. Since the index depends on the entire survival function of the cohort, it can be nonseparable across students even though it is derived from individual scores.

Such nonseparability is often difficult to accommodate in matching markets because it can conflict with substitutability, and hence with the stability and implementation properties required of centralized mechanisms \citep{echyen12}. Here, however, strict monotonicity of the index implies that the frontier consists of the $q$-student sets with the highest feasible score profile. Thus, this class of preferences preserves the exchange and coherence structure needed for implementation on the frontier while allowing the shape of the index to affect how non-frontier sets are ranked.

The second distributional preference, which we call the \emph{pointwise distributional preference}, compares sets based on coordinate-wise domination of the profile of scores. Formally, given $S,S'\subseteq\mathcal{S}$ with $\abs{S}=\abs{S'}=q$, we define the pointwise distributional preference by
\[
S\succsim S'\quad\Longleftrightarrow\quad f(S)\geq f(S'),
\]
where for $x,y\in\mathbb{R}^q$, $x\geq y$ if $x_i\geq y_i$ for all $i\in\{1,\ldots,q\}$.

Unlike the index distributional preference, the pointwise distributional preference is incomplete. Thus, the upper-bound property is substantive here.

\begin{proposition}\label{prop:pointwise}
The pointwise distributional preference satisfies the upper-bound property, the maximizer property, and the coherence property.
\end{proposition}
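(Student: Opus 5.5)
The plan is to reduce everything to one structural fact about the pointwise preference: within any pool, the sets on top are exactly the $q$-subsets whose sorted score profile equals that of the pool's $q$ highest scorers. For a pool $U$ with $\abs{U}\ge q$, let $T_U$ be any $q$-subset of $U$ consisting of $q$ highest-scoring students, with ties broken arbitrarily, and let $\theta_U$ be the $q$-th largest score in $U$.

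\emph{Step 1 (top set).} I first show that $f(T_U)\geq f(S'')$ coordinatewise for every $S''\in\nw(U)$. This holds because the $k$-th largest score in any $q$-subset of $U$ is at most the $k$-th largest score in $U$.

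\emph{Step 2 (characterizing top sets).} Next I show that a $q$-subset $S\subseteq U$ satisfies $S\succsim S''$ for all $S''\in\nw(U)$ if and only if $f(S)=f(T_U)$. Such an $S$ must then contain every student of $U$ with score strictly above $\theta_U$, and its remaining members have score exactly $\theta_U$.

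\emph{Upper-bound property.} Take incomparable $S,S'$ of size $q$ and set $S''=T_{S\cup S'}$. By Step 1, $S''\succsim S$ and $S''\succsim S'$. If we had $S\succsim S''$, then $f(S)=f(S'')\geq f(S')$, so $S\succsim S'$, which contradicts incomparability. Hence $S''\succ S$.

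\emph{Maximizer property.} Under the hypothesis, Step 2 applied to $U=S\cup S'$ gives $f(S)=f(S')=f(T_U)$. Both sets contain all students of $U$ with score above $\theta_U$, so every student in $S\setminus S'$ and in $S'\setminus S$ has score exactly $\theta_U$. Both differences are nonempty, since the sets are distinct and have equal size. Any swap of $s\in S\setminus S'$ with $s'\in S'\setminus S$ therefore exchanges equal scores, leaves both sorted profiles unchanged, and yields the two required indifferences.

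\emph{Coherence property.} This is the step needing care, though it follows the same pattern. Let $U=S\cup S'$, $\theta=\theta_U$, and $\theta'=\theta_{U\setminus\{\hat s\}}$. Since removing a student cannot raise the $q$-th largest score, $\theta'\leq\theta$. By (i)--(ii), $S$ is top in $U$, so Step 2 gives $f(s)\geq\theta$ for all $s\in S$. By (iii), $S'$ is top in $U\setminus\{\hat s\}$, so $S'$ contains every student of $U\setminus\{\hat s\}$ with score above $\theta'$.

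Now pick any $s\in(S\setminus S')\setminus\{\hat s\}$, which exists by assumption. Since $s\neq\hat s$ and $s\notin S'$, we get $f(s)\leq\theta'$. Combined with $f(s)\geq\theta\geq\theta'$, this forces $f(s)=\theta=\theta'$.

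Next pick any $s'\in S'\setminus S$, which is nonempty by equal cardinalities. Since $S$ contains all students of $U$ with score above $\theta$ and $s'\notin S$, we have $f(s')\leq\theta$. Since $s'\in S'$, we have $f(s')\geq\theta'=\theta$. Hence $f(s')=f(s)$, and the swap again preserves both sorted profiles, giving both indifferences.

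The main obstacle is bookkeeping with ties, and in particular establishing $\theta'=\theta$ in the coherence step. The argument only needs the inequalities $\theta'\le\theta$ and $f(s)\geq\theta$ for $s\in S$, so I expect hypothesis (i) to play no essential role beyond guaranteeing that $S$ is top in $U$.
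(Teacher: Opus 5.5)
Your proof is correct. The upper-bound step is the same as the paper's: take the $q$ highest scorers in $S\cup S'$.

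For the maximizer and coherence properties you take a different route. The paper argues locally, reusing its proof for the index preference. It picks any admissible $s$ and $s'$. If $f(s)\neq f(s')$, one of the swapped sets, $(S\setminus\{s\})\cup\{s'\}$ or $(S'\setminus\{s'\})\cup\{s\}$, has a sorted profile that weakly dominates the original's with one strict coordinate. That contradicts the weak-dominance hypothesis, so $f(s)=f(s')$ and the swap preserves both profiles.

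You instead characterize the top sets of a pool globally through the threshold $\theta_U$, and for coherence you compare the thresholds $\theta\ge\theta'$ of $U$ and $U\setminus\{\hat s\}$.

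\emph{What your route buys:} an explicit description of the frontier. The paper only asserts this description without proof after the proposition, when it notes that the index and pointwise preferences induce the same frontiers.

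\emph{What it costs:} tie bookkeeping, and the need for $S$ to be top in all of $U$, which is exactly where you invoke (i).

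\emph{What the paper's route buys:} it is shorter, and it carries over unchanged from the index preference. It uses only the fact that replacing a student by a strictly higher scorer gives a strict improvement. For coherence it also uses only (ii), applied to $(S\setminus\{s\})\cup\{s'\}$, which contains $\hat s$, and (iii), applied to $(S'\setminus\{s'\})\cup\{s\}$.

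So your closing remark can be sharpened: hypothesis (i) is not needed at all for this preference. Even when $S$ fails to be top in $U$ because $\hat s$ has a low score, (ii) and (iii) alone force $f(s)=f(s')$ for every admissible pair.
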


While the index and pointwise distributional preferences differ as binary relations, they induce identical frontiers. If $\abs{S}\leq q$, then $\front(S)=\{S\}$. If $\abs{S}>q$, then $\front(S)$ consists of all $q$-student subsets of $S$ whose profile of scores is equal to the vector of the $q$ highest distributional scores available in $S$. Thus, the two preferences lead to the same distributional choice rule even though one is complete and the other is not. This illustrates that what matters for our theory is how the preferences shape the frontier. How preferences compare off-frontier sets is not important.

\subsection{Matroidal Distributional Preferences}\label{subsection:matroidal}

We next consider distributional preferences derived from matroid rank functions. This class is useful because many familiar policies, such as standard reserve policies as well as overlapping reserves \citep{sonmez/yenmez:20, sonmez/yenmez:22}, have a natural matroidal structure. In these cases, the rank of a set measures how much of the school's distributional objective can be satisfied by the students in the set.

Let $M=(\mathcal{S},\mathcal I)$ be a matroid on the student set $\mathcal S$, and let $\rank_M:2^{\mathcal S}\to\mathbb Z_+$ denote its rank function: for any $S\subseteq\mathcal S$, $\rank _M(S)$ is the cardinality of the largest independent subset of $S$.\footnote{See \ref{sec:matroid_theory} for a formal definition of the matroidal rank function.} 

Given $S,S'\subseteq\mathcal{S}$ with $\abs{S}=\abs{S'}=q$, we define the \emph{matroidal distributional preference} induced by $M$ by
\[
S\succsim S'\quad\Longleftrightarrow\quad \rank_M(S)\geq \rank_M(S').
\]
In words, the independent sets of the matroid specify which groups of students can be ``counted'' toward meeting the school's distributional objective. The rank $\rank_M(S)$ is the largest number of students in $S$ that can be counted in this sense. Thus, one set is preferred to another when it contains a larger subset of students that can count toward that objective.

\begin{proposition}\label{prop:matroidal}
The matroidal distributional preference satisfies the upper-bound property, the maximizer property, and the coherence property.
\end{proposition}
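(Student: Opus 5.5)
The preference restricted to $q$-sets is complete, since it is induced by the integer-valued rank function $\rank_M$. I would therefore dispose of the upper-bound property first: it holds vacuously, because no two sets of size $q$ are incomparable. The work lies in the two exchange properties. The main tool is the strong (symmetric) basis exchange property of matroids, stated in the form I need. Let $M|_T$ be the restriction of $M$ to $T$, and consider maximum-rank sets of a fixed size $q$. The key reformulation is the following. For $T\subseteq\mathcal S$ with $\abs{T}\ge q$, let $\rho=\min\{q,\rank_M(T)\}$. A $q$-subset $S\subseteq T$ attains the maximal rank $\rho$ if and only if $S$ contains an independent set of size $\rho$. Hence the optimal $q$-subsets are exactly the sets of the form $I\cup J$, where $I$ is independent of size $\rho$ and $J\subseteq T$ has the remaining cardinality. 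Equivalently, they are the bases of the matroid obtained by truncating $M|_T$ to rank $\rho$ and then taking the \emph{dual-free extension}: the union with any $q-\rho$ further elements. This is the matroid whose independent sets are the sets $X\subseteq T$ with $\abs{X}\le q$ and $\abs{X}-\rank_M(X)\le q-\rho$. I would verify directly that this family satisfies (I1)--(I3). It is the truncation to size $q$ of a union of $M|_T$ and a uniform matroid of rank $q-\rho$, so it is matroidal by the matroid-union theorem.

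\textbf{Maximizer property.} Suppose $S,S'$ are both optimal in $\nw(S\cup S')$. With $T=S\cup S'$, both are bases of the matroid $N_T$ just described. Indifference to $S$ then just means having rank $\rho$, so it suffices that the exchanged sets are again bases of $N_T$. The symmetric exchange property of matroid bases (Brualdi's theorem) gives, for any $s\in S\setminus S'$, some $s'\in S'\setminus S$ such that both $S-s+s'$ and $S'-s'+s$ are bases. That settles the maximizer property.

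\textbf{Coherence property.} This is the step I expect to be the main obstacle. Let $T=S\cup S'$, and let $\hat s$ satisfy hypotheses (i)--(iii). Then $S$ is a basis of $N_T$ and every basis of $N_T$ contains $\hat s$, so $\hat s$ is a coloop of $N_T$; meanwhile $S'$ is a basis of $N_{T\setminus\{\hat s\}}$. Removing a coloop drops the optimal rank by exactly one. Since $\abs{S'}=q$, the rank drop occurs through $\rho$, that is, $\rho'=\rho-1$, and $N_{T-\hat s}$ has as bases the $q$-sets of rank $\rho-1$. My plan is to work in $M|_T$ directly. Take a maximal independent set $I\subseteq S$ of size $\rho$ with $\hat s\in I$, and an independent $I'\subseteq S'$ of size $\rho-1$. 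By augmentation I can enlarge $I'$ from $I$, and since $\hat s$ is a coloop of $M|_T$ (as $\rho=\rank_M(T)$ must hold here, else no element would be essential), this enlargement is $I'+\hat s$. Applying symmetric exchange between the bases $I$ and $I'+\hat s$ of $M|_T$ yields an element of $I\setminus I'$ other than $\hat s$, or else an exchange among the free (non-independent) filler elements. I then need to choose $s\in(S\setminus S')\setminus\{\hat s\}$ and $s'\in S'\setminus S$ so that $S-s+s'$ keeps rank $\rho$ and $S'-s'+s$ keeps rank $\rho-1$.

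I would split this into two cases. In the first case some $s\in(S\setminus S')\setminus\{\hat s\}$ lies outside a chosen $\rho$-independent subset of $S$. Then swapping it for an $s'\in S'\setminus S$ that is outside the chosen independent subset of $S'$, or that is spanned appropriately, preserves both ranks. Rank is monotone and the filler elements are irrelevant, so this case reduces to showing such an $s'$ exists. In the second case every element of $(S\setminus S')\setminus\{\hat s\}$ is needed for independence. Then I use symmetric exchange in $M|_T$ between $I-\hat s$-extensions and $I'$ to produce the pair. The nonemptiness assumption $(S\setminus S')\setminus\{\hat s\}\neq\varnothing$ guarantees the candidate $s$ exists. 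The counting here is the delicate part.
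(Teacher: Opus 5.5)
Your upper-bound argument is fine, and your maximizer argument is correct. The optimal $q$-subsets of $T$ are exactly the $q$-subsets of rank $\rho=\min\{q,\rank_M(T)\}$. They are the bases of the rank-$q$ truncation of $M|_T\vee U_{q-\rho}$, and Brualdi's symmetric exchange finishes the argument. This route is heavier than the paper's, which gets the maximizer property in one line from an equicardinal rank-exchange inequality without building $N_T$. Your structural facts for coherence are also right: (i) forces $\rho=\rank_M(T)$, $\hat s$ is a coloop of $M|_T$, and the optimal rank in $T\setminus\{\hat s\}$ is $\rho-1$.

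\textbf{The gap is coherence itself, which you leave as a plan.}

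\emph{The sketched exchange does not give the required pair.} Symmetric exchange between $I$ and $I'\cup\{\hat s\}$ gives $x\in I\setminus I'$ and $y\in I'\setminus I$. Coherence needs $s\in(S\setminus S')\setminus\{\hat s\}$ and $s'\in S'\setminus S$, but $x$ and $y$ may lie in $S\cap S'$. Worse, whenever $I'\subseteq S\cap S'$ you may have $I=I'\cup\{\hat s\}$, and then the exchange yields nothing.

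\emph{The case split is not carried out.} In your Case 1, ``reduces to showing such an $s'$ exists'' is the whole content. When every element of $S'\setminus S$ is a coloop of $M|_{S'}$, you must show $s\notin\operatorname{cl}(S'\setminus\{s'\})$ for some $s'$. That needs a separate counting argument using $\abs{S\setminus S'}=\abs{S'\setminus S}$ and $\rank_M(S)=\rho$. Case 2 is only gestured at.

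\emph{The missing tool is an equicardinal rank-exchange inequality.} If $\abs{X}=\abs{Y}$, then for every $x\in X\setminus Y$ there is $y\in Y\setminus X$ with $\rank_M(X)+\rank_M(Y)\le\rank_M((X\setminus\{x\})\cup\{y\})+\rank_M((Y\setminus\{y\})\cup\{x\})$. This is M-concavity of matroid rank on a fixed-cardinality layer. Apply it with $X=S$, $Y=S'$, and any $x=s\in(S\setminus S')\setminus\{\hat s\}$:
\begin{enumerate}
\item Because $s\neq\hat s$, the set $(S\setminus\{s\})\cup\{s'\}$ contains $\hat s$, so (ii) bounds its rank by $\rank_M(S)$.
\item The set $(S'\setminus\{s'\})\cup\{s\}$ is a $q$-subset of $(S\cup S')\setminus\{\hat s\}$, so (iii) bounds its rank by $\rank_M(S')$.
\item The inequality then forces both bounds to bind, which gives exactly the two required indifferences.
\end{enumerate}
The freedom to choose $x$ is what lets you avoid $\hat s$. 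Hypothesis (i), which drives all of your coloop analysis, is not needed at all.
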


The proposition shows that matroidal distributional preferences are nested in our framework. This is useful because, even when the matroid is not the primitive object governing the school's distributional preference, it can still be viewed as a compact representation of how the primitive distributional preference compares sets. 

We now present several classes of matroidal distributional preferences.

\subsubsection{Nonintersectional Matroids}
Let us first consider settings in which each student has at most one of the $K$ protected traits. This arises, for example, when
\begin{enumerate}[label={$(\alph*)$}, parsep=0pt, itemsep=2pt]
    \item the protected traits are mutually exclusive, such as $\mathcal{T}=\{\text{in-state},\text{out-of-state}\}$,
    \item each combination of identities is treated as a distinct protected trait, such as $\mathcal{T}=\{\text{minority-only},\text{low-SES-only},\text{minority-and-low-SES}\}$, or 
    \item each student is assigned to one of her possibly multiple protected traits.
\end{enumerate}
We model such settings via partition matroids, which are particularly useful for formalizing standard reserve policies where each student qualifies for the reserved seats associated with at most one protected trait.

For each trait $t_k\in\mathcal T$, let
\[
\mathcal S_k\coloneqq \{s\in\mathcal S:\tau_k(s)=1\}
\]
be the set of students with trait $t_k$. Suppose that these sets are pairwise disjoint, so each student has at most one protected trait. We allow for some students to have no protected traits, so it is possible that $\bigcup_{k=1}^K\mathcal{S}_k\subseteq\mathcal{S}$. 

Let $R_k\in\mathbb Z_+$ denote the number of reserved seats for students with trait $t_k$, with $\sum_k R_k\leq q$. The associated partition matroid has independent sets
\[
\mathcal I=\left\{I\subseteq\bigcup_{k=1}^K\mathcal{S}_k:\abs{I\cap\mathcal{S}_k}\leq R_k \text{ for every }k\in\{1,\ldots,K\}\right\}.
\]
Thus, students with no protected trait do not contribute to the matroid rank. The rank function is
\[
\rank_M(S)=\sum_{k=1}^K\min\left\{\abs{S\cap\mathcal S_k},R_k\right\}.
\]
For example, suppose $\mathcal{T}=\{\text{in-state},\text{out-of-state}\}$, with reserves $R_1=80$ and $R_2=20$. Consider a set $S$ with 90 in-state students, 15 out-of-state students, and 10 international students. Since international status is not one of the protected traits in $\mathcal{T}$, international students do not contribute to the matroid rank. Thus,
$\rank_M(S)=\min\{80,90\}+\min\{20,15\}=95$.

Thus, $\rank_M(S)$ counts how many reserved seats can be filled by students in $S$, and the resulting frontier consists of the non-wasteful sets that maximize matroid rank. The distributional choice rule induced by partition matroids corresponds to a standard reserve policy \citep{hayeyi13,ehayeyi14,echyen12}, which prioritizes sets of students that fill the maximum number of reserved seats. 

In \autoref{example:running}, we considered two scenarios: one in which the SES reserve is processed first, and one in which the minority reserve is processed first. These two scenarios can be represented by partition-matroid distributional preferences after assigning intersectional students to a single protected trait ex-ante. If intersectional students are assigned to the low-SES trait, the induced distributional choice rule corresponds to the scenario in which the SES reserve is processed first. If instead intersectional students are assigned to the minority trait, the induced distributional choice rule corresponds to the scenario in which the minority reserve is processed first.

The example highlights one drawback of partition matroids; they require each student to belong to at most one type. Intersectional students with multiple protected traits must be assigned to a single identity, which fails to capture the full complexity of their background. This motivates the intersectional matroid structures we examine next.

\subsubsection{Matroids for Intersectional Identities}

We next consider different classes of matroidal distributional preferences that accommodate intersectional identities.

\paragraph{Transversal matroids and overlapping reserves.}

When a student has multiple protected traits, she may be eligible for the reserved seats associated with more than one trait. Partition matroids handle this multiple eligibility by assigning the student to one of her identities ex-ante. Here, we instead consider transversal matroids, which formalize overlapping reserves: a student may be eligible ex-ante for the reserved seats of multiple traits, but, once admitted, she counts toward the reserved seats of at most one trait. Such overlapping, or horizontal, reserves are studied by \citet{sonmez/yenmez:20, sonmez/yenmez:22}.

For each protected trait $t_k\in\mathcal T$, recall that $\mathcal S_k$ is the set of students who have that trait, and $R_k$ is the number of reserved seats associated with the trait, with $\sum_k R_k\leq q$. Unlike partition matroids, we no longer require the sets $\mathcal{S}_k$ to be pairwise disjoint. For each trait $t_k$, index the reserved seats by $\ell=1,\ldots,R_k$. Define
\[
\mathcal R\coloneqq \bigcup_{k=1}^K\{(\ell,k):\ell\in\{1,\ldots,R_k\}\}.
\]
A set $I\subseteq\mathcal S$ is an independent set of the transversal matroid if there exists an injective map $\eta:I\to\mathcal R$ such that, whenever $\eta(s)=(\ell,k)$, we have $\tau_k(s)=1$. In words, each student in an independent set can be assigned to a distinct reserve seat for a trait she has. The injectivity of $\eta$ ensures that no two students are assigned to the same seat, while the construction of $\mathcal R$ ensures that at most $R_k$ students can be assigned to the reserved seats for trait $t_k$.

The rank function of this transversal matroid measures the maximum number of reserved seats that can be filled by students in a given set. In particular, whenever an intersectional student is eligible for several reserve seats, the matroid optimizes which seat she fills. The induced matroidal distributional preference therefore compares sets by the maximum number of reserve seats they can fill. This is how the distributional preference in \autoref{example:3} is derived. 

Notice that when the sets $\mathcal S_k$ are disjoint, we recover the independent sets of partition matroids. Hence, partition matroids are a special case of transversal matroids. More broadly, rather than assigning intersectional students to one of their potentially numerous traits ex-ante, the transversal matroid allows intersectional students to remain eligible for multiple reserved seats while ensuring that each admitted student counts toward at most one reserve seat.

\paragraph{Vector matroids.}

Vector matroids capture a distributional objective under which the school values cohorts containing linearly non-redundant combinations of protected traits.

Recall that each student $s$ is associated with a type $\tau(s)\in\{0,1\}^K$, which we view as a vector in $\mathbb{R}^K$. A set $I\subseteq\mathcal S$ is independent in the vector matroid if the vectors $\{\tau(s):s\in I\}$ are linearly independent over $\mathbb{R}$. For a student $s\notin S$, adding $s$ to $S$ increases its rank precisely when
\[
\tau(s)\notin\operatorname{span}_{\mathbb R}
\{\tau(s'):s'\in S\}.
\]
The associated rank function is
\[
\rank_M(S)
=
\dim\left(
\operatorname{span}_{\mathbb R}
\{\tau(s):s\in S\}
\right).
\]
Thus, the induced matroidal distributional preference compares cohorts according to the maximum number of linearly independent trait profiles they contain. Since $\rank_M(S)\leq K$, the school values at most $K$ mutually non-redundant profiles, regardless of cohort size.

In \autoref{example:vector}, the pair $\{s_1,s_2\}$ has rank one because $\tau(s_2)$ is the zero vector and therefore adds no dimension to the span of $\tau(s_1)$. In contrast, the pairs $\{s_1,s_3\}$, $\{s_1,s_4\}$, and $\{s_3,s_4\}$ each have rank two because the two type vectors are linearly independent and span $\mathbb{R}^2$. Thus, unlike the score-based preference in \autoref{example:2}, which assigns an explicit premium to intersectionality, the vector-matroid preference assigns no intrinsic premium to an intersectional student: she increases rank only when her trait profile adds a dimension not already generated by the other students in the cohort.

\paragraph{Graphic matroids.}
In some settings, the school may not value each trait in isolation, but rather care about the linkages across traits. For example, a student with expertise in machine learning and economics may be valued differently from one whose expertise links machine learning to biology, even if both are interdisciplinary students. This objective reflects the idea that bridging gaps between otherwise distinct traits can create value for the group as a whole \citep{burt1992,burt2004}. Graphic matroids capture this structure.

To model this, suppose each student has exactly two protected traits, so that $\norm{\tau(s)}=2$ for every $s\in\mathcal S$. We can then interpret $\mathcal{T}$ as a set of vertices of a multigraph with each student naturally corresponding to an edge between the two traits she has. Formally, if $\tau_k(s)=\tau_{k'}(s)=1$ with $k\neq k'$, define the edge contributed by the student as $e_s=\{t_k,t_{k'}\}$. For a set $S\subseteq\mathcal S$, let
\[
E(S)\coloneqq \{e_s:s\in S\}
\]
denote the corresponding multiset of edges. The multigraph is then $G=(\mathcal T,E(\mathcal S))$.

A set $I\subseteq\mathcal S$ is an independent set of the graphic matroid if the multigraph $(\mathcal T,E(I))$ contains no cycle. Equivalently, the students in $I$ add new links among traits without creating a redundancy. The rank function of this matroid is \citep[Section~1.1]{oxley}
\[
\rank_M(S)=K-c(E(S)),
\]
where $c(E(S))$ is the number of connected components of the graph $(\mathcal T,E(S))$, counting isolated vertices as their own components.

The induced matroidal distributional preference therefore compares sets by the maximum number of non-redundant trait linkages represented by students in the set. Thus, the school may prefer a cohort that links otherwise separate traits over one whose students all connect traits within the same already-connected cluster.

\autoref{example:phd} and \autoref{fig:graphic-matroid-example} illustrate such a graphic-matroid distributional preference. In that example, the cohort
$\{s_1,s_2,s_3\}$ contributes edges that connect all four fields into a single
component, giving
$\rank_M(\{s_1,s_2,s_3\})=4-1=3$.
By contrast, the cohort $\{s_1,s_3,s_4\}$ contributes edges CS--Economics, CS--ML, and Economics--ML, which form a cycle, while leaving Political Science as an isolated component, giving
$\rank_M(\{s_1,s_3,s_4\})=4-2=2$.
The school therefore prefers $\{s_1,s_2,s_3\}$ to $\{s_1,s_3,s_4\}$. In particular, adding $s_2$ rather than $s_4$ to $\{s_1,s_3\}$ is preferable because $s_2$ introduces a bridge to Political Science, while $s_4$ adds another connection within the already connected CS--Economics--ML cluster.

\paragraph{Gammoid matroids.}
In some settings, the school may care not only about which traits are linked, but also about the paths through which those linkages arise. For example, a medical school may treat two students specializing in primary care as distributionally different if one is a first-generation student who attended a public undergraduate university, while the other comes from a rural hometown and worked as a research assistant before applying to medical school. Even though both students are linked to the same specialty, their backgrounds and prior experiences trace different paths to that specialty, and those paths may shape how they contribute within the admitted cohort. Gammoid matroids capture this structure.

To model this, define a directed graph $G=(\mathcal{T},E)$ on the trait space $\mathcal{T}$. Let $\mathcal{T}^o\subseteq \mathcal{T}$ be the set of source nodes, representing student backgrounds, and let $\mathcal{T}^d\subseteq \mathcal{T}$ be the set of terminal nodes, representing career outcomes. We allow for the possibility that some nodes are neither source nor terminal nodes, so $\mathcal{T}^o\cup\mathcal{T}^d\subseteq \mathcal{T}$. Each student $s\in\mathcal{S}$ is associated with a source node in $\mathcal{T}^o$, and the directed graph determines which terminal nodes in $\mathcal{T}^d$ can be reached from that source. If multiple students share the same source node or need to be linked to the same terminal node, one can use student-specific copies of the relevant nodes and enlarge the vertex set to accommodate such copies.

A set $I\subseteq\mathcal S$ is an independent set of the gammoid matroid if there exist $\abs{I}$ vertex-disjoint paths from the source nodes associated with students in $I$ to terminal nodes. For a set $S\subseteq\mathcal S$, the rank function $\rank_M(S)$ equals the maximum number of students in $S$ that can be linked from their source nodes to terminal nodes by vertex-disjoint paths. Thus, the induced matroidal distributional preference compares sets according to the maximum number of directed source-to-terminal linkages that can be represented without redundancy.

\subsection{Floors and Ceilings}\label{subsec:floors_ceilings}

We next consider floors-and-ceilings policies. These policies specify lower and upper bounds on the number of admitted students from each protected trait. They are commonly written directly as choice rules. We show how they can instead be rationalized by distributional preferences over sets of students.

For this subsection, suppose that each student has exactly one protected trait.\footnote{One can allow for students to have \emph{at most} one protected trait by treating students with no protected trait as belonging to a residual unprotected category $t_0$.} For each trait $t_k\in\mathcal{T}$ and each set $S\subseteq\mathcal S$, let
\[
n_k(S)\coloneqq \sum_{s\in S} \tau_k(s)
\]
denote the number of students in $S$ with that trait, and let $n(S)\coloneqq (n_1(S),\ldots,n_K(S))$ denote the set's trait-composition. A floors-and-ceilings policy specifies, for each trait $t_k$, a lower bound $L_k\in\mathbb Z_+$ and an upper bound $U_k\in\mathbb Z_+$, with $L_k\leq U_k$ \citep{ehayeyi14}. We say that a set $S\subseteq\mathcal{S}$ is \emph{feasible} if $L_k\leq n_k(S)\leq U_k$ for each $k\in\{1,\ldots,K\}$. Otherwise, we say $S$ is infeasible.

A natural first attempt is to define a dichotomous distributional preference, which ranks sets solely by whether they are feasible. For sets $S,S'\subseteq\mathcal S$ with $\abs{S}=\abs{S'}=q$, define
\[
S\succsim S'
\quad\Longleftrightarrow\quad
S\text{ is feasible or }S'\text{ is infeasible}.
\]
This captures a school that treats the bounds as hard constraints: all feasible sets are mutually indifferent and strictly dominate all infeasible sets, while all infeasible sets are mutually indifferent to one another, regardless of how severely they violate the constraints.

\begin{proposition}\label{prop:bounds}
The dichotomous distributional preference satisfies the upper-bound property and the maximizer property.
\end{proposition}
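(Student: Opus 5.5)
The plan is a direct verification using the trait-count vector $n(\cdot)$; none of the earlier characterization results is needed. The upper-bound property holds vacuously: for sets of size $q$ the dichotomous preference is complete, since for any two such sets either the first is feasible or the second is infeasible, or the reverse. For the maximizer property, fix distinct $S,S'$ with $\abs{S}=\abs{S'}=q$ such that $S\succsim S''$ and $S'\succsim S''$ for every $S''\in\nw(S\cup S')$. Because $S\neq S'$ and the two sets have the same size, both $S\setminus S'$ and $S'\setminus S$ are nonempty. We split into two cases according to whether $\nw(S\cup S')$ contains a feasible set.

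\emph{Case 1: no set in $\nw(S\cup S')$ is feasible.} Then all size-$q$ subsets of $S\cup S'$ are infeasible and therefore mutually indifferent. Any $s\in S\setminus S'$ and $s'\in S'\setminus S$ work, because both exchanged sets are again infeasible size-$q$ subsets of $S\cup S'$.

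\emph{Case 2: some set in $\nw(S\cup S')$ is feasible.} Since $S$ is weakly preferred to that set, $S$ must be feasible, and likewise $S'$. It then suffices to find an exchange after which both new sets are feasible, since all feasible sets are indifferent. Each student has exactly one trait, so $\sum_k n_k(T)=\abs{T}$ for every $T$, and
\[
n(S)-n(S')=n(S\setminus S')-n(S'\setminus S).
\]
\emph{Subcase 2a: $n(S)=n(S')$.} Then $n(S\setminus S')=n(S'\setminus S)$. Because $S\setminus S'$ is nonempty, some trait $t_k$ is held by a student $s\in S\setminus S'$ and by a student $s'\in S'\setminus S$. Swapping $s$ and $s'$ leaves both compositions unchanged, so both exchanged sets stay feasible.

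\emph{Subcase 2b: $n(S)\neq n(S')$.} Since $\sum_k n_k(S)=\sum_k n_k(S')=q$, there are traits $k\neq j$ with $n_k(S)>n_k(S')$ and $n_j(S)<n_j(S')$. By the displayed identity, $n_k(S\setminus S')>0$ and $n_j(S'\setminus S)>0$. So we can pick $s\in S\setminus S'$ with trait $t_k$ and $s'\in S'\setminus S$ with trait $t_j$. The set $(S\setminus\{s\})\cup\{s'\}$ changes only coordinates $k$ and $j$ of $n(S)$, and
\[
L_k\le n_k(S')\le n_k(S)-1,\qquad n_j(S)+1\le n_j(S')\le U_j.
\]
The two bounds that could fail are $n_k(S)-1\ge L_k$ and $n_j(S)+1\le U_j$, and both hold by these inequalities, so the new set is feasible. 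Symmetrically, $(S'\setminus\{s'\})\cup\{s\}$ changes only coordinates $j$ and $k$ of $n(S')$, and
\[
L_j\le n_j(S)\le n_j(S')-1,\qquad n_k(S')+1\le n_k(S)\le U_k,
\]
so it is feasible as well. Hence both exchanged sets are indifferent to the originals.

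The main point requiring care is Subcase 2b: the exchange must move each set's composition toward the other's. That is exactly why each swapped count stays sandwiched between the two feasible counts $n_k(S),n_k(S')$ or $n_j(S),n_j(S')$, and hence stays within $[L,U]$. The single-trait assumption is essential here, since it makes each swap change exactly one unit in each of two coordinates.
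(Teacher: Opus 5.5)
Your proof is correct and follows the paper's argument essentially step for step. Completeness gives the upper-bound property, and the all-infeasible case is trivial. In the feasible case you split into identical compositions (swap two same-trait students) versus a trait $t_k$ over-represented in $S$ and a trait $t_j$ over-represented in $S'$, with the swapped counts sandwiched between the two feasible counts. Splitting on whether some set in $\nw(S\cup S')$ is feasible, rather than on whether $S$ is feasible, is equivalent under the maximizer antecedent. Your identity $n(S)-n(S')=n(S\setminus S')-n(S'\setminus S)$ simply makes explicit why the students the paper picks exist.
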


The upper-bound property is satisfied because the dichotomous distributional preference is complete. 
The maximizer property follows because all feasible sets are mutually indifferent and admit symmetric exchanges, as do all infeasible sets.

However, this strict bifurcation has an important limitation: the dichotomous distributional preference can fail the coherence property. Consequently, the induced choice rule can violate path independence, as the following example demonstrates.

\begin{example}\label{ex:dichotomous_floors}
Suppose $q=3$ and there are two protected traits, $t_1$ and $t_2$. The bounds are given by $L_1=2$, $L_2=0$, $U_1=U_2=3$. Let $\mathcal{S}=\{s_1,s_2,s_3,s_4,s_5\}$, where $s_4$ and $s_5$ have trait $t_1$, while $s_1$, $s_2$, and $s_3$ have trait $t_2$. Suppose the priority ranking is $s_1\mathrel{\priority}s_2\mathrel{\priority}s_3\mathrel{\priority}s_4\mathrel{\priority}s_5$.

The frontier of the entire applicant pool is 
\[
\front(\mathcal{S})=\{\{s_1,s_4,s_5\}, \{s_2,s_4,s_5\}, \{s_3,s_4,s_5\}\},
\]
so the distributional choice rule yields $\ch^\priority(\mathcal{S})=\{s_1,s_4,s_5\}$.

Now consider the applicant pool after $s_5$ is removed. In this case, the lower bound for trait $t_1$ cannot be satisfied, so all subsets of the applicant pool are infeasible. The frontier of the reduced applicant pool is 
\[
\front(\mathcal{S}\setminus\{s_5\})=\{\{s_1,s_2,s_3\}, \{s_1,s_2,s_4\}, \{s_1,s_3,s_4\},\{s_2,s_3,s_4\}\},
\]
so the distributional choice rule yields $\ch^\priority(\mathcal{S}\setminus\{s_5\})=\{s_1,s_2,s_3\}$. Because $s_4$ is admitted from the larger pool but rejected after $s_5$ is removed, the choice rule violates substitutability, and hence path independence.
\end{example}

The problem is that the dichotomous preference treats all constraint violations as equivalent. A set that violates a bound by one student is ranked the same as a set that violates a bound by 10 students. To recover path independence, the distributional preference must compare violations more finely.

We now define a graded floors-and-ceilings distributional preference. For each trait $t_k$ and each set $S\subseteq\mathcal{S}$, define the error function
\begin{equation}\label{eq:error_function}
    \varepsilon_k(n_k(S))=
\begin{cases}
-\eta(L_k-n_k(S)) & \text{if } n_k(S)<L_k,\\
0 & \text{if } L_k\leq n_k(S)\leq U_k,\\
-(n_k(S)-U_k) & \text{if } n_k(S)>U_k,
\end{cases}
\end{equation}
where $\eta\geq 0$ is a penalty parameter.\footnote{It is possible to further generalize the error function by choosing a parameter $\eta_k\geq 0$ for each $k=1,\ldots, K$.} The error for trait $t_k$ is zero when the number of students with that trait lies within the desired lower and upper bounds. Falling below the floor is penalized at rate $\eta$, while exceeding the ceiling is penalized at rate one. 

For sets $S,S'\subseteq\mathcal S$ with $\abs{S}=\abs{S'}=q$, define the floors-and-ceilings distributional preference by
\[
S\succsim S'
\quad\Longleftrightarrow\quad
\sum_{k=1}^K \varepsilon_k(n_k(S))\geq \sum_{k=1}^K \varepsilon_k(n_k(S')).
\]
As with the dichotomous distributional preference, the floors-and-ceilings distributional preference ranks sets that satisfy all floor and ceiling bounds as distributionally maximal. Where the two preferences differ, however, is in how they handle constraint violations. Unlike the dichotomous distributional preference, here, the school treats the bounds as soft constraints, and trades off floor and ceiling violations based on the parameter $\eta$. 

\begin{proposition}\label{prop:floors_ceilings}
For any $\eta\geq 0$, the floors-and-ceilings distributional preference satisfies the upper-bound property, the maximizer property, and the coherence property.
\end{proposition}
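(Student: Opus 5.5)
The plan is to reduce everything to a single separable, concave objective on size-$q$ sets and then verify each property directly. Write $\Phi(S)\coloneqq\sum_{k}\varepsilon_k(n_k(S))$. Since every student has exactly one trait, $\Phi$ depends only on the composition $n(S)$, and $\sum_k n_k(S)=q$ for $\abs{S}=q$. Each $\varepsilon_k$ is concave in $n_k$: it is piecewise linear with slopes $\eta\ge0$, then $0$, then $-1$, and these slopes are nonincreasing. The preference is complete, being induced by the real-valued $\Phi$, so the upper-bound property holds vacuously, exactly as for the index preference (\autoref{prop:index}).

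The main tool is an \emph{exchange lemma}. Take $S,S'$ of size $q$, $s\in S\setminus S'$ with trait $a$, and $s'\in S'\setminus S$ with trait $b$. Then
\[
\Phi\bigl((S\setminus\{s\})\cup\{s'\}\bigr)-\Phi(S)=\Delta^-_a(S)+\Delta^+_b(S),
\]
where $\Delta^-_a(S)=\varepsilon_a(n_a-1)-\varepsilon_a(n_a)$ and $\Delta^+_b(S)=\varepsilon_b(n_b+1)-\varepsilon_b(n_b)$ (with $\Delta=0$ when $a=b$). Concavity gives $\Delta^+_b(n)\le-\Delta^-_b(n)$, and the marginal gains are monotone in $n$.

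For the maximizer property, suppose $S$ and $S'$ are both optimal among $\nw(S\cup S')$. If some trait $a$ appears in both $S\setminus S'$ and $S'\setminus S$, swap those two students; the compositions are unchanged, so both swapped sets are indifferent to the originals. Otherwise, since $\abs{S}=\abs{S'}$, choose $a$ with $n_a(S)>n_a(S')$ and $b$ with $n_b(S')>n_b(S)$. Optimality of $S$ gives $\Delta^-_a(S)+\Delta^+_b(S)\le0$, and optimality of $S'$ gives $\Delta^-_b(S')+\Delta^+_a(S')\le0$. Then $n_a(S')\le n_a(S)-1$ and concavity give $\Delta^+_a(S')\ge-\Delta^-_a(S)$, and symmetrically $\Delta^+_b(S)\ge-\Delta^-_b(S')$. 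Adding the four inequalities forces all of them to hold with equality, which yields $(S\setminus\{s\})\cup\{s'\}\sim S$ and $(S'\setminus\{s'\})\cup\{s\}\sim S'$. This is the standard M$^\natural$-concave (separable concave) exchange argument.

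For coherence, the same inequalities are used, but $s$ must be chosen from $(S\setminus S')\setminus\{\hat s\}$. Let $\hat a$ be the trait of $\hat s$. Condition (i) says every optimum of $T=S\cup S'$ contains $\hat s$. Hence no other student of trait $\hat a$ lies in $T\setminus S$; otherwise swapping $\hat s$ for that student would give another optimum without $\hat s$. So all trait-$\hat a$ students of $T$ are in $S$. Since $S'\subseteq T\setminus\{\hat s\}$, we get $n_{\hat a}(S')\le n_{\hat a}(S)-1$.

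I then split into cases.

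\emph{Case 1: some trait $a\ne\hat a$ has $n_a(S\setminus S')>0$ and $n_a(S')<n_a(S)$.} Run the maximizer argument with this $a$ and a trait $b$ with $n_b(S')>n_b(S)$. Optimality of $S'$ holds within $T\setminus\{\hat s\}$, and the swapped set $(S'\setminus\{s'\})\cup\{s\}$ avoids $\hat s$, so the argument goes through.

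\emph{Case 2: common traits.} If some trait appears in both $(S\setminus S')\setminus\{\hat s\}$ and $S'\setminus S$, swap within that trait. Because $(S\setminus S')\setminus\{\hat s\}\neq\varnothing$, at least one of these two cases can be arranged.

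\emph{Case 3: the only trait with a surplus in $S$ relative to $S'$ is $\hat a$.} This is the case I expect to be the main obstacle. Here every $s\in(S\setminus S')\setminus\{\hat s\}$ has trait $\hat a$ or a trait $c$ with $n_c(S)\le n_c(S')$. The plan is to handle it with the strictness in (i). Take $s$ of trait $c$ from $(S\setminus S')\setminus\{\hat s\}$ and $s'\in S'\setminus S$ of trait $b$. Compare the swap of $\hat s$ for $s'$, which is strictly worse for $S$ by (i), with the swap of $s$ for $s'$, using concavity across the gaps $n_{\hat a}(S')<n_{\hat a}(S)$ and $n_c$. If $c=\hat a$, then $s$ is interchangeable with $\hat s$ as far as $\Phi$ is concerned, which contradicts (i) via the composition argument above. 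This pins down an equality-forcing chain analogous to the maximizer case.

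If this case analysis becomes messy, the fallback is to show that the frontier family is the base family of a matroid that is a truncation of a generalized partition structure, and then invoke \autoref{thm:genpi} in reverse. To do that, I would verify path independence of $\ch^\priority$ directly as a greedy rule for a gross-substitutes (M$^\natural$-concave) valuation, since $\Phi$ plus a cardinality constraint is M$^\natural$-concave.
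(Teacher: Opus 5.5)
Your upper-bound and maximizer arguments are correct. Your overall route, an exchange inequality driven by the nonincreasing marginal increments of each $\varepsilon_k$, is also the paper's: it notes that $\sum_k\varepsilon_k$ is separable with weakly decreasing increments, hence $q$-ordinally concave, and applies \autoref{prop:ordinal_concavity}.

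The gap is in coherence. You claim that, since $(S\setminus S')\setminus\{\hat s\}\neq\varnothing$, Case~1 or Case~2 can always be arranged. This is false. Suppose every student of $(S\setminus S')\setminus\{\hat s\}$ has $\hat s$'s trait $\hat a$. Then Case~2 is unavailable, because you showed $S'\setminus S$ has no trait-$\hat a$ student. Case~1 is unavailable because you required $a\neq\hat a$. The other alternative in Case~3, a student of $S\setminus S'$ with a trait $c$ satisfying $n_c(S)\le n_c(S')$, is already Case~2, since then $n_c(S'\setminus S)\ge n_c(S\setminus S')\ge 1$.

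Your handling of this residual case does not work as written. A second trait-$\hat a$ student $s\in S$ does not by itself contradict condition (i): your composition argument only excludes trait-$\hat a$ students from $S'\setminus S$. The ``equality-forcing chain'' is announced rather than carried out. The fallback through \autoref{thm:genpi} and gross substitutes is also only a sketch resting on external results.

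The missing observation is that the restriction $a\neq\hat a$ is unnecessary; your Case~1 argument needs only $s\neq\hat s$. Take $s\in(S\setminus S')\setminus\{\hat s\}$ of trait $a$ with $n_a(S)>n_a(S')$, and $s'\in S'\setminus S$ of a trait $b$ with $n_b(S')>n_b(S)$.
\begin{itemize}
\item The set $(S\setminus\{s\})\cup\{s'\}$ contains $\hat s$, so condition (ii) gives $\Delta^-_a(S)+\Delta^+_b(S)\le 0$.
\item The set $(S'\setminus\{s'\})\cup\{s\}$ avoids $\hat s$, so condition (iii) gives $\Delta^-_b(S')+\Delta^+_a(S')\le 0$.
\item The two concavity inequalities use only $n_a(S')\le n_a(S)-1$ and $n_b(S)\le n_b(S')-1$, and both hold when $a=\hat a$.
\end{itemize}

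So the clean proof is as follows. Take any $s\in(S\setminus S')\setminus\{\hat s\}$. If its trait occurs in $S'\setminus S$, swap within that trait. Otherwise its trait $a$ satisfies $n_a(S)>n_a(S')$, and the equality chain yields both indifferences. This is essentially the paper's argument, and it never uses condition (i) or any of your analysis of $\hat a$.

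When $a=\hat a$, the resulting indifference transfers by equal composition to $(S\setminus\{\hat s\})\cup\{s'\}\sim S$, which contradicts (i). So the residual case is in fact vacuous, but you do not need to notice this.
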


Because the floors-and-ceilings distributional preference satisfies all three structural properties, the results above imply that the induced distributional choice rule satisfies all four choice axioms, in particular path independence. When $\eta>q$, the induced distributional choice rule closely parallels the standard floors-and-ceilings choice rule \citep{ehayeyi14}: under this parameterization, a single floor violation is more costly than any possible collection of ceiling violations within a $q$-student set, so the school first minimizes floor violations, and then, subject to that, minimizes ceiling violations. 

Returning to \autoref{ex:dichotomous_floors}, consider the floors-and-ceilings distributional preference and suppose $\eta>0$. Notice that the ceiling bounds are never violated, so an error arises only when the floor bounds are violated. In the original applicant pool $\mathcal S$, the frontier is unchanged:
\[
\front(\mathcal{S})=\{\{s_1,s_4,s_5\}, \{s_2,s_4,s_5\}, \{s_3,s_4,s_5\}\},
\]
since these are exactly the sets with total error of zero. Thus, the distributional choice rule again selects $\ch^\priority(\mathcal S)=\{s_1,s_4,s_5\}$. The difference appears after $s_5$ is removed. In the reduced pool $\mathcal S\setminus\{s_5\}$, no set can satisfy the lower bound $L_1=2$, but any set containing $s_4$ has one unit of floor violation, while any set omitting $s_4$ has two units of floor violation. Hence the frontier of the reduced pool is
\[
\front(\mathcal S\setminus\{s_5\})=\{\{s_1,s_2,s_4\},\{s_1,s_3,s_4\},\{s_2,s_3,s_4\}\},
\]
and the distributional choice rule selects $\ch^\priority(\mathcal S\setminus\{s_5\})=\{s_1,s_2,s_4\}$. Thus, $s_4$ remains admitted after $s_5$ is removed. The path-independence failure in \autoref{ex:dichotomous_floors} disappears because the floors-and-ceilings preference distinguishes between different infeasible sets by preferring sets that come closer to meeting the violated bound.

\subsection{Discrete Concavity}\label{subsec:discrete_concavity}
We conclude the application section by connecting the structural properties of distributional preferences to discrete convex analysis. The previous examples were defined through score indices, particular policies, or matroidal representations. Here, we instead begin with a value function over a cohort's trait-composition and impose a discrete exchange condition on that value function. This gives a general way to verify the maximizer and coherence properties for distributional preferences that depend on the composition of the admitted cohort.

Suppose each student belongs to exactly one category in $\mathcal T$. For each set $S\subseteq\mathcal S$, recall $n(S)\coloneqq (n_1(S),\ldots,n_K(S))$ denotes the set's trait-composition. Let $\Phi:\mathbb Z_+^K\to\mathbb R$ be a distributional value function over trait-compositions. For sets $S,S'\subseteq\mathcal S$ with $\abs{S}=\abs{S'}=q$, define the induced distributional preference by
\[
S\succsim S'\quad\Longleftrightarrow\quad \Phi(n(S))\geq \Phi(n(S')).
\]
Since this preference is represented by a real-valued function on $q$-student sets, it is complete. Hence, the upper-bound property holds vacuously. The substantive restrictions are the maximizer and coherence properties, which require the value function $\Phi$ to have an appropriate exchange structure. To that end, we employ an ordinal version of $M$-concavity that is adapted to sets of fixed cardinality. Let $e_k$ denote the $k$-th standard basis vector in $\mathbb Z^K$.

\begin{definition}\label{def:ordinal-exchange}
The value function $\Phi:\mathbb Z_+^K\to\mathbb R$ satisfies \textbf{$q$-ordinal concavity} if, for any trait-compositions $x,y\in\mathbb Z_+^K$ with $\sum_{k=1}^K x_k=\sum_{k=1}^K y_k=q$, and for every $k\in\{1,\ldots, K\}$ such that $x_k>y_k$, there exists $k'\in\{1,\ldots, K\}$ with $x_{k'}<y_{k'}$ such that at least one of the following holds:
\begin{enumerate}[label={$(\roman*)$}, parsep=0pt, itemsep=2pt]
    \item $\Phi(x-e_k+e_{k'})>\Phi(x)$,
    \item $\Phi(y+e_k-e_{k'})>\Phi(y)$, or
    \item $\Phi(x-e_k+e_{k'})=\Phi(x)$ and $\Phi(y+e_k-e_{k'})=\Phi(y)$.
\end{enumerate}
\end{definition}
In economic terms, this condition is a discrete concavity requirement: the school cannot strictly prefer both of two cohorts to the intermediate compositions obtained by moving them one step toward each other.\footnote{This definition is adapted from the ordinal concavity studied by \citet{hakoyeyo2022} and restricted to fixed-cardinality sets.} When one cohort has more students with trait $t_k$ and fewer students with trait $t_{k'}$ than another, exchanging these traits moves each cohort's composition one step toward the other. The $q$-ordinal concavity condition requires that such an exchange either strictly improves the value of one cohort or preserves the value of both.

\begin{proposition}\label{prop:ordinal_concavity}
Suppose that $\Phi$ satisfies $q$-ordinal concavity. Then the distributional preference induced by $\Phi$ satisfies the upper-bound property, the maximizer property, and the coherence property.
\end{proposition}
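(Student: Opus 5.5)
The plan is to work entirely with trait-compositions. Since $\succsim$ is represented by $\Phi(n(\cdot))$ on $q$-sets, it is complete, so the upper-bound property holds vacuously, as already noted in the text. For the two exchange properties I will use one basic observation repeatedly.

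Fix distinct $q$-sets $S,S'$ and write $x=n(S)$ and $y=n(S')$. For each trait $j$,
\[
\abs{(S\setminus S')\cap\mathcal S_j}-\abs{(S'\setminus S)\cap\mathcal S_j}=x_j-y_j .
\]
Swapping $s\in S\setminus S'$ of trait $k$ for $s'\in S'\setminus S$ of trait $k'$ produces two $q$-subsets of $S\cup S'$. Their compositions are $x-e_k+e_{k'}$ and $y+e_k-e_{k'}$, or $x$ and $y$ when $k=k'$.

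\textbf{Maximizer property.} Suppose $S$ and $S'$ weakly dominate every set in $\nw(S\cup S')$.

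If $x=y$, pick any $s\in S\setminus S'$, which exists because $S\neq S'$. Its trait $j$ satisfies $\abs{(S'\setminus S)\cap\mathcal S_j}=\abs{(S\setminus S')\cap\mathcal S_j}\ge1$, so there is $s'\in S'\setminus S$ of the same trait. The swap leaves both compositions unchanged, which gives the two indifferences.

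If $x\neq y$, choose $k$ with $x_k>y_k$. The identity above yields $s\in S\setminus S'$ of trait $k$. $q$-ordinal concavity gives $k'$ with $x_{k'}<y_{k'}$, and hence some $s'\in S'\setminus S$ of trait $k'$. Both exchanged sets lie in $\nw(S\cup S')$, so optimality of $S$ and $S'$ excludes alternatives (i) and (ii) of the definition. Alternative (iii) must therefore hold, and it is exactly the required pair of indifferences.

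\textbf{Coherence property.} Let $T=S\cup S'$, let $\hat s$ satisfy antecedents (i)--(iii), and let $\hat k$ be the trait of $\hat s$.

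First, $x\neq y$. Indeed, $S'\in\nw(T)$ omits $\hat s$, so (i) gives $\Phi(x)>\Phi(y)$.

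The key point is to find $s\in(S\setminus S')\setminus\{\hat s\}$ and a partner $s'\in S'\setminus S$ such that either the traits of $s$ and $s'$ coincide, or $s$ has a trait $k$ with $x_k>y_k$ and $s'$ has the trait $k'$ supplied by $q$-ordinal concavity. Once such a pair is found, the exchanged set $(S\setminus\{s\})\cup\{s'\}$ still contains $\hat s$, so by (ii) it cannot be strictly better than $S$. The other exchanged set $(S'\setminus\{s'\})\cup\{s\}$ avoids $\hat s$, so it lies in $\nw(T\setminus\{\hat s\})$, and by (iii) it cannot be strictly better than $S'$. In the concavity case this rules out (i) and (ii) of the definition, forcing (iii); in the same-trait case the compositions, and hence the values, are unchanged. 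Either way we obtain the required indifferences.

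I expect the main obstacle to be guaranteeing that the chosen $s$ differs from $\hat s$. I will split into cases.

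Case 1: some $k\neq\hat k$ has $x_k>y_k$. Then a trait-$k$ student of $S\setminus S'$ is automatically different from $\hat s$, and we apply concavity.

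Case 2: $x_{\hat k}-y_{\hat k}\ge2$. Then $S\setminus S'$ contains at least two trait-$\hat k$ students, and we take one other than $\hat s$ and apply concavity.

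Case 3: the only surplus trait is $\hat k$, with surplus exactly one. Take any $s\in(S\setminus S')\setminus\{\hat s\}$, which is nonempty by hypothesis, and let $j$ be its trait. If $j\neq\hat k$, then $x_j\le y_j$, so the identity gives a trait-$j$ student in $S'\setminus S$. If $j=\hat k$, then $S\setminus S'$ contains both $s$ and $\hat s$ of trait $\hat k$, while the surplus is one, so $S'\setminus S$ contains at least one trait-$\hat k$ student. In either subcase a same-trait swap is available, which completes the plan.
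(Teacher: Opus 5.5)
Your proof is correct and follows essentially the same route as the paper. Completeness gives the upper-bound property, and for both exchange properties you either swap same-trait students or invoke $q$-ordinal concavity at a surplus trait, with optimality (respectively antecedents (ii) and (iii) of coherence) ruling out alternatives (i) and (ii). The only difference is organizational: the paper picks an arbitrary $s\in(S\setminus S')\setminus\{\hat s\}$ first and notes that, if $S'\setminus S$ has no student of the same trait, then that trait is automatically in surplus, which makes your three-case split and the preliminary observation $x\neq y$ unnecessary.
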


A simple and important class is separable discrete-concave value functions
\[
\Phi(x)=\sum_{k=1}^K \phi_k(x_k),
\]
where, for each $k\in\{1,\ldots,K\}$, the function $\phi_k:\mathbb Z_+\to\mathbb R$ has weakly decreasing marginal increments, i.e., $\phi_k(x_k+1)-\phi_k(x_k)$ is weakly decreasing in $x_k$.\footnote{In fact, when each $\phi_k$ has weakly decreasing marginal increments, the separable value function $\Phi$ satisfies the fixed-cardinality exchange inequality associated with $M$-concavity. This cardinal exchange condition implies $q$-ordinal concavity.} Since the marginal value of admitting an additional student of a given trait decreases as that trait becomes more represented in the cohort, such a separable value function $\Phi$ satisfies $q$-ordinal concavity. 

Notice that the floors-and-ceilings preference in the previous section is a special case of a preference induced by a $q$-ordinally concave value function where each $\phi_k$ is given by \eqref{eq:error_function}. In that case, the marginal increments are given by 
\[
\phi_k(x_k+1)-\phi_k(x_k)=\begin{cases}
    \eta & \mbox{ if } x_k<L_k\\
    0 & \mbox{ if } L_k\leq x_k<U_k\\ 
    -1 & \mbox{ if } x_k\geq U_k,
\end{cases}
\]
which is indeed weakly decreasing in $x_k$. Thus, \autoref{prop:ordinal_concavity} immediately implies \autoref{prop:floors_ceilings}.   

More broadly, $q$-ordinal concavity imposes the same exchange logic on trait-compositions that the maximizer and coherence properties impose on the exchange of students across cohorts. Hence, unlike the score-based preferences in \autoref{subsec:score}, which rely on monotonicity properties of the index function, this class relies on local exchange properties of the value function. It therefore provides an alternative route for verifying that a distributional preference satisfies the three structural properties needed for implementation.

\section{Conclusion}\label{sec:conclusion}

This paper develops a preference-based approach to distributional objectives in school choice and related selection problems. Rather than encoding a school's distributional objectives directly through reserves, quotas, or institutional choice rules, we take as primitive the school's distributional preference over cohorts. This approach accommodates overlapping and intersectional identities and makes it possible to ask when a distributional objective can be reconciled with priority rankings, consistent choice across applicant pools, and centralized matching.

We identify an upper-bound property and two exchange properties. The upper-bound property ensures that, within each applicant pool, all distributionally maximal non-wasteful cohorts are indifferent and strictly dominate the remaining non-wasteful cohorts. Conditional on this property, the maximizer property holds if and only if, for every priority ranking, the greedy distributional choice rule is the unique choice rule satisfying non-wastefulness, distributional maximality, and freedom from justified envy. Conditional on the same property, the maximizer and coherence properties hold if and only if, for every priority ranking, that rule satisfies path independence. When every school's distributional preference satisfies all three structural properties, deferred acceptance based on these rules is the unique mechanism satisfying the corresponding three axioms together with individual rationality and strategy-proofness.

These results also delineate where the characterizations fail. Conditional on the upper-bound property, failure of the maximizer property implies that there exists a priority ranking for which another choice rule satisfies the three choice axioms, while failure of either exchange property implies that there exists a priority ranking for which the distributional choice rule fails path independence. Such failures undermine the choice-theoretic structure supporting the deferred-acceptance characterization.

The applications show that both familiar and less conventional distributional objectives fit within the framework. Standard and overlapping reserves arise from matroidal distributional preferences, while a graded formulation of floors and ceilings also satisfies the three structural properties. The framework additionally accommodates score-based preferences that assign different values to intersectional profiles, preferences that value linearly non-redundant combinations of traits, and preferences that value linkages across traits. It therefore provides preference foundations for canonical policies while supplying structured formulations for distributional objectives beyond conventional reserve systems.

Our analysis suggests a preference-first approach to policy design. Rather than selecting a reserve policy and subsequently investigating its properties, a designer can begin with the institution's distributional preference and ask whether it satisfies the structural properties supporting the desired choice and matching guarantees. The framework thereby separates the normative question of which cohorts are distributionally desirable from the implementation question of which objectives can be reconciled with individual priorities and centralized matching.

\clearpage
\appendix
\addcontentsline{toc}{section}{Appendices}
\renewcommand{\thesubsection}{Appendix \Alph{subsection}}

\section*{Appendix}

\subsection{Useful Results from Matroid Theory}\label{sec:matroid_theory}

We first state some equivalent characterizations for the bases of a matroid, which we use in our proofs. Fix a ground set $\mathcal{E}$. A collection of sets $\mathcal{B}\subseteq 2^{\mathcal{E}}$ forms the bases of a matroid if and only if the following hold:
\begin{itemize}[parsep=0pt, listparindent=1em, itemsep=0pt]
    \item[(B1)] \textit{Non-emptiness}: $\mathcal{B}\neq\varnothing$.
    \item[(B2)] \textit{Exchange}: For every $B,B'\in\mathcal{B}$ and $e\in B\setminus B'$, there exists $e'\in B'\setminus B$ such that $(B\setminus\{e\})\cup\{e'\}\in\mathcal{B}$.
\end{itemize}
One may replace (B2) with the following \emph{strong simultaneous exchange} property \citep[Lemma~2.2.2]{oxley}: for any $B,B'\in\mathcal{B}$ and $e\in B\setminus B'$, there exists $e'\in B'\setminus B$ such that  
\[
(B\setminus\{e\})\cup\{e'\}\in \mathcal{B} \quad \text{ and } \quad (B'\setminus\{e'\})\cup\{e\}\in\mathcal{B}.
\]

Similarly, one may replace the exchange axiom (B2) with the following \emph{weak simultaneous exchange} property \citep[Theorem 4.3]{Murota:SIAM:2003}: for any distinct $B,B'\in\mathcal{B}$, there exists $e\in B\setminus B'$ and $e'\in B'\setminus B$ such that 
\[
(B\setminus\{e\})\cup\{e'\}\in \mathcal{B} \quad \text{ and } \quad (B'\setminus\{e'\})\cup\{e\}\in\mathcal{B}.
\]

Given a matroid $M=(\mathcal{E},\mathcal{I})$, where $\mathcal{I}$ is its collection of independent sets, and a priority ranking $\priority$ on $\mathcal{E}$, the \textbf{standard matroid greedy algorithm} initializes $G=\varnothing$ and scans the elements of $\mathcal{E}$ in $\priority$-order. When an element $e$ is considered, the algorithm adds $e$ to $G$ if and only if $G\cup\{e\}\in\mathcal{I}$. The algorithm terminates with a maximal independent set, hence a basis. We call this set the \textbf{greedy basis} of $M$ under $\priority$ and denote it by $G_{M}^{\priority}$. We use the following classic result regarding greedy algorithms on matroids.

\begin{lemma}[\citealp*{gale1968}]\label{lem:Gale}
Let $\mathcal{E}$ be a ground set, let $M=(\mathcal{E},\mathcal{I})$ be a matroid on $\mathcal{E}$, and let $\priority$ be a priority ranking over $\mathcal{E}$. Then $G^\priority_M$ priority dominates every $I\in\mathcal{I}$.
\end{lemma}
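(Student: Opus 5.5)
We prove the Gale lemma by a counting argument that rests on the augmentation axiom (I3). Write the greedy basis as $G=G^\priority_M=\{g_1,\dots,g_m\}$ with $g_1\mathrel{\priority}g_2\mathrel{\priority}\cdots\mathrel{\priority}g_m$. Fix any $I\in\mathcal{I}$ and write $I=\{e_1,\dots,e_n\}$ in the same priority order.

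First we check the cardinality requirement. $G$ is a basis, and every basis has maximum size among independent sets: if some $I$ had $\abs{I}>\abs{G}$, then (I3) would supply an $e\in I\setminus G$ with $G\cup\{e\}\in\mathcal{I}$, contradicting the maximality of $G$. Hence $m\geq n$.

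Next we verify the rank-by-rank comparison. Suppose, toward a contradiction, that for some $k\leq n$ the element $e_k$ has strictly higher priority than $g_k$. Take the smallest such $k$; the argument below in fact works for any such $k$. Set
\[
A\coloneqq\{g_1,\dots,g_{k-1}\} \quad\text{and}\quad B\coloneqq\{e_1,\dots,e_k\}.
\]
Both sets are independent by the hereditary property (I2), and $\abs{B}=k>k-1=\abs{A}$. By (I3) there is some $e_j\in B\setminus A$ with $A\cup\{e_j\}\in\mathcal{I}$. Since $j\leq k$, we have $e_j\mathrel{\priority}g_k$ or $e_j=e_k$, so in either case $e_j$ has strictly higher priority than $g_k$.

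Now consider the run of the greedy algorithm. It scans $e_j$ before $g_k$. At that moment the current greedy set $G_t$ satisfies $G_t\subseteq A$, because only $g_1,\dots,g_{k-1}$ can precede $e_j$ in the scan. By (I2), $G_t\cup\{e_j\}\subseteq A\cup\{e_j\}$ is independent, so the algorithm adds $e_j$. Since $e_j\notin A$, it would then be one of $g_1,\dots,g_{k-1}$. That is, $e_j$ would be a member of $G$ with priority strictly above $g_k$ and yet not in $\{g_1,\dots,g_{k-1}\}$. But $g_1,\dots,g_{k-1}$ are exactly the elements of $G$ ranked above $g_k$, so this is a contradiction.

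It follows that for every $k\leq n$, $g_k$ is identical to or has higher priority than $e_k$, so $G$ priority dominates $I$. The only delicate step is the bookkeeping showing that the greedy set at the time $e_j$ is scanned lies inside $A$. That inclusion follows directly from the ordering, so I expect no real obstacle.
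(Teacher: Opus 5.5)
Your proof is correct. The paper does not prove this lemma at all: it quotes it from Gale (1968) and uses it as a black box, so your argument supplies what the paper leaves to the literature, and it is the standard augmentation proof. The cardinality step is fine. It relies on the paper's remark that the greedy output is a maximal independent set, which follows from the same scanning argument you use later. The key inclusion $G_t\subseteq A$ holds because every element of $G$ scanned before $e_j$ outranks $e_j$, and hence outranks $g_k$.

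One sentence is worded backwards. It is not ``since $e_j\notin A$'' that $e_j$ would be among $g_1,\dots,g_{k-1}$. Rather, once $e_j$ is added, $e_j\in G$ and $e_j\mathrel{\priority}g_k$ force $e_j\in A$, which contradicts $e_j\in B\setminus A$. Your next sentence says this correctly, so it is only a wording slip.

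For comparison, there is a shorter route using a fact the paper itself invokes in the proof of \autoref{lem:nested-greedy-quotient}. Let $E_t$ be the set of the $t$ highest-priority elements of $\mathcal{E}$. Then $\abs{G^\priority_M\cap E_t}=\rank_M(E_t)\geq\abs{I\cap E_t}$ for every $t$, because $G^\priority_M\cap E_t$ is a maximal independent subset of $E_t$ and $I\cap E_t$ is independent. Moreover, the prefix condition $\abs{S\cap E_t}\geq\abs{S'\cap E_t}$ for all $t$ is equivalent to $S$ priority dominating $S'$. That version is essentially two lines given rank-function facts, but it needs the equicardinality of maximal independent subsets of a set, itself a consequence of (I3). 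Yours works directly from (I2) and (I3) with an explicit exchange, so it is more self-contained.
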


Given a matroid $M=(\mathcal{E},\, \mathcal{I})$, its rank function $\rank_M:2^\mathcal{E}\to\mathbb Z_+$ is defined by
\[
\rank_M(E)\coloneqq \max_{I\subseteq E,\ I\in\mathcal I}\abs{I} 
\]
for each $E\subseteq \mathcal{E}$. We refer to $\rank_M(\mathcal{E})$ as the \textbf{matroid rank} of $M$.

We will use a quotient relation between two matroids on a common ground set. In our application, the relevant quotients have ranks that differ by one. Given two matroids $M^{c}$ and $M^{d}$ on a finite common ground set $\mathcal{E}$, we say that $M^{c}$ is a \textbf{quotient} of $M^{d}$ if for every  $E'\subseteq E\subseteq \mathcal{E}$, 
\begin{equation}\label{eq:quotient-rank}
\rank_{M^c}(E)-\rank_{M^c}(E')\leq  \rank_{M^d}(E)-\rank_{M^d}(E').
\end{equation}
Furthermore, if $\rank_{M^d}(\mathcal{E})=\rank_{M^c}(\mathcal{E})+1$, we say that $M^c$ is a \textbf{rank-one quotient} of $M^d$.

There is an equivalent characterization of rank-one quotients via an elementary construction using deletion and contraction operations on matroids. Formally, given a matroid $M=(\mathcal E,\mathcal I)$ and an element $e\in \mathcal{E}$, let 
\[
\mathcal{I}^{\setminus e}\coloneqq \{I\in \mathcal{I}: I\subseteq \mathcal{E}\setminus\{e\}\},
\]
and for $e$ satisfying $\{e\}\in\mathcal{I}$, let
\[
\mathcal{I}^{/e}\coloneqq \{I\subseteq \mathcal{E}\setminus\{e\} : I\cup \{e\}\in\mathcal I\}.
\]
The first collection consists of the independent sets of $M$ that do not contain $e$. The second consists of the sets obtained by taking an independent set of $M$ that contains $e$ and removing $e$. Both $\mathcal{I}^{\setminus e}$ and $\mathcal{I}^{/e}$ define a collection of independent sets on the reduced ground set $\mathcal{E}\setminus\{e\}$. Moreover, by the hereditary property, $\mathcal{I}^{/e}\subseteq \mathcal{I}^{\setminus e}$. 

The matroid $M^{\setminus e}\coloneqq(\mathcal{E}\setminus \{e\},\, \mathcal{I}^{\setminus e})$ is the \textbf{deletion of $e$ from $M$}, while the matroid $M^{/e}\coloneqq (\mathcal{E}\setminus \{e\},\, \mathcal{I}^{/e})$ is the \textbf{contraction of $M$ by $e$}.

Given two matroids $M^{c}$ and $M^{d}$ on a finite common ground set $\mathcal{E}$, $M^{c}$ is a rank-one quotient of $M^{d}$ if and only if there exists an element $e\notin\mathcal E$ and a matroid $M=(\mathcal E\cup\{e\},\mathcal I)$ with $\{e\}\in\mathcal I$ such that $M^{c}=M^{/e}$ is the contraction of $M$ by $e$ (hence ``c'' for contraction) and $M^{d}=M^{\setminus e}$ is the deletion of $e$ from $M$ (hence ``d'' for deletion) \citep[Section~7.3]{oxley}. 

The following two auxiliary facts are standard consequences of the theory of matroid quotients and flag matroids. We include direct proofs because the particular formulations used here are tailored to our application.

\begin{lemma}\label{lem:nested-greedy-quotient}% [Nested-greedy characterization of rank-one quotients]
Let $M^c$ and $M^d$ be matroids on the same finite ground set $\mathcal{E}$. Then $M^c$ is a quotient of $M^d$ if and only if $G_{M^c}^\priority\subseteq G_{M^d}^\priority$ for every priority ranking $\priority$ on $\mathcal{E}$.
\end{lemma}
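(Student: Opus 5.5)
Both directions will run through a single description of greedy bases by ranks.

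\emph{Greedy bases via ranks.} For a matroid $M$ and a ranking $\priority$, list the elements of $\mathcal{E}$ as $e_1\mathrel{\priority}e_2\mathrel{\priority}\cdots\mathrel{\priority}e_n$ and set $P_i\coloneqq\{e_1,\dots,e_i\}$, with $P_0=\varnothing$. We will show by induction on $i$ that $G^\priority_M\cap P_i$ is a maximal independent subset of $P_i$, and hence that $\abs{G^\priority_M\cap P_i}=\rank_M(P_i)$. The inductive step uses augmentation (I3): if $G\cap P_{i-1}$ has size $\rank_M(P_{i-1})$ and $\rank_M(P_i)=\rank_M(P_{i-1})+1$, then some independent subset of $P_i$ is larger than $G\cap P_{i-1}$. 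Augmentation then adds an element of $P_i\setminus P_{i-1}=\{e_i\}$, because adding any element of $P_{i-1}$ would contradict maximality. So the greedy algorithm admits $e_i$. Conversely, if the rank does not increase, $e_i$ cannot be added. This gives
\[
e_i\in G^\priority_M \iff \rank_M(P_i)-\rank_M(P_{i-1})=1 .
\]

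\emph{Quotient implies nested greedy bases.} Assume \eqref{eq:quotient-rank} and apply it with $E=P_i$ and $E'=P_{i-1}$. If $e_i\in G^\priority_{M^c}$, the left side equals $1$, so the right side is at least $1$. Ranks increase by at most one per element, so the right side equals $1$, and therefore $e_i\in G^\priority_{M^d}$. Hence $G^\priority_{M^c}\subseteq G^\priority_{M^d}$.

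\emph{Nested greedy bases imply quotient.} Now assume nestedness for every $\priority$. First reduce \eqref{eq:quotient-rank} to single-element steps. For $E'\subseteq E$, write $E\setminus E'=\{f_1,\dots,f_m\}$ and telescope along the chain $E'\subseteq E'\cup\{f_1\}\subseteq\cdots\subseteq E$. It then suffices to show that, for every $A\subseteq\mathcal{E}$ and $f\notin A$,
\[
\rank_{M^c}(A\cup\{f\})-\rank_{M^c}(A)\le \rank_{M^d}(A\cup\{f\})-\rank_{M^d}(A).
\]
To do this, choose $\priority$ so that the elements of $A$ come first in some order, then $f$, then everything else. Then $A=P_{\abs{A}}$ and $A\cup\{f\}=P_{\abs{A}+1}$. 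By the greedy characterization, the left side is $1$ exactly when $f\in G^\priority_{M^c}$. Nestedness gives $f\in G^\priority_{M^d}$, so the right side is also $1$. When the left side is $0$, the inequality is immediate.

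The main obstacle is the greedy-rank characterization, specifically the invariant that the greedy set restricted to each prefix is a maximal independent subset of that prefix. Once that holds, both directions reduce to choosing a priority ranking whose prefixes are the relevant sets.
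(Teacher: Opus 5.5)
Your proof is correct and essentially matches the paper's. The forward direction is identical; for the converse, the paper uses one ranking with blocks $E'$, $E\setminus E'$, and the rest, and compares $\abs{G^\priority_{M^c}\cap(E\setminus E')}\le\abs{G^\priority_{M^d}\cap(E\setminus E')}$ directly, whereas you telescope to single-element increments (a cosmetic difference), and you also prove the greedy--rank characterization $e_i\in G^\priority_M\iff \rank_M(P_i)-\rank_M(P_{i-1})=1$, which the paper simply asserts.
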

\begin{proof}
\noindent\textbf{($\Rightarrow$).}
Suppose $M^c$ is a quotient of $M^d$. Then \eqref{eq:quotient-rank} holds. Fix a priority ranking $\priority$ on $\mathcal{E}$, and write the elements of $\mathcal{E}$ as
\[
e_1\mathrel{\priority} e_2\mathrel{\priority}\cdots
\mathrel{\priority} e_{\abs{\mathcal{E}}}.
\]
For each $k\in\{1,\ldots,\abs{\mathcal{E}}\}$, let $E_k:=\{e_1,\ldots,e_k\}$ and $E_0:=\varnothing$. For an arbitrary matroid $M$ on $\mathcal{E}$, the standard matroid greedy algorithm admits $e_k$ if and only if $\rank_M(E_k)-\rank_M(E_{k-1})=1$. 

Suppose $e_k\in G_{M^c}^\priority$. Then, applying \eqref{eq:quotient-rank} to $E_{k-1}\subseteq E_k$ and using the fact that a rank function can increase by at most one when a single element is added to a set, we have 
\[
1=\rank_{M^c}(E_k)-\rank_{M^c}(E_{k-1})\leq \rank_{M^d}(E_k)-\rank_{M^d}(E_{k-1})\leq 1.
\]
Thus, $e_k\in G_{M^d}^\priority$. Since the priority ranking was fixed arbitrarily, we conclude that $G_{M^c}^\priority\subseteq G_{M^d}^\priority$ for all $\priority$ on $\mathcal{E}$.

\medskip
\noindent\textbf{($\Leftarrow$).}
Suppose $G_{M^c}^\priority\subseteq G_{M^d}^\priority$ for every priority ranking $\priority$ on $\mathcal E$. We verify \eqref{eq:quotient-rank}. Fix $E'\subseteq E\subseteq \mathcal E$. Choose a priority ranking $\priority$ on $\mathcal E$ that ranks the elements of $E'$ first, then the elements of $E\setminus E'$, and then the elements of $\mathcal E\setminus E$, with arbitrary order within each block.

When restricted to the first block $E'$, greedy on $M^d$ under $\priority$ produces a maximal independent subset of $E'$ in $M^d$ of cardinality $\rank_{M^d}(E')$. After the elements of $E'$ and $E\setminus E'$ have all been processed, the selected elements inside $E$ form a maximal independent subset of $E$ in $M^d$ of cardinality $\rank_{M^d}(E)$. Hence
\[
\abs{G^\priority_{M^d}\cap E'}=\rank_{M^d}(E')\quad\text{and}\quad \abs{G^\priority_{M^d}\cap E}=\rank_{M^d}(E).
\]
Therefore $\abs{G^\priority_{M^d}\cap (E\setminus E')}=\rank_{M^d}(E)-\rank_{M^d}(E')$. The same argument for $M^c$ gives
\[
\abs{G^\priority_{M^c}\cap (E\setminus E')}=\rank_{M^c}(E)-\rank_{M^c}(E').
\]
Since $G^\priority_{M^c}\subseteq G^\priority_{M^d}$, we obtain
\[
\rank_{M^c}(E)-\rank_{M^c}(E')=\abs{G^\priority_{M^c}\cap (E\setminus E')}\leq \abs{G^\priority_{M^d}\cap (E\setminus E')}=\rank_{M^d}(E)-\rank_{M^d}(E').
\]
This verifies \eqref{eq:quotient-rank}. Hence, $M^c$ is a quotient of $M^d$.
\end{proof}

\begin{lemma}\label{lem:nested-greedy-exchange}
Let $M^c$ and $M^d$ be matroids on a finite set $\mathcal E$, and suppose that $M^c$ is a rank-one quotient of $M^d$. Then, for every basis $B^c$ of $M^c$ and every basis $B^d$ of $M^d$ with $B^c\setminus B^d\neq\varnothing$, there exist $e^c\in B^c\setminus B^d$ and $e^d\in B^d\setminus B^c$ such that $(B^c\setminus\{e^c\})\cup\{e^d\}$ is a basis of $M^c$ and
$(B^d\setminus\{e^d\})\cup\{e^c\}$ is a basis of $M^d$.
\end{lemma}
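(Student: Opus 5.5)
The plan is to pass to a single matroid on an enlarged ground set using the deletion--contraction characterization of rank-one quotients stated above, and then apply the strong simultaneous exchange property of bases (the version of (B2) from \citet[Lemma~2.2.2]{oxley}) in that larger matroid. By that characterization, there are an element $e\notin\mathcal E$ and a matroid $M=(\mathcal E\cup\{e\},\mathcal I)$ with $\{e\}\in\mathcal I$ such that $M^c=M^{/e}$ and $M^d=M^{\setminus e}$.

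The first step is to identify the bases of $M^c$ and $M^d$ with bases of $M$.
\begin{itemize}
\item Since $\{e\}\in\mathcal I$, every maximal independent set containing $e$ has size $\rank_M(\mathcal E\cup\{e\})$. Hence $\rank_{M^c}(\mathcal E)=\rank_M(\mathcal E\cup\{e\})-1$, and $B^c$ is a basis of $M^c$ if and only if $B^c\cup\{e\}$ is a basis of $M$.
\item The rank-one hypothesis gives $\rank_{M^d}(\mathcal E)=\rank_{M^c}(\mathcal E)+1=\rank_M(\mathcal E\cup\{e\})$. So deleting $e$ does not lower the rank, i.e., $e$ is not a coloop of $M$.
\item Consequently, the bases of $M^d$ are exactly the bases of $M$ that avoid $e$: an independent subset of $\mathcal E$ of size $\rank_M(\mathcal E\cup\{e\})$ is maximal in $M$.
\end{itemize}
I expect this bookkeeping to be the only real obstacle. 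It is the one place where the rank-one condition is used, and everything else is mechanical.

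Now fix a basis $B^c$ of $M^c$ and a basis $B^d$ of $M^d$ with $B^c\setminus B^d\neq\varnothing$. Set $B_1\coloneqq B^c\cup\{e\}$ and $B_2\coloneqq B^d$; by the first step both are bases of $M$.
\begin{itemize}
\item Pick any $x\in B^c\setminus B^d$. Then $x\in B_1\setminus B_2$ and $x\neq e$.
\item Strong simultaneous exchange in $M$ yields $y\in B_2\setminus B_1$ such that $(B_1\setminus\{x\})\cup\{y\}$ and $(B_2\setminus\{y\})\cup\{x\}$ are both bases of $M$.
\item Since $e\notin B_2$, we have $y\neq e$ and $y\in B^d\setminus B^c$.
\end{itemize}

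Finally, translate back to $M^c$ and $M^d$.
\begin{itemize}
\item The set $(B_1\setminus\{x\})\cup\{y\}=\bigl((B^c\setminus\{x\})\cup\{y\}\bigr)\cup\{e\}$ is a basis of $M$ containing $e$. By the first step, $(B^c\setminus\{x\})\cup\{y\}$ is a basis of $M^c$.
\item The set $(B^d\setminus\{y\})\cup\{x\}$ is a basis of $M$ avoiding $e$, so it is a basis of $M^d$.
\end{itemize}
Taking $e^c\coloneqq x$ and $e^d\coloneqq y$ gives the claim. In fact, the argument shows slightly more: the element $e^c\in B^c\setminus B^d$ can be chosen arbitrarily.
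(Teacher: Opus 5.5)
Your proposal is correct and follows the paper's proof essentially step for step. You realize $M^c=M^{/e}$ and $M^d=M^{\setminus e}$ via the deletion--contraction characterization, apply strong simultaneous exchange in $M$ to the bases $B^c\cup\{e\}$ and $B^d$, and translate back, with the small addition of using the rank-one hypothesis explicitly (so that $e$ is not a coloop) to justify that the bases of $M^d$ are exactly the bases of $M$ avoiding $e$, which the paper simply asserts.
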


\begin{proof}[Proof of \autoref{lem:nested-greedy-exchange}]
By the deletion-contraction characterization of rank-one quotients, there exists an element $e\notin \mathcal E$ and a matroid ${M}=(\mathcal{E}\cup \{e\},\, \mathcal{I})$ with $\{e\}\in\mathcal{I}$ such that $M^c=M^{/e}$ and $M^d=M^{\setminus e}$. Then $B^c$ is a basis of $M^c$ if and only if $B^c\cup\{e\}$ is a basis of $M$, and $B^d$ is a basis of $M^d$ if and only if $B^d$ is a basis of $M$ and $e\notin B^d$. 

Let $B^c$ be a basis of $M^c$ and let $B^d$ be a basis of $M^d$ satisfying $B^c\setminus B^d\neq\varnothing$. Take any $e^c\in B^c\setminus B^d$. Then, $B^c\cup\{e\}$ and $B^d$ are each a basis of ${M}$, and $e^c\in (B^c\cup\{e\})\setminus B^d$. By the strong basis exchange property \citep[e.g.,][Lemma~2.2.2]{oxley}, there exists $e^d\in B^d\setminus (B^c\cup\{e\})=B^d\setminus B^c$ such that both
\[
\bigl((B^c\cup\{e\})\setminus\{e^c\}\bigr)\cup\{e^d\}\quad \text{ and } \quad (B^d\setminus\{e^d\})\cup\{e^c\}
\]
are each a basis of $M$ as well.

The first basis contains $e$, so contracting $e$ from it gives $(B^c\setminus\{e^c\})\cup\{e^d\}$, which is a basis of $M^c$. The second basis does not contain $e$, so it is a basis of $M^d$. This proves the lemma.
\end{proof}

\subsection{Proofs of Lemmas and Propositions}

\begin{proof}[Proof of \autoref{lem:distfrontier}]\

\noindent\textbf{($\Leftarrow$).} 
Suppose that $\ch(S)\in\front(S)$ for each set $S\subseteq\mathcal{S}$. By the definition of the frontier, $\front(S)\subseteq\nw(S)$, so $\ch(S)\in\nw(S)$. Hence $\abs{\ch(S)}=\min\{q,\abs{S}\}$, and $\ch$ is non-wasteful. Moreover, any set $S'\subseteq S$ with $\abs{S'}=\abs{\ch(S)}$ is also non-wasteful, so $S'\in \nw(S)$. By definition of the frontier and the fact that $\ch(S)\in \front(S)$, we conclude that $S'\not\succ \ch(S)$. Thus, $\ch$ is also distributionally maximal. \medskip

\noindent\textbf{($\Rightarrow$).} Suppose that $\ch$ is non-wasteful and distributionally maximal. Consider any set $S\subseteq\mathcal{S}$. Non-wastefulness implies $\abs{\ch(S)}=\min\{q,\abs{S}\}$, so $\ch(S)\in\nw(S)$. Since $\ch$ is distributionally maximal, there is no $S'\subseteq S$ with $\abs{S'}=\abs{\ch(S)}$ such that $S'\succ\ch(S)$. Equivalently, there is no $S'\in\nw(S)$ such that $S'\succ\ch(S)$. Hence $\ch(S)\in\front(S)$.
\end{proof}

\noindent\begin{proof}[Proof of \autoref{prop:comp}]\ 

\noindent\textbf{($\Rightarrow$).} Suppose the distributional preference satisfies the upper-bound property. Consider any set $S\subseteq\mathcal{S}$. If $\abs{S}\le q$, then $\nw(S)=\front(S)=\{S\}$, so Part $(i)$ of the proposition holds trivially. Moreover, $\nw(S)\setminus \front(S)=\varnothing$, so Part $(ii)$ of the proposition holds vacuously. Hence, for the remainder of the proof, suppose $\abs{S}>q$.\medskip 

\noindent\textbf{Part $(i)$:}  Let $S',S''\in\front(S)$. Toward a contradiction, suppose that $S'$ and $S''$ are incomparable. By the upper-bound property, there exists $T\in\nw(S'\cup S'')$ such that $T\succ S'$ or $T\succ S''$. Since $S'\cup S''\subseteq S$ and $\abs{S'\cup S''}\geq q$, we have $T\in\nw(S)$. This contradicts either $S'\in\front(S)$ or $S''\in\front(S)$. Thus $S'$ and $S''$ must be comparable. Additionally, because neither strictly dominates the other, $S'\sim S''$.\medskip

\noindent\textbf{Part $(ii)$:}
Let $S'\in\front(S)$ and let 
$S''\in\nw(S)\setminus\front(S)$. Since $S''\notin\front(S)$, there exists  $T_1\in\nw(S)$ such that $T_1\succ S''$. If $T_1\in\front(S)$, then by Part $(i)$ of the proposition, $S'\sim T_1$, and therefore $S'\succ S''$ by transitivity. 

If $T_1\notin\front(S)$, then we can repeat the above argument to obtain  $T_2\in\nw(S)$ with $T_2\succ T_1$. Iteratively, whenever $T_k\notin\front(S)$, there exists $T_{k+1}\in\nw(S)$ such that $T_{k+1}\succ T_k$. This yields a sequence $T_1, T_2, \ldots$ in $\nw(S)$ such that 
\[
T_k\succ T_{k-1}\succ\ldots\succ T_1\succ S''.
\]
Since $\nw(S)$ is a finite set, and the strict part of a preorder is transitive and irreflexive, the sequence construction cannot continue indefinitely. Hence, there exists some $K$ with $T_K\in\front(S)$; otherwise the sequence construction could be continued. By construction, $T_K\succ S''$, and by Part $(i)$ of the proposition, $S'\sim T_K$. Therefore, by transitivity, $S'\succ S''$.\medskip

\noindent\textbf{($\Leftarrow$).} Suppose Parts $(i)$ and $(ii)$ of the proposition hold. Consider any two sets $S,S'\subseteq\mathcal{S}$ such that $S$ and $S'$ are incomparable and that $\abs{S}=\abs{S'}=q$. Hence, the antecedent for the upper-bound property holds.

We first show that $S\notin \front(S\cup S')$ and $S'\notin\front(S \cup S')$. Indeed, if both were in $\front(S\cup S')$, then Part $(i)$ would imply that $S\sim S'$, contradicting incomparability. Similarly, if only one of them is in $\front(S\cup S')$, say $S\in\front(S\cup S')$ and $S'\notin\front(S\cup S')$, then Part $(ii)$ would imply that $S\succ S'$, again contradicting incomparability. Hence, $S, S'\in\nw(S\cup S')\setminus \front(S\cup S')$.

We already know that $\front(S\cup S')$ is non-empty. Moreover, for every $T\in \front(S\cup S')$, Part $(ii)$ implies that $T\succ S$ and $T\succ S'$, which establishes the existence of a set $T\in\nw(S\cup S')$ that dominates either $S$ or $S'$ as desired.
\end{proof}

\begin{proof}[Proof of \autoref{lem:matroidbase}]\

\noindent\textbf{($\Rightarrow$).} Suppose the maximizer property holds. Consider any set $S\subseteq\mathcal{S}$. We shall show that $\front(S)$ satisfies the bases axioms: non-emptiness (B1) and exchange (B2) (see \ref{sec:matroid_theory}).  

By construction, $\front(S)$ is non-empty, so (B1) is satisfied. If $\front(S)$ is a singleton, then the exchange condition is satisfied vacuously. Thus, suppose $\abs{\front(S)}>1$. Then $\abs{S}>q$, so every set in $\front(S)$ has cardinality $q$.

Take two distinct sets $S',S''\in\front(S)$. Since the distributional preference satisfies the upper-bound property, \autoref{prop:comp} implies $S'\sim S''$. Moreover, for every $T\in\nw(S)$, \autoref{prop:comp} implies $S'\succsim T$ and $S''\succsim T$. Since $\nw(S'\cup S'')\subseteq \nw(S)$, we have $S'\succsim T$ and $S''\succsim T$ for all $T\in\nw(S'\cup S'')$. By the maximizer property, there exist $s'\in S'\setminus S''$ and $s''\in S''\setminus S'$ such that
\[
(S'\setminus\{s'\})\cup\{s''\}\sim S' \quad\text{and}\quad (S''\setminus\{s''\})\cup\{s'\}\sim S''.
\]

Both exchanged sets belong to $\nw(S)$. Since each is indifferent to a frontier set, \autoref{prop:comp} implies that both exchanged sets also belong to $\front(S)$. Hence, $\front(S)$ satisfies the weak simultaneous exchange property, which is equivalent to (B2) by \citet[Theorem~4.3]{Murota:SIAM:2003}. Therefore, $\front(S)$ forms the bases of a matroid.\medskip

\noindent\textbf{($\Leftarrow$).}  Consider two distinct sets $S,S'\subseteq \mathcal{S}$ with $\abs{S}=\abs{S'}=q$ such that $S\succsim S''$ and $S'\succsim S''$ for all $S''\in\nw(S\cup S')$. By definition, $S,S'\in\front(S\cup S')$. Since $\front(S\cup S')$ forms the bases of a matroid for ground set $S\cup S'$, it satisfies the weak simultaneous exchange property:  there exist $s\in S\setminus S'$ and $s'\in S'\setminus S$ such that
\[
(S\setminus\{s\})\cup\{s'\}\in\front(S\cup S') \quad\text{and}\quad (S'\setminus\{s'\})\cup\{s\}\in\front(S\cup S').
\]

Since the upper-bound property is satisfied, by \autoref{prop:comp}, all frontier sets are indifferent to each other. Specifically, we have
\[
(S\setminus\{s\})\cup\{s'\}\sim S \quad\text{and}\quad (S'\setminus\{s'\})\cup\{s\}\sim S'.
\]
Hence, $\succsim$ also satisfies the maximizer property. 
\end{proof}

\begin{proof}[Proof of \autoref{prop:index}]
We will use the following observation repeatedly: for any $S\subseteq\mathcal{S}$,  $s\in S$, and $s'\notin S$ with $f(s')>f(s)$, we have $f\big((S\setminus\{s\})\cup\{s'\}\big)> f(S)$, where for $x,y\in\mathbb{R}^q$, $x>y$ if $x_i\geq y_i$ for all $i\in\{1,\ldots,q\}$ and a strict inequality in at least one dimension.

\medskip
\noindent
\textit{Upper-bound Property:} For any $S,S'\subseteq\mathcal{S}$ with $\abs{S}=\abs{S'}=q$, either $\Psi(f(S))\geq \Psi(f(S'))$ so that $S\succsim S'$, or $\Psi(f(S))\leq \Psi(f(S'))$ so that $S'\succsim S$. Since no two sets of size $q$ are incomparable, the upper-bound property holds vacuously. 

\medskip
\noindent
\textit{Maximizer Property:}
Let $S,S'\subseteq\mathcal{S}$ be distinct sets with $\abs{S}=\abs{S'}=q$, and suppose that $S\succsim S''$ and $S'\succsim S''$ for every $S''\in\nw(S\cup S')$.

Take any $s\in S\setminus S'$ and any $s'\in S'\setminus S$. We claim that $f(s)=f(s')$. Suppose first that $f(s)>f(s')$. Then $f\big((S'\setminus\{s'\})\cup\{s\}\big)> f(S')$, and by strict monotonicity of $\Psi$, we get $\Psi\big(f((S'\setminus\{s'\})\cup\{s\})\big)>\Psi( f(S'))$. This implies $(S'\setminus\{s'\})\cup\{s\}\succ S'$, contradicting the assumption that $S'$ weakly dominates every element of $\nw(S\cup S')$. Similarly, if $f(s')>f(s)$, an analogous argument would yield $(S\setminus\{s\})\cup\{s'\}\succ S$, contradicting the assumption that $S$ weakly dominates every element of $\nw(S\cup S')$. 

Therefore, $f(s)=f(s')$. Exchanging equal-score students leaves the profile of scores unchanged for both sets, so 
\[
\Psi\big(f((S'\setminus\{s'\})\cup\{s\})\big)=\Psi( f(S'))\quad \text{ and }\quad \Psi\big(f((S\setminus\{s\})\cup\{s'\})\big)=\Psi(f(S)).
\]
Consequently,
\[
(S'\setminus\{s'\})\cup\{s\}\sim S' \quad\text{and}\quad (S\setminus\{s\})\cup\{s'\}\sim S.
\]
This proves the maximizer property.

\medskip
\noindent
\textit{Coherence Property:}
Let $S,S'\subseteq\mathcal{S}$ be distinct sets with $\abs{S}=\abs{S'}=q$, and let $\hat s\in S\setminus S'$ with $(S\setminus S')\setminus\{\hat s\}\neq\emptyset$. Suppose that the three antecedents in \autoref{def:exchange-coherence} hold. 

Choose any $s\in (S\setminus S')\setminus\{\hat s\}$ and any $s'\in S'\setminus S$. Such an $s'$ exists because $\abs{S}=\abs{S'}$ and $S\setminus S'$ is nonempty, which implies that $S'\setminus S$ is also nonempty. Since $s\neq \hat s$, the set $(S\setminus\{s\})\cup\{s'\}$ contains $\hat s$ and belongs to $\nw(S\cup S')$. By condition $(ii)$, $S\succsim (S\setminus\{s\})\cup\{s'\}$. If $f(s')>f(s)$, then $\Psi\big(f((S\setminus\{s\})\cup\{s'\})\big)>\Psi(f(S))$, contradicting the preceding weak dominance. Hence $f(s)\geq f(s')$.

Similarly, $(S'\setminus\{s'\})\cup\{s\}$ belongs to $\nw((S\cup S')\setminus\{\hat s\})$. By condition $(iii)$, $S'\succsim (S'\setminus\{s'\})\cup\{s\}$. If $f(s)>f(s')$, then $\Psi\big(f((S'\setminus\{s'\})\cup\{s\})\big)>\Psi(f(S'))$, contradicting the preceding weak dominance. Hence $f(s')\ge f(s)$.

Therefore, $f(s)=f(s')$. Exchanging equal-score students leaves the profile of scores unchanged for both sets, so
\[
(S\setminus\{s\})\cup\{s'\}\sim S \quad\text{and}\quad (S'\setminus\{s'\})\cup\{s\}\sim S'.
\]
This proves the coherence property.
\end{proof}

\begin{proof}[Proof of \autoref{prop:pointwise}]\

\medskip
\noindent
\textit{Upper-bound Property:}
Let $S,S'\subseteq\mathcal{S}$ satisfy $\abs{S}=\abs{S'}=q$, and suppose that $S$ and $S'$ are incomparable. Let $T\subseteq S\cup S'$ be a set of $q$ students with the highest $f$-values in $S\cup S'$. Then $T\in\nw(S\cup S')$. Moreover, by construction, $T$ pointwise weakly dominates every set in $\nw(S\cup S')$, and in particular $T\succsim S$ and $T\succsim S'$. If neither comparison were strict, then we would have $T\sim S$ and $T\sim S'$, which would imply $S\sim S'$, contradicting incomparability. Therefore, $T$ strictly dominates either $S$ or $S'$. Hence, the upper-bound property holds.

\medskip
\noindent
\textit{Maximizer and Coherence Properties:} The arguments are analogous to those for \autoref{prop:index}, with pointwise dominance of the ordered vector of $f$-values in place of strict monotonicity of $\Psi$.
\end{proof}

\noindent\begin{proof}[Proof of \autoref{prop:matroidal}] Fix a matroid $M=(\mathcal{S},\mathcal{I})$. We use the following standard exchange property of rank functions: for any two sets $X,Y\subseteq\mathcal{S}$ with $\abs{X}=\abs{Y}$ and any $x\in X\setminus Y$, there exists $y\in Y\setminus X$ such that \begin{equation}\label{eq:rank-exchange}
\rank_M(X)+\rank_M(Y)\leq \rank_M\big((X\setminus\{x\})\cup\{y\}\big)+\rank_M\big((Y\setminus\{y\})\cup\{x\}\big).
\end{equation}
This is the fixed-cardinality exchange property for matroid rank functions, which is equivalent to $\rank_M$ being $M$-concave when restricted to a domain of equi-cardinal sets \citep{shioura2012}.

\noindent
\textit{Upper-bound Property:}
For any $S,S'\subseteq\mathcal{S}$ with $\abs{S}=\abs{S'}=q$, either $\rank_M(S)\geq \rank_M(S')$ so that $S\succsim S'$, or $\rank_M(S')\geq \rank_M(S)$ so that $S'\succsim S$. Since no two sets of size $q$ are incomparable, the upper-bound property holds vacuously.

\medskip
\noindent
\textit{Maximizer Property:}
Let $S,S'\subseteq\mathcal{S}$ be distinct sets with $\abs{S}=\abs{S'}=q$, and suppose that $S\succsim S''$ and $S'\succsim S''$ for every $S''\in\nw(S\cup S')$. Equivalently,
$\rank_M(S)\geq \rank_M(S'')$ and $\rank_M(S')\geq \rank_M(S'')$ for every $S''\in\nw(S\cup S')$.

Choose any $s\in S\setminus S'$. By \eqref{eq:rank-exchange}, there exists $s'\in S'\setminus S$ such that
\[
\rank_M(S)+\rank_M(S')\leq \rank_M\big((S\setminus\{s\})\cup\{s'\}\big)+ \rank_M\big((S'\setminus\{s'\})\cup\{s\}\big).
\]
However, both $S$ and $S'$ maximize the rank function over $\nw(S\cup S')$, and both exchanged sets belong to $\nw(S\cup S')$. Consequently, the above inequality must bind, and 
\[
\rank_M\big((S\setminus\{s\})\cup\{s'\}\big)=\rank_M(S) \quad\text{and}\quad \rank_M\big((S'\setminus\{s'\})\cup\{s\}\big)=\rank_M(S').
\]
Thus
\[
(S\setminus\{s\})\cup\{s'\}\sim S\quad\text{and}\quad (S'\setminus\{s'\})\cup\{s\}\sim S'.
\]
This proves the maximizer property.

\medskip
\noindent
\textit{Coherence Property:}
Let $S,S'\subseteq\mathcal{S}$ be distinct sets with $\abs{S}=\abs{S'}=q$, and let $\hat s\in S\setminus S'$ with $(S\setminus S')\setminus\{\hat s\}\neq\emptyset$. Suppose that the three antecedents in \autoref{def:exchange-coherence} hold.

Choose any $s\in (S\setminus S')\setminus\{\hat s\}$. By \eqref{eq:rank-exchange} applied to $S$ and $S'$, there exists $s'\in S'\setminus S$ such that
\[
\rank_M(S)+\rank_M(S')\leq \rank_M\big((S\setminus\{s\})\cup\{s'\}\big)+ \rank_M\big((S'\setminus\{s'\})\cup\{s\}\big).
\]

Since $s\neq \hat s$, the set $(S\setminus\{s\})\cup\{s'\}$ contains $\hat s$ and belongs to $\nw(S\cup S')$. By condition $(ii)$, $S\succsim (S\setminus\{s\})\cup\{s'\}$, or equivalently, $\rank_M(S)\geq \rank_M((S\setminus\{s\})\cup\{s'\})$. Similarly, $(S'\setminus\{s'\})\cup\{s\}$ belongs to $\nw((S\cup S')\setminus\{\hat s\})$. By condition $(iii)$, $S'\succsim (S'\setminus\{s'\})\cup\{s\}$, or equivalently, $\rank_M(S')\geq \rank_M((S'\setminus\{s'\})\cup\{s\})$. 

Together with the rank-exchange inequality, all the above inequalities must bind, i.e., 
\[
\rank_M\big((S\setminus\{s\})\cup\{s'\}\big)=\rank_M(S) \quad\text{and}\quad \rank_M\big((S'\setminus\{s'\})\cup\{s\}\big)=\rank_M(S').
\]
Thus
\[
(S\setminus\{s\})\cup\{s'\}\sim S\quad\text{and}\quad (S'\setminus\{s'\})\cup\{s\}\sim S'.
\]
This proves the coherence property.
\end{proof}

\begin{proof}[Proof of \autoref{prop:bounds}]\

\noindent\emph{Upper-bound Property:} For any $S,S'\subseteq\mathcal{S}$ with $\abs{S}=\abs{S'}=q$, there are four possibilities. Both are feasible, in which case $S\sim S'$; both infeasible, in which case $S\sim S'$; $S$ is feasible while $S'$ is infeasible, in which case $S\succ S'$; or $S'$ is feasible while $S$ is infeasible, in which case $S'\succ S$. Since no two sets of size $q$ are incomparable, the upper-bound property holds vacuously. 

\medskip
\noindent\emph{Maximizer Property:} Let $S,S'\subseteq\mathcal{S}$ be distinct sets with $\abs{S}=\abs{S'}=q$, and suppose that $S\succsim S''$ and $S'\succsim S''$ for every $S''\in\nw(S\cup S')$.

If $S$ is infeasible, then no set in $\nw(S\cup S')$ is feasible; otherwise a feasible set would strictly dominate $S$. Hence, every set in $\nw(S\cup S')$ is infeasible, and any exchange between $S$ and $S'$ leaves both exchanged sets indifferent to the originals. Thus, the maximizer property holds.

Now suppose that $S$ is feasible. Since $S'\succsim S$, the set $S'$ is also feasible. Thus, $n_k(S)\in [L_k, U_k]$ and $n_k(S')\in [L_k, U_k]$ for all $k=1,\ldots, K$. 

Suppose $n_k(S)=n_k(S')$ for all $k$. Then there exist $s\in S\setminus S'$ and $s'\in S'\setminus S$ with $\tau(s)=\tau(s')$. Thus, $n_k(S)=n_k((S\setminus \{s\})\cup\{s'\})$ and $n_k(S')=n_k((S'\setminus \{s'\})\cup\{s\})$ for all $k$, so both exchanged sets remain feasible and hence indifferent to the originals.

Next, suppose $n_{k'}(S)\neq n_{k'}(S')$ for some $k'$. Without loss of generality, let $n_{k'}(S)>n_{k'}(S')$. Hence, there exists a student $s\in S\setminus S'$ with $\tau_{k'}(s)=1$. Moreover, there exists $k''\neq k'$ such that $n_{k''}(S)<n_{k''}(S')$; otherwise, because each student has only one protected trait, $\abs{S}=\sum_k n_k(S)>\sum_k n_k(S')=\abs{S'}$, which contradicts the fact that $\abs{S}=\abs{S'}=q$. Thus, there exists a student $s'\in S'\setminus S$ with $\tau_{k''}(s')=1$.

Let us consider the exchanged set $(S\setminus \{s\})\cup\{s'\}$. Note that 
\[
n_{k'}((S\setminus \{s\})\cup\{s'\})=n_{k'}(S)-1\geq n_{k'}(S')\geq L_{k'}
\]
where the first inequality follows because $n_{k'}(S)>n_{k'}(S')$ by assumption, and the last inequality follows because $S'$ is feasible. Similarly, 
\[
n_{k''}((S\setminus \{s\})\cup\{s'\})=n_{k''}(S)+1\leq n_{k''}(S')\leq U_{k''},
\]
where the first inequality follows because $n_{k''}(S)<n_{k''}(S')$, and the last inequality follows because $S'$ is feasible. For all other $k\notin\{k',k''\}$,  $n_{k}((S\setminus \{s\})\cup\{s'\})=n_{k}(S)$. Thus, $(S\setminus \{s\})\cup\{s'\}$ is also feasible. Consequently, $S\sim (S\setminus \{s\})\cup\{s'\}$.

An analogous argument shows that $(S'\setminus \{s'\})\cup\{s\}$ is also feasible and hence, $S'\sim (S'\setminus \{s'\})\cup\{s\}$.
Thus, the maximizer property holds.
\end{proof}

\begin{proof}[Proof of \autoref{prop:floors_ceilings}]\
The floor-and-ceiling distributional preference is an example of a distributional preference that is induced by a $q$-ordinally concave value function. Hence, \autoref{prop:floors_ceilings} follows from \autoref{prop:ordinal_concavity}. 
\end{proof}

\begin{proof}[Proof of \autoref{prop:ordinal_concavity}]\

\noindent\emph{Upper-bound Property:}
For any $S,S'\subseteq\mathcal{S}$ with $\abs{S}=\abs{S'}=q$, either $\Phi(n(S))\geq \Phi(n(S'))$, in which case $S\succsim S'$ or $\Phi(n(S'))\geq \Phi(n(S))$, in which case $S'\succsim S$. Since no two sets of size $q$ are incomparable, the upper-bound property holds vacuously. 

\medskip\noindent\emph{Maximizer Property:} Let $S,S'\subseteq\mathcal{S}$ be distinct sets with $\abs{S}=\abs{S'}=q$, and suppose that $S\succsim S''$ and $S'\succsim S''$ for every $S''\in\nw(S\cup S')$. 

Choose any $s\in (S\setminus S')$ and suppose the student has trait $t_k$. If there exist $s'\in S'\setminus S$ such that $\tau_k(s')=1$, then exchanging these students leaves trait-compositions unchanged. Hence,
\[
(S\setminus\{s\})\cup\{s'\}\sim S \quad\text{and}\quad (S'\setminus\{s'\})\cup\{s\}\sim S'.
\]

Next, suppose no such student exists in $S'\setminus S$, which implies that $n_k(S)> n_k(S')$. By $q$-ordinal concavity, there exists $k'\neq k$ with $n_{k'}(S)<n_{k'}(S')$ such that at least one of the three conditions in \autoref{def:ordinal-exchange} holds. Since $n_{k'}(S)<n_{k'}(S')$, there exists $s'\in S'\setminus S$ with trait $t_{k'}$.

However, also notice that both $S$ and $S'$ are maximal in $\nw(S\cup S')$, so we must also have $\Phi(n(S))\geq \Phi(n((S\setminus\{s\})\cup\{s'\}))$ and $\Phi(n(S'))\geq \Phi(n((S'\setminus\{s'\})\cup\{s\}))$. In other words, Condition $(i)$ and Condition $(ii)$ of \autoref{def:ordinal-exchange} cannot hold. Thus, Condition $(iii)$ of \autoref{def:ordinal-exchange} must hold, which is equivalent to 
\[
(S\setminus\{s\})\cup\{s'\}\sim S \quad\text{and}\quad (S'\setminus\{s'\})\cup\{s\}\sim S'.
\]
This proves the maximizer property.

\medskip
\noindent
\textit{Coherence Property:}
Let $S,S'\subseteq\mathcal{S}$ be distinct sets with $\abs{S}=\abs{S'}=q$, and let $\hat s\in S\setminus S'$ with $(S\setminus S')\setminus\{\hat s\}\neq\emptyset$. Suppose that the three antecedents in \autoref{def:exchange-coherence} hold. 

Choose any $s\in (S\setminus S')\setminus\{\hat s\}$ and suppose the student has trait $t_k$. If there exist $s'\in S'\setminus S$ such that $\tau_k(s')=1$, then exchanging these students leaves trait-compositions unchanged. Hence,
\[
(S\setminus\{s\})\cup\{s'\}\sim S \quad\text{and}\quad (S'\setminus\{s'\})\cup\{s\}\sim S'.
\]

Next, suppose no such student exists in $S'\setminus S$, which implies that $n_k(S)> n_k(S')$. By $q$-ordinal concavity, there exists $k'\neq k$ with $n_{k'}(S)<n_{k'}(S')$ such that at least one of the three conditions in \autoref{def:ordinal-exchange} holds. Since $n_{k'}(S)<n_{k'}(S')$, there exists $s'\in S'\setminus S$ with trait $t_{k'}$.

Since $s\neq \hat s$, the set $(S\setminus\{s\})\cup\{s'\}$ contains $\hat s$ and belongs to $\nw(S\cup S')$. By condition $(ii)$, $S\succsim (S\setminus\{s\})\cup\{s'\}$, or equivalently, $\Phi(n(S))\geq \Phi(n((S\setminus\{s\})\cup\{s'\}))$. Hence, Condition $(i)$ of \autoref{def:ordinal-exchange} is not satisfied.

Similarly, $(S'\setminus\{s'\})\cup\{s\}$ belongs to $\nw((S\cup S')\setminus\{\hat s\})$. By condition $(iii)$, $S'\succsim (S'\setminus\{s'\})\cup\{s\}$, or equivalently, $\Phi(n(S'))\geq \Phi(n((S'\setminus\{s'\})\cup\{s\}))$. Hence, Condition $(ii)$ of \autoref{def:ordinal-exchange} is also not satisfied.

Thus, Condition $(iii)$ of \autoref{def:ordinal-exchange} must hold, which is equivalent to 
\[
(S\setminus\{s\})\cup\{s'\}\sim S \quad\text{and}\quad (S'\setminus\{s'\})\cup\{s\}\sim S'.
\]
This proves the coherence property.
\end{proof}

\subsection{Proofs of Theorems}
\begin{proof}[Proof of \autoref{thm:exogenouspriority}] We first show that $(i)$ implies $(ii)$. Let the distributional preference satisfy the maximizer property. Fix a priority ranking $\priority$ and consider the distributional choice rule $\ch^{\priority}$. In \autoref{subsec:distributional}, we have already discussed that $\ch^{\priority}$ is non-wasteful, distributionally maximal, and free of justified envy. Thus, we need only show that $\ch^{\priority}$ is the unique choice rule that satisfies these three axioms.

To that end, let $\ch$ be any choice rule that is non-wasteful, distributionally maximal, and free of justified envy. By \autoref{lem:distfrontier}, $\ch(S)\in\front(S)$ for every $S\subseteq\mathcal{S}$. Suppose, toward a contradiction, that there exists $S\subseteq\mathcal{S}$ such that $\ch(S)\neq \ch^{\priority}(S)$. Such a set $S$ is necessarily non-empty. 

Since the distributional preference satisfies the upper-bound and maximizer properties, \autoref{lem:matroidbase} implies that $\front(S)$ forms the bases of a matroid. By the weak simultaneous exchange property (see \ref{sec:matroid_theory}), there exist $s\in \ch(S)\setminus\ch^{\priority}(S)$ and $s'\in \ch^{\priority}(S)\setminus\ch(S)$ such that
\[
(\ch(S)\setminus\{s\})\cup\{s'\}\in\front(S) \quad \text{ and } \quad (\ch^{\priority}(S)\setminus\{s'\})\cup\{s\}\in\front(S).
\]
By \autoref{prop:comp}, $(\ch(S)\setminus\{s\})\cup\{s'\}\sim \ch(S)$ and $(\ch^\priority(S)\setminus\{s'\})\cup\{s\}\sim \ch^\priority(S)$. 

Since $\ch$ is free of justified envy, $(\ch(S)\setminus\{s\})\cup\{s'\}\sim \ch(S)$ implies that $s\mathrel{\priority}s'$. At the same time, since $\ch^\priority$ is also free of justified envy, $(\ch^\priority(S)\setminus\{s'\})\cup\{s\}\sim \ch^\priority(S)$ implies $s'\mathrel{\priority}s$. However, as the priority ranking $\priority$ is a strict linear order on $\mathcal{S}$, we cannot have both $s\mathrel{\priority}s'$ and $s'\mathrel{\priority}s$. Therefore, no such set $S$ exists, and $\ch=\ch^\priority$. 
\medskip

We now show that $(ii)$ implies $(i)$. For each priority ranking $\priority$, let $\ch^\priority$ be the unique choice rule that is non-wasteful, distributionally maximal, and free of justified envy. Toward a contradiction, suppose that the maximizer property does not hold. Since the upper-bound property is satisfied, \autoref{lem:matroidbase} implies that there exists a set $T\subseteq\mathcal S$ such that $\front(T)$ does not form the bases of a matroid with ground set $T$.

Since $\front(T)$ is non-empty, it must violate the matroid base-exchange axiom (B2), i.e., there exist distinct sets $S,S'\in\front(T)$ and some $i\in S\setminus S'$ such that for every $s'\in S'\setminus S$, we have
\[
(S\setminus\{i\})\cup\{s'\}\notin\front(T).
\]
Among all tuples $(T,S,S', i)$ that violate (B2), choose the tuple for which $\abs{S\setminus S'}$ is minimal.

Since $S\in\front(T)$ and $(S\setminus\{i\})\cup\{s'\}\in\nw(T)\setminus \front(T)$ for every $s'\in S'\setminus S$, \autoref{prop:comp} implies
\begin{equation}\label{eq:maxstar-failure}
S\succ (S\setminus\{i\})\cup\{s'\}
\qquad\text{for every }s'\in S'\setminus S.
\end{equation}
Moreover, $\abs{S\setminus S'}\ge 2$. Indeed, if $S\setminus S'=\{i\}$, then exchanging $i$ with the unique student  $s'\in S'\setminus S$ yields $(S\setminus\{i\})\cup\{s'\}=S'\sim S$, which contradicts \eqref{eq:maxstar-failure}.

\begin{claim}\label{claim:1}
For every $j\in (S\setminus S')\setminus\{i\}$ and every $k\in S'\setminus S$,
\begin{equation}\label{eq:key-claim-theorem1}
S'\succ (S'\setminus\{k\})\cup\{j\}.
\end{equation}\end{claim}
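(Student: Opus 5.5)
The plan is a proof by contradiction that uses the minimality of $\abs{S\setminus S'}$ in the choice of the tuple $(T,S,S',i)$. Fix $j\in (S\setminus S')\setminus\{i\}$ and $k\in S'\setminus S$, and suppose that \eqref{eq:key-claim-theorem1} fails, i.e., $S'\not\succ S''$ where $S''\coloneqq (S'\setminus\{k\})\cup\{j\}$. Since $j,k\in T$ and $\abs{S''}=\abs{S'}=q$, we have $S''\in\nw(T)$. By \autoref{prop:comp}$(ii)$, every frontier set strictly dominates every non-frontier set in $\nw(T)$. Because $S'\in\front(T)$ and $S'\not\succ S''$, it follows that $S''\in\front(T)$.

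Next I check that $(T,S,S'',i)$ is an admissible tuple with a strictly smaller difference. Since $i\notin S'$ and $i\neq j$, we have $i\notin S''$, so $i\in S\setminus S''$ and $S''\neq S$. Passing from $S'$ to $S''$ removes $k\notin S$ and adds $j\in S$, so $S\setminus S''=(S\setminus S')\setminus\{j\}$. Hence $\abs{S\setminus S''}=\abs{S\setminus S'}-1$. Moreover $S''\setminus S=(S'\setminus S)\setminus\{k\}$, which is nonempty because $\abs{S}=\abs{S''}$ and $S\setminus S''\ni i$.

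By minimality, the tuple $(T,S,S'',i)$ does not violate (B2). So there exists $s''\in S''\setminus S$ with $(S\setminus\{i\})\cup\{s''\}\in\front(T)$. But $S''\setminus S\subseteq S'\setminus S$, so $s''\in S'\setminus S$. Then \eqref{eq:maxstar-failure} gives $S\succ (S\setminus\{i\})\cup\{s''\}$. This contradicts \autoref{prop:comp}$(i)$, which makes all frontier sets indifferent (equivalently, the set would not lie in $\front(T)$). Hence \eqref{eq:key-claim-theorem1} holds.

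The only delicate point is the bookkeeping in the second step. One must confirm that $i$ still lies in $S\setminus S''$, that the difference strictly shrinks, and that the witnesses $S''\setminus S$ are among the original witnesses $S'\setminus S$. Once these are checked, the contradiction with \eqref{eq:maxstar-failure} is immediate.
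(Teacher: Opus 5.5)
Your proof is correct and is essentially the paper's argument. You use \autoref{prop:comp} to place $(S'\setminus\{k\})\cup\{j\}$ in $\front(T)$, then observe that this replacement keeps $i$ outside, shrinks $\abs{S\setminus S'}$ by one, and draws the exchange candidates from $(S'\setminus S)\setminus\{k\}$, so \eqref{eq:maxstar-failure} still blocks every exchange and minimality is contradicted; the only difference is that the paper states directly that the new tuple violates (B2), whereas you argue through the contrapositive of minimality.
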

\begin{proof}
Toward a contradiction, suppose there exists some such $j$ and $k$ such that $S'\not\succ (S'\setminus\{k\})\cup\{j\}$. Let $\widetilde S':=(S'\setminus\{k\})\cup\{j\}$. Since $S'\in\front(T)$ and $\widetilde S'\in\nw(T)$, by \autoref{prop:comp}, the fact that $S'\not\succ\widetilde S'$ implies that $\widetilde S'\in\front(T)$.  

Moreover, $i\in S\setminus\widetilde S'$, and for every $\tilde s\in \widetilde S'\setminus S=(S'\setminus S)\setminus\{k\}$, \eqref{eq:maxstar-failure} gives
\[
S\succ (S\setminus\{i\})\cup\{\tilde s\},
\]
which implies that $(S\setminus\{i\})\cup\{\tilde s\}\notin \front(T)$. Thus, the tuple $(T,S,\widetilde S',i)$ also violates the base-exchange condition. However, $\abs{S\setminus\widetilde S'}=\abs{S\setminus S'}-1$, contradicting the fact that the tuple $(T,S,S',i)$ was chosen to minimize $\abs{S\setminus S'}$. Therefore \eqref{eq:key-claim-theorem1} holds.
\end{proof}

We now construct a priority ranking $\pi$ as follows: Rank students in $S\cap S'$ first, then students in $(S\setminus S')\setminus\{i\}$, then students in $S'\setminus S$, then $i$, and finally rank all students outside $S\cup S'$ last. Within each block, the students are ranked in arbitrary order. Thus, for every $\tilde s\in S\cap S'$, $s\in (S\setminus S')\setminus\{i\}$, $s'\in S'\setminus S$, and $t\in T\setminus (S\cup S')$, we have 
\[
\tilde s\mathrel{\pi}s\mathrel{\pi}s'\mathrel{\pi}i \mathrel{\pi} t.
\]

We first verify that $\ch^\priority(T)=S$. The greedy algorithm admits every student in $S\cap S'$, since this set is contained in the frontier set $S$. It then admits every student in $(S\setminus S')\setminus\{i\}$, since the tentative set remains contained in $S$. At that point, the tentative set is $S\setminus\{i\}$. For any $s'\in S'\setminus S$, adding $s'$ to the tentative set would produce $(S\setminus\{i\})\cup\{s'\}$, which is not in $\front(T)$ by \eqref{eq:maxstar-failure}. Hence, every student in $S'\setminus S$ is rejected. Finally, $i$ is considered and admitted because $(S\setminus\{i\})\cup\{i\}=S\in\front(T)$. Once $i$ is admitted, the tentative set is $S$, which has cardinality $q$; hence no later student can be admitted. Thus $\ch^\priority(T)=S$.

Define a new choice rule $\ch$ by
\[
\ch(T')=\begin{cases}
S' & \text{if } T'=T,\\
\ch^\priority(T') & \text{otherwise}.
\end{cases}
\]
Because $\ch(T)=S'$ whereas $\ch^\priority(T)=S$, $\ch\neq \ch^\priority$. 

Notice that for every input $T'\subseteq \mathcal{S}$, $\ch(T')\in\front(T')$. Indeed, for input $T'\neq T$, $\ch(T')=\ch^\priority(T')$, and $\ch^\priority(T')\in\front(T')$ by construction. For input $T'=T$, $\ch(T)=S'\in\front(T)$. Therefore, by \autoref{lem:distfrontier}, $\ch$ is non-wasteful and distributionally maximal. 

It remains to verify that $\ch$ is free of justified envy. For every input $T'\neq T$, this follows because $\ch$ coincides with $\ch^\priority$, which is free of justified envy. Hence, it suffices to check the input $T'=T$.

Let $k\in \ch(T)=S'$ and $j\in T\setminus S'$ be such that $(S'\setminus\{k\})\cup\{j\}\succsim S'$ so that the antecedent of the free-of-justified-envy axiom holds. We show that $k\mathrel{\pi}j$. There are four cases.

\medskip
\noindent
\textit{Case 1: $j=i$.} Since $i$ is ranked below every student in $S'$, we have $k\mathrel{\pi}j$.

\medskip
\noindent
\textit{Case 2: $j\in (S\setminus S')\setminus\{i\}$ and $k\in S\cap S'$.} By construction of $\pi$, every student in $S\cap S'$ is ranked above every student in $(S\setminus S')\setminus\{i\}$. Hence $k\mathrel{\pi}j$.

\medskip
\noindent
\textit{Case 3: $j\in (S\setminus S')\setminus\{i\}$ and $k\in S'\setminus S$.} By \eqref{eq:key-claim-theorem1}, $S'\succ (S'\setminus\{k\})\cup\{j\}$, which contradicts the maintained hypothesis $(S'\setminus\{k\})\cup\{j\}\succsim S'$. Thus this case cannot occur.

\medskip
\noindent
\textit{Case 4: $j\in T\setminus(S\cup S')$}. By construction of $\pi$, every student in $S'$ is ranked above every student in $T\setminus (S\cup S')$. Hence $k\mathrel{\pi}j$.

In every case in which the antecedent of the free-of-justified-envy axiom is satisfied, the selected student $k$ has higher priority than the rejected student $j$. Therefore $\ch$ is free of justified envy.

We have therefore constructed a choice rule $\ch\neq \ch^\priority$ that is non-wasteful, distributionally maximal, and free of justified envy. This contradicts the initial assertion that $\ch^\priority$ is the unique such choice rule for every priority ranking. To avoid such a contradiction, the maximizer property must hold.
\end{proof}

\begin{proof}[Proof of \autoref{thm:meritgeneral}]
Let $\ch$, $\priority$, and $S$ be as in the statement of \autoref{thm:meritgeneral}. Suppose that condition $(i)$ does not hold, i.e., $\abs{\ch^{\priority}(S)}\leq \abs{\ch(S)}$. We show that either condition $(ii)$ or condition $(iii)$ holds.

Because $\ch(S)\subseteq S$, we have $\abs{\ch(S)}\leq \abs{S}$. Additionally, all choice rules must respect the capacity, so $\abs{\ch(S)}\leq q$. Hence,  $\abs{\ch(S)}\leq \min\{q,\abs{S}\}$. Since $\ch^{\priority}$ is non-wasteful, we have $\min\{q,\abs{S}\}=\abs{\ch^\priority(S)}\leq \abs{\ch(S)}$, implying that $\abs{\ch(S)}= \min\{q,\abs{S}\}$.

If $\abs{S}\le q$, then $\abs{\ch^\priority(S)}=\abs{\ch(S)}=\abs{S}$, and hence $\ch^{\priority}(S)=\ch(S)=S$, contradicting the premise that $\ch(S)\neq\ch^{\priority}(S)$. Therefore, $\abs{S}>q$. We thus have $\abs{\ch^{\priority}(S)}=\abs{\ch(S)}=q$, implying that $\ch(S)\in\nw(S)$.

If $\ch(S)\notin\front(S)$, then \autoref{prop:comp} implies $\ch^{\priority}(S)\succ \ch(S)$, because $\ch^{\priority}(S)\in\front(S)$ and $\ch(S)\in\nw(S)\setminus\front(S)$. Hence condition $(ii)$ holds.

Now suppose $\ch(S)\in\front(S)$. By \autoref{lem:matroidbase}, $\front(S)$ forms the bases of a matroid. In this matroid, a set is independent if and only if it is contained in some frontier set, which is exactly the algorithm used by $\ch^\priority$ at each step. Hence $\ch^{\priority}(S)$ is the greedy basis of this matroid under $\priority$. Since $\ch(S)\in\front(S)$ is also a basis, and hence an independent set, \autoref{lem:Gale} implies that $\ch^{\priority}(S)$ priority dominates $\ch(S)$. Hence condition $(iii)$ holds.

Thus, whenever condition $(i)$ fails, either condition $(ii)$ or condition $(iii)$ holds. Therefore at least one of the three conditions in \autoref{thm:meritgeneral} must hold.
\end{proof}

%%%%%%%%%%%%%%%%%%%%%%%%%%%%%%%%%%%%%%%%%%%%%%%%%%%%%%%%%%%%%%%%%%%%%%%%%%%%%
\begin{proof}[Proof of \autoref{thm:genpi}]
We first show that $(i)$ implies $(ii)$. Suppose the distributional preference satisfies the maximizer and coherence properties. We show that, for every priority ranking $\priority$, the distributional choice rule $\ch^{\priority}$ satisfies path independence. We use the following characterization of path independence due to \citet{aizmal81}: a choice rule $\ch$ is path independent if and only if, for every $S\subseteq\mathcal{S}$ and every $s\in S$, 
\begin{enumerate}[label={$(\alph*)$}, parsep=0pt, itemsep=2pt]
    \item \textbf{Irrelevance of rejected students (IRS):} If $s\notin \ch(S)$, then $\ch(S\setminus\{s\})=\ch(S)$.
    \item \textbf{Substitutability:} If $s\in \ch(S)$, then for any $\hat s\in S\setminus\{s\}$, $s\in\ch(S\setminus\{\hat s\})$.
\end{enumerate} 
We shall show that the distributional choice rule $\ch^\priority$ satisfies IRS and substitutability, and hence, is path independent.

\medskip
\noindent\textit{IRS.} Consider an applicant set $S\subseteq\mathcal{S}$, and suppose $s\in S\setminus \ch^{\priority}(S)$. Since $\ch^\priority$ is non-wasteful, this
implies $\abs{S}>q$. Hence both $S$ and $S\setminus\{s\}$ have non-wasteful sets
of cardinality $q$.

By \autoref{lem:distfrontier}, $\ch^{\priority}(S)\in\front(S)$. Since $s\notin\ch^\priority(S)$, the set $\ch^\priority(S)$ belongs to $\nw(S\setminus\{s\})$. Moreover, because $\ch^{\priority}(S)$ is maximal in $\nw(S)$, it remains maximal in the smaller set $\nw(S\setminus\{s\})$, implying $\ch^{\priority}(S)\in\front(S\setminus\{s\})$. 

Next, consider the applicant set $S\setminus\{s\}$. By \autoref{lem:distfrontier}, $\ch^{\priority}(S\setminus\{s\})\in \front(S\setminus\{s\})$. Since both $\ch^{\priority}(S\setminus\{s\})$ and $\ch^{\priority}(S)$ belong to $\front(S\setminus\{s\})$, \autoref{prop:comp} implies $\ch^{\priority}(S\setminus\{s\})\sim \ch^{\priority}(S)$. Consequently, $\ch^{\priority}(S\setminus\{s\})\in\front(S)$; otherwise, $\ch^{\priority}(S)$ would strictly dominate $\ch^{\priority}(S\setminus\{s\})$ by \autoref{prop:comp}, contradicting the indifference.

By \autoref{lem:matroidbase}, $\front(S\setminus\{s\})$ forms the bases of a matroid on $S\setminus\{s\}$. Under this matroid, $\ch^\priority(S\setminus\{s\})$ is the greedy basis, while $\ch^\priority(S)$ is also a basis. Applying \autoref{lem:Gale}, $\ch^\priority(S\setminus\{s\})$ priority dominates $\ch^\priority(S)$.

Similarly, $\front(S)$ forms the bases of a matroid on $S$. Under this matroid, $\ch^\priority(S)$ is the greedy basis, while $\ch^\priority(S\setminus\{s\})$ is also a basis. Applying \autoref{lem:Gale} again, $\ch^\priority(S)$ priority dominates $\ch^\priority(S\setminus\{s\})$.

Since the two sets priority dominate one another and $\priority$ is a strict linear order, they must be equal. Hence $\ch^{\priority}$ satisfies IRS.

\medskip
\noindent\textit{Substitutability.}
Let $S\subseteq\mathcal{S}$, let $s\in \ch^{\priority}(S)$, and let $\hat s\in S\setminus\{s\}$. If $\hat s\notin \ch^{\priority}(S)$, then IRS gives $\ch^{\priority}(S\setminus\{\hat s\})=\ch^{\priority}(S)$, and the claim follows. Hence, suppose that $\hat s\in \ch^{\priority}(S)$. If $\abs{S\setminus\{\hat s\}}\leq q$, then non-wastefulness gives $\ch^{\priority}(S\setminus\{\hat s\})=S\setminus\{\hat s\}$, so the claim again follows trivially. Thus, also assume that $\abs{S\setminus\{\hat s\}}>q$.

Set $A:=\ch^{\priority}(S)$ and $B:=\ch^{\priority}(S\setminus\{\hat s\})$. Suppose, toward a contradiction, that $s\notin B$. We shall consider two cases and show each case leads to a contradiction. Since $\abs{S\setminus \{\hat s\}}>q$ and $\ch^\priority$ is non-wasteful, we have $\abs{A}=\abs{B}=q$. Additionally, $s,\hat s\in A\setminus B$ and $s\in (A\setminus B)\setminus\{\hat s\}$, so $(A\setminus B)\setminus\{\hat s\}\neq \varnothing$.

\medskip
\noindent\textit{Case 1: $B\notin\front(S)$.}
Since $A\in\front(S)$ and $B\notin\front(S)$, \autoref{prop:comp} gives $A\succ B$. Recall that $\abs{S\setminus \{\hat s\}}>q$ and $(A\setminus B)\setminus\{\hat s\}\neq \varnothing$. Thus, the domain restrictions for the coherence property are satisfied. We verify the three antecedents of the coherence property for the pair $A,B$ and the student $\hat s$.

Take any $T\in\nw(A\cup B)$ with $\hat s\notin T$. Since $\abs{A}=\abs{B}=q$ and $(A\cup B)\setminus\{\hat s\}\subseteq S\setminus\{\hat s\}$, we have $T\in\nw(S\setminus\{\hat s\})$. Since $B\in\front(S\setminus\{\hat s\})$, \autoref{prop:comp} gives $B\succsim T$. Combined with $A\succ B$ and transitivity, we conclude that $A\succ T$ for all $T\in\nw(A\cup B)$ with $\hat s\notin T$. Hence, the first antecedent holds.

Since $A$ is maximal in $\nw(S)$ and $A\cup B\subseteq S$, $A$ remains maximal in the smaller set $\nw(A\cup B)$, i.e., $A\in\front(A\cup B)$. By \autoref{prop:comp}, $A\succsim T$ for all $T\in\nw(A\cup B)$, so the second antecedent holds.

Similarly, since $B$ is maximal in $\nw(S\setminus\{\hat s\})$ and $(A\cup B)\setminus\{\hat s\}\subseteq S\setminus\{\hat s\}$, $B$ remains maximal in the smaller set $\nw((A\cup B)\setminus\{\hat s\})$, i.e., $B\in\front((A\cup B)\setminus\{\hat s\})$. By \autoref{prop:comp}, $B\succsim T$ for all $T\in\nw((A\cup B)\setminus\{\hat s\})$, so the third antecedent holds.

We can thus apply the coherence property: there exists $t\in (A\setminus B)\setminus\{\hat s\}$ and $t'\in B\setminus A$ such that
\[
(A\setminus\{t\})\cup\{t'\}\sim A \quad\text{and}\quad (B\setminus\{t'\})\cup\{t\}\sim B.
\]
Since $A\in\front(S)$ and $(A\setminus\{t\})\cup\{t'\}\in \nw(S)$, the first indifference and \autoref{prop:comp} imply $(A\setminus\{t\})\cup\{t'\}\in\front(S)$. Additionally, notice that $(B\setminus\{t'\})\cup\{t\}\in \nw(S\setminus\{\hat s\})$, because $B\setminus\{t'\}\subseteq B\subseteq S\setminus\{\hat s\}$ and $t\neq \hat s$. Since $B\in\front(S\setminus\{\hat s\})$, the second indifference and \autoref{prop:comp} imply $(B\setminus\{t'\})\cup\{t\}\in\front(S\setminus\{\hat s\})$.

By \autoref{lem:matroidbase}, $\front(S\setminus\{\hat s\})$ forms the bases of a matroid on $S\setminus\{\hat s\}$. Under this matroid, $B$ is the greedy basis, while $(B\setminus\{t'\})\cup\{t\}$ is also a basis. Applying \autoref{lem:Gale}, $B$ priority dominates $(B\setminus\{t'\})\cup\{t\}$. Because these two sets differ only by replacing $t'$ with $t$, this priority dominance implies $t'\mathrel{\priority} t$.

Similarly, by \autoref{lem:matroidbase}, $\front(S)$ forms the bases of a matroid on $S$. Under this matroid, $A$ is the greedy basis, while $(A\setminus\{t\})\cup\{t'\}$ is also a basis. Applying \autoref{lem:Gale}, $A$ priority dominates $(A\setminus\{t\})\cup\{t'\}$. Because these two sets differ only by replacing $t$ with $t'$, this priority dominance implies $t\mathrel{\priority} t'$. However, this contradicts the asymmetry of the strict linear order $\priority$. 

\medskip
\noindent\textit{Case 2: $B\in\front(S)$.} By \autoref{lem:matroidbase}, $\front(S)$ forms the bases of a matroid on $S$. Since $A$ and $B$ are both in $\front(S)$ and since $s\in A\setminus B$, the strong basis exchange property gives some $s'\in B\setminus A$ such that
\[
(A\setminus\{s\})\cup\{s'\}\in\front(S) \quad\text{and}\quad (B\setminus\{s'\})\cup\{s\}\in\front(S).
\]

Since $B\setminus\{s'\}\subseteq B\subseteq S\setminus\{\hat s\}$ and $s\in(A\setminus B)\setminus\{\hat s\}$, we have that $(B\setminus\{s'\})\cup\{s\}\subseteq S\setminus \{\hat s\}$. Additionally, since $\abs{B}=\abs{(B\setminus\{s'\})\cup\{s\}}=q$, we conclude that $(B\setminus\{s'\})\cup\{s\}\in \nw(S\setminus\{\hat s \})$. In fact, since $(B\setminus\{s'\})\cup\{s\}\in\front(S)$, it remains maximal in $\nw(S\setminus\{\hat s\})$. Thus, $(B\setminus\{s'\})\cup\{s\}\in\front(S\setminus\{\hat s\})$.

By \autoref{lem:matroidbase}, $\front(S\setminus\{\hat s\})$ forms the bases of a matroid on $S\setminus\{\hat s\}$. Under this matroid, $B$ is the greedy basis, while $(B\setminus\{s'\})\cup\{s\}$ is also a basis. Applying \autoref{lem:Gale}, $B$ priority dominates $(B\setminus\{s'\})\cup\{s\}$. Because these two sets differ only by replacing $s'$ with $s$, this priority dominance implies $s'\mathrel{\priority} s$. 

On the other hand, under the matroid whose bases are $\front(S)$, $A$ is the greedy basis, while $(A\setminus\{s\})\cup\{s'\}$ is also a basis. Applying \autoref{lem:Gale}, $A$ priority dominates $(A\setminus\{s\})\cup\{s'\}$. Because these two sets differ only by replacing $s$ with $s'$, this priority dominance implies $s\mathrel{\priority} s'$. However, this contradicts the asymmetry of the strict linear order $\priority$. Again we obtain a contradiction.

The two cases exhaust all possibilities, so the supposition that $s\notin\ch^\priority(S\setminus\{\hat{s}\})$ is false. As a result, $\ch^\priority$ satisfies substitutability. Since $\ch^\priority$ also satisfies IRS, the characterization of \citet{aizmal81} implies that $\ch^\priority$ satisfies path independence. 
\bigskip

We now prove that $(ii)$ implies $(i)$. Suppose that, for every priority ranking $\priority$, the distributional choice rule $\ch^{\priority}$ satisfies path independence. We first show that the distributional preference satisfies the maximizer property. Then, taking the maximizer property as a given, we show that the distributional preference also satisfies the coherence property.

\noindent\textit{Path independence$\implies$maximizer.} Toward a contradiction, suppose that the maximizer property does not hold. Since the upper-bound property is satisfied, \autoref{lem:matroidbase} implies that there exists a set $T\subseteq\mathcal S$ such that $\front(T)$ does not form the bases of a matroid with ground set $T$.

Since $\front(T)$ is non-empty, it must violate the matroid base-exchange axiom (B2), i.e., there exist distinct sets $S,S'\in\front(T)$ and some $s\in S\setminus S'$ such that for every $s'\in S'\setminus S$, we have
\[
(S\setminus\{s\})\cup\{s'\}\notin\front(T).
\]

Choose a priority ranking $\priority$ that ranks all students in $S\setminus\{s\}$ first, then all students in $S'\setminus S$, then $s$, and finally all students in $T\setminus(S\cup S')$, with an arbitrary ranking across students within each block. Under this priority ranking, the distributional choice rule applied to $T$ selects $\ch^\priority(T)=S$. Indeed, every student in $S\setminus\{s\}$ is admitted because the set selected so far remains contained in the frontier set $S$. Each student $s'\in S'\setminus S$ is rejected because $(S\setminus\{s\})\cup\{s'\}$ has cardinality $q$ and is not in $\front(T)$. Finally, $s$ is admitted because $S\in\front(T)$. 

Since $\ch^\priority$ is path independent, it satisfies IRS and substitutability. Because no student in $T\setminus(S\cup S')$ belongs to $\ch^\priority(T)=S$, repeated application of IRS gives $\ch^\priority(T)=\ch^\priority(S\cup S')=S$. Now remove $s$ from $S\cup S'$. By substitutability, every student in $S\setminus\{s\}$ must remain chosen, so $S\setminus\{s\}\subseteq \ch^\priority((S\cup S')\setminus\{s\})$.

Because $S'\subseteq (S\cup S')\setminus\{s\}$, the reduced applicant pool has at least $q$ students, so $\abs{\ch^\priority((S\cup S')\setminus\{s\})}=q$. Hence, there exists some $s^*\in S'\setminus S$ such that
\[
\ch^\priority((S\cup S')\setminus\{s\})=(S\setminus\{s\})\cup\{s^*\}.
\]
By construction, the distributional choice rule always picks a set in the frontier. Thus, we must have $(S\setminus\{s\})\cup\{s^*\}\in\front((S\cup S')\setminus\{s\})$.

Additionally, notice that $S'\in \front((S\cup S')\setminus\{s\})$; because $S'$ is maximal in $T$, it remains so in the reduced pool $(S\cup S')\setminus\{s\}$. By \autoref{prop:comp}, $S'\sim (S\setminus\{s\})\cup\{s^*\}$. Using the fact that $S'\in \front(T)$ and $(S\setminus\{s\})\cup\{s^*\}\in \nw(T)$, we once again apply \autoref{prop:comp} to get that $(S\setminus\{s\})\cup\{s^*\}\in\front(T)$. However, this contradicts the assumption that, for every $s'\in S'\setminus S$,
\[
(S\setminus\{s\})\cup\{s'\}\notin\front(T).
\]
Hence, to avoid the contradiction, the maximizer property must hold.

\medskip
\noindent\textit{Path independence + maximizer$\implies$coherence.} Take $S,S'\subseteq\mathcal{S}$ with $\abs{S}=\abs{S'}=q$, and take $\hat s\in S\setminus S'$ satisfying $(S\setminus S')\setminus\{\hat s\}\neq\varnothing$. Suppose $S, S'$ and $\hat s$ satisfy the three antecedents in \autoref{def:exchange-coherence}:
\begin{align}
    &S\succ T \text{ for every }T\in\nw(S\cup S') \text{ with } \hat s\notin T,\label{eq:ec-i} \\
    &S\succsim T \text{ for every }T\in\nw(S\cup S') \text{ with } \hat s\in T,\label{eq:ec-ii}\\
    &S'\succsim T \text{ for every }T\in\nw\bigl((S\cup S')\setminus\{\hat s\}\bigr).\label{eq:ec-iii}
\end{align}

Let $U:=S\cup S'$ and $W:=U\setminus\{\hat s\}$. Clearly, $q\leq\abs{W}\leq \abs{U}$. By \eqref{eq:ec-i} and \eqref{eq:ec-ii}, $S\in\front(U)$: for every $T\in\nw(U)$, either $\hat s\notin T$ and $S\succ T$ by \eqref{eq:ec-i}, or $\hat s\in T$ and $S\succsim T$ by \eqref{eq:ec-ii}. Thus, no element of $\nw(U)$ strictly dominates $S$. Moreover, every set in $\front(U)$ contains $\hat s$: if $T\in\nw(U)$ and $\hat s\notin T$, then $S\succ T$ by \eqref{eq:ec-i}, so $T\notin\front(U)$. Finally, \eqref{eq:ec-iii} implies $S'\in\front(W)$, because $S'\in\nw(W)$ and $S'$ weakly dominates every element of $\nw(W)$.

By \autoref{lem:matroidbase}, $\front(U)$ forms the bases of a matroid on $U$. Denote this matroid by $M$, and note that it has rank $q$. Since every basis of $M$ contains $\hat{s}$, contracting $M$ by $\hat{s}$ yields a matroid on $W$ whose bases are
\[
    \mathcal{B}^c\coloneqq \left\{T\setminus\{\hat s\}:T\in\front(U)\right\}.
\]
Denote this matroid by $M^c$, and note that $M^c$ has rank $q-1$. 

Separately, by \autoref{lem:matroidbase}, $\front(W)$ forms the bases of a matroid on $W$. Denote this matroid by $M^d$, and note that $M^d$ has rank $q$.

Fix an arbitrary priority ranking $\rho$ on $W$. Extend $\rho$ to a priority ranking $\priority$ on $U$ by ranking $\hat s$ above every element of $W$ and ordering the elements of $W$ according to $\rho$. Let $G^\rho_{M^c}$ and $G^\rho_{M^d}$ denote the greedy basis under $\rho$ of matroids $M^c$ and $M^d$, respectively.

Because every set in $\front(U)$ contains $\hat s$, and because $\hat s$ has the highest priority under $\priority$, $\ch^\priority$ applied to $U$ first admits $\hat s$. Then, the choice rule proceeds along $W$ in $\rho$-order, admitting at each step the highest-$\rho$ element compatible with extending to a set in $\front(U)$. Thus, once $\hat s$ has been admitted, the choice rule is precisely the greedy algorithm on $M^c$ under $\rho$, so
\begin{equation}\label{eq:thm3-greedy-minus}
    \ch^{\priority}(U)=\{\hat s\}\cup G^\rho_{M^c}.
\end{equation}

Notice that $\ch^\priority=\ch^\rho$ on $W$ since $\rho$ and $\priority$ coincide on $W$. When applying $\ch^\priority$ to $W$, the choice rule proceeds along $W$ in $\rho$-order, admitting at each step the highest-$\rho$ element compatible with extending to a set in $\front(W)$. Hence,
\begin{equation}\label{eq:thm3-greedy-plus}
    \ch^{\priority}(W)=G^\rho_{M^d}.
\end{equation}

By assumption, $\ch^{\priority}$ satisfies path independence. Thus, 
\[
\ch^\priority(U)=\ch^\priority(W\cup \{\hat s\})=\ch^\priority(\ch^\priority(W)\cup \{\hat s\})
\]
By \eqref{eq:thm3-greedy-minus} and \eqref{eq:thm3-greedy-plus}, we get
\[
\{\hat s\}\cup G^\rho_{M^c}=\ch^\priority(G^\rho_{M^d}\cup \{\hat s\})\subseteq \{\hat s\}\cup G^\rho_{M^d},
\]
where the set inclusion follows by definition of a choice rule. Therefore, $G^\rho_{M^c}\subseteq G^\rho_{M^d}$. Since $\rho$ was arbitrary, we have that $G^\rho_{M^c}\subseteq G^\rho_{M^d}$ for every priority ranking $\rho$ on $W$. By \autoref{lem:nested-greedy-quotient}, we can conclude that $M^c$ is a quotient of $M^d$. Moreover, recall that $M^c$ has rank $q-1$ and $M^d$ has rank $q$. Thus, $M^c$ is a rank-one quotient of $M^d$. 

Now observe that $S\setminus\{\hat s\}\in\mathcal{B}^c$, $S'\in \front(W)$, and $(S\setminus S')\setminus\{\hat s\}\neq
\varnothing$ by assumption. Applying \autoref{lem:nested-greedy-exchange}, there exist
$s\in (S\setminus \{\hat s\})\setminus S'$ and $s'\in S'\setminus S$ such that
\[
    ((S\setminus \{\hat s\})\setminus\{s\})\cup\{s'\}\in\mathcal{B}^c\quad\text{and}\quad (S'\setminus \{s'\})\cup\{s\}\in\front(W).
\]

By the definition of $\mathcal{B}^c$, adding $\hat s$ to the first set yields $(S\setminus\{s\})\cup\{s'\}\in\front(U)$. Since $S\in\front(U)$, \autoref{prop:comp} implies
$(S\setminus\{s\})\cup\{s'\}\sim S$.
Since $S'\in\front(W)$ and $(S'\setminus\{s'\})\cup\{s\}\in\front(W)$, \autoref{prop:comp} also implies
$(S'\setminus\{s'\})\cup\{s\}\sim S'$.
Thus the conclusion of the coherence property holds. Therefore, the distributional preference satisfies the coherence property.
\end{proof}

%%%%%%%%%%%%%%%%%%%%%%%%%%%%%%%%%%%%%%%%%%%%%%%%%%%%%%%%%%%%%%%%%%%%%%%%%%%%%%

\begin{proof}[Proof of \autoref{thm:revealed_priority}]\

\noindent\textbf{($\Leftarrow$).}
Consider $\ch^{\priority}$ for some priority ranking $\priority$. By construction, $\ch^{\priority}(S)\in\front(S)$ for every $S\subseteq\mathcal{S}$. Therefore, by \autoref{lem:distfrontier}, $\ch^{\priority}$ is non-wasteful and distributionally maximal.

We show that $\ch^{\priority}$ satisfies the revealed priority axiom. Let $S\subseteq\mathcal{S}$, $s\in\ch^{\priority}(S)$, and $s'\in S\setminus\ch^{\priority}(S)$, and suppose that
\[
\bigl(\ch^{\priority}(S)\setminus\{s\}\bigr)\cup\{s'\}\succsim\ch^{\priority}(S).
\]
Since $\ch^{\priority}(S)\in\front(S)$ and the swapped set belongs to $\nw(S)$, \autoref{prop:comp} implies that $\bigl(\ch^{\priority}(S)\setminus\{s\}\bigr)\cup\{s'\}\in\front(S)$. By \autoref{lem:matroidbase}, $\front(S)$ forms the bases of a matroid. Since $\ch^{\priority}(S)$ is the greedy basis of this matroid under $\priority$, \autoref{lem:Gale} implies that $\ch^{\priority}(S)$ priority dominates $\bigl(\ch^{\priority}(S)\setminus\{s\}\bigr)\cup\{s'\}$. These two sets differ only by replacing $s$ with $s'$, so priority domination implies $s\mathrel{\priority}s'$.

Now suppose, toward a contradiction, that there exist sequences $\{S_k\}_{k=1}^K$ and $\{s_k\}_{k=1}^K$ satisfying the three conditions of the revealed priority axiom. The preceding argument implies $s_k\mathrel{\priority}s_{k+1}$ for every $k\in\{1,\ldots, K\}$. Hence, 
\[
s_1\mathrel{\priority}s_2\mathrel{\priority}\cdots\mathrel{\priority}s_K\mathrel{\priority}s_1.
\]
By transitivity, this implies $s_1\mathrel{\priority}s_1$, contradicting the asymmetry of $\priority$. Thus, $\ch^{\priority}$ satisfies the revealed priority axiom.

\medskip
\noindent
\noindent\textbf{($\Rightarrow$).}
Suppose that $\ch$ is non-wasteful, distributionally maximal, and satisfies the revealed priority axiom. By \autoref{lem:distfrontier}, $\ch(S)\in\front(S)$ for every $S\subseteq\mathcal{S}$.

Define a binary relation $\rho$ on $\mathcal{S}$ as follows: $s\mathrel{\rho}s'$ if there exists $S\subseteq\mathcal{S}$ such that $s\in\ch(S)$, $s'\in S\setminus\ch(S)$, and $(\ch(S)\setminus\{s\})\cup\{s'\}\succsim\ch(S)$. The revealed priority axiom implies that $\rho$ is acyclic. Therefore, by a generalization of Szpilrajn's lemma \citep[Theorem~1.5]{chambers2016revealed}, $\rho$ can be completed to a priority ranking over $\mathcal{S}$. Let $\priority$ be such a completion, and let $\ch^{\priority}$ be the associated distributional choice rule. By construction, $\ch^\priority(S)\in\front(S)$ for every $S\subseteq\mathcal{S}$.

Toward a contradiction, suppose that there exists $S\subseteq\mathcal{S}$ such that $\ch(S)\neq\ch^\priority(S)$. Since the distributional preference satisfies the upper-bound and maximizer properties, \autoref{lem:matroidbase} implies that $\front(S)$ forms the bases of a matroid. By the strong basis exchange property, there exist $s\in\ch(S)\setminus\ch^\priority(S)$ and $s'\in\ch^\priority(S)\setminus\ch(S)$ such that 
\[
(\ch(S)\setminus\{s\})\cup\{s'\}\in\front(S)\quad \text{ and }\quad (\ch^\priority(S)\setminus\{s'\})\cup\{s\}\in\front(S).
\]

By \autoref{prop:comp}, $(\ch(S)\setminus\{s\})\cup\{s'\}\sim\ch(S)$. Therefore, by the definition of $\rho$, we have $s\mathrel{\rho} s'$. Since $\priority$ extends $\rho$, it follows that $s\mathrel{\priority}s'$.

On the other hand, since $(\ch^\priority(S)\setminus\{s'\})\cup\{s\}\in\front(S)$, \autoref{lem:Gale} implies that $\ch^\priority(S)$ priority dominates $(\ch^\priority(S)\setminus\{s'\})\cup\{s\}$ under $\priority$. These two sets differ only by replacing $s'$ with $s$, so priority domination implies $s'\mathrel{\priority}s$. However, this contradicts our previous conclusion that $s\mathrel{\priority}s'$. Hence no such $S$ exists, and therefore $\ch=\ch^{\priority}$ for some priority ranking $\priority$.
\end{proof}

\begin{proof}[Proof of \autoref{thm:da}]
We use the deferred-acceptance characterization of \citet{Dogan/Imamura/Yenmez:2025}. Their result applies to profiles of choice rules that are characterized by \textit{solitary} axioms and that also satisfy path independence and \textit{size monotonicity}. We define these terms shortly. They show that the corresponding mechanism-level extensions of those solitary axioms, together with individual rationality and strategy-proofness, characterize the DA mechanism based on the underlying choice rules.

To that end, fix a school $c\in\mathcal{C}$. A choice rule $\ch_c$ satisfies \textbf{size monotonicity} if $\abs{\ch_c(S')}\leq\abs{\ch_c(S)}$ for every $S,S'\subseteq\mathcal S$ with $S'\subseteq S$. A \textbf{solitary} axiom is a correspondence $\psi_c:2^{\mathcal S}\rightrightarrows 2^{\mathcal S}$ such that for each applicant set $S$, the set $\psi_c(S)\subseteq 2^S$ is the set of all subsets of $S$ that satisfy a particular choice axiom at school $c$. For our purposes, the relevant solitary axioms are non-wastefulness, distributional maximality, and freedom from justified envy. A choice rule $\ch_c$ satisfies a solitary axiom $\psi_c$ if $\ch_c(S)\in\psi_c(S)$ for every $S\subseteq\mathcal S$.

Given a matching $\mu$, define the \textbf{demand} for $c$ at $\mu$ by
\[
D_c(\mu)\coloneqq \{s\in\mathcal S:c\mathrel{R_s}\mu(s)\}.
\]
Since preferences are reflexive in their weak form, $\mu^{-1}(c)\subseteq D_c(\mu)$. We say that $\mu$ satisfies the \textbf{extension} of a solitary axiom $\psi_c$ if $\mu^{-1}(c)\in\psi_c(D_c(\mu))$. 

We now show that the three choice rule axioms in \autoref{thm:exogenouspriority} are solitary and that their extensions coincide with the corresponding matching axioms.

\smallskip
\noindent
\textit{Non-wasteful:} For every $S\subseteq\mathcal S$, define
\[
\psi^{nw}_c(S)\coloneqq
\{S'\subseteq S:\abs{S'}=\min\{q_c,\abs{S}\}\}.
\]
Then a choice rule $\ch_c$ is non-wasteful if and only if $\ch_c(S)\in\psi^{nw}_c(S)$ for every $S\subseteq\mathcal{S}$ (\autoref{ax:nonwasteful}), so this is a solitary axiom.

Recall that a matching $\mu$ must respect a school's capacity, i.e., $\abs{\mu^{-1}(c)}\leq q_c$. A matching $\mu$ satisfies the extension if and only if 
$\abs{\mu^{-1}(c)}=\min\{q_c,\abs{D_c(\mu)}\}$.
This condition trivially holds if $\abs{\mu^{-1}(c)}=q_c$. Because $\mu^{-1}(c)\subseteq D_c(\mu)$, if $\abs{\mu^{-1}(c)}<q_c$, the extension is satisfied if and only if $D_c(\mu)=\mu^{-1}(c)$. Since student preferences are strict, a student $s\notin\mu^{-1}(c)$ belongs to $D_c(\mu)$ if and only if $c\mathrel{P_s}\mu(s)$. Therefore, the extension is equivalent to the matching non-wastefulness condition (\autoref{ax:matching_nonwasteful}): if some student strictly prefers $c$ to their assignment, then \(c\) must have no empty seats.

\medskip
\noindent
\textit{Distributionally maximal:} For every $S\subseteq\mathcal S$, define
\[
\psi^{pd}_c(S)\coloneqq\{S''\subseteq S:\nexists S'\subseteq S \text{ such that }\abs{S'}=\abs{S''}\text{ and }S'\succ_c S''\}.
\]
Then $\ch_c$ is distributionally maximal if and only if $\ch_c(S)\in\psi^{pd}_c(S)$ for every $S\subseteq \mathcal{S}$ (\autoref{ax:diversity}), so this is a solitary axiom.

A matching $\mu$ satisfies the extension if and only if $\mu^{-1}(c)\in\psi^{pd}_c(D_c(\mu))$. Equivalently, there exists no subset $S'\subseteq D_c(\mu)$ with $\abs{S'}=\abs{\mu^{-1}(c)}$ such that $S'\succ_c\mu^{-1}(c)$. This is exactly the matching axiom requiring $\mu$ to be distributionally maximal (\autoref{ax:matching_diversity}).

\smallskip
\noindent
\textit{Free from justified envy:} For every $S\subseteq\mathcal S$, define
\[
\psi^{ne}_c(S)\coloneqq\left\{
S'\subseteq S:
\begin{array}{l}
\text{for every }s'\in S'\text{ and }s''\in S\setminus S',\\
(S'\setminus\{s'\})\cup\{s''\}\succsim_c S'
\text{ implies }s'\mathrel{\priority_c}s''
\end{array}
\right\}.
\]
Then $\ch_c$ is free of $\priority_c$-justified envy if and only if $\ch_c(S)\in\psi^{ne}_c(S)$ for every $S\subseteq\mathcal{S}$ (\autoref{ax:envy}), so this is a solitary axiom.

A matching $\mu$ satisfies the extension if and only if $\mu^{-1}(c)\in\psi^{ne}_c(D_c(\mu))$. Equivalently, for every $s\in\mu^{-1}(c)$ and  $s'\in D_c(\mu)\setminus\mu^{-1}(c)$,
\[
(\mu^{-1}(c)\setminus\{s\})\cup\{s'\}\succsim_c \mu^{-1}(c)\implies s\mathrel{\priority_c}s'.
\]
Because preferences are strict, \(s'\in D_c(\mu)\setminus\mu^{-1}(c)\) is
equivalent to \(c\mathrel{P_{s'}}\mu(s')\). Hence this extension is exactly
the matching condition requiring freedom from \(\priority_c\)-justified envy (\autoref{ax:matching_envy}).

We have thus far shown that for each school $c\in\mathcal{C}$, non-wastefulness, distributional maximality, and freedom from justified envy are all solitary choice axioms. Since $\succsim_c$ satisfies upper-bound and maximizer properties, by \autoref{thm:exogenouspriority}, these solitary axioms uniquely characterize $\ch_c^{\priority_c}$. As $\ch_c^{\priority_c}$ is non-wasteful, it selects $\min\{q_c,\abs{S}\}$ students from each set $S$. Hence, the distributional choice rule also satisfies size monotonicity. Since $\succsim_c$ also satisfies the coherence property, by \autoref{thm:genpi}, $\ch_c^{\priority_c}$ satisfies path independence. 

Moreover, the three solitary axioms extend exactly to the corresponding non-wastefulness, distributional maximality, and freedom from justified envy matching axioms. Thus, applying \citet{Dogan/Imamura/Yenmez:2025} yields that a matching mechanism satisfies the three extended axioms, individual rationality, and strategy-proofness if and only if it is the DA mechanism based on the choice rule profile $(\ch_c^{\priority_c})_{c\in\mathcal C}$.
\end{proof}

\singlespacing
\bibliographystyle{ecta}
\bibliography{matching}

\end{document}